\documentclass[11pt]{article}
\usepackage[margin=1in]{geometry}
\usepackage{algorithmicx}
% PLEASE CREATE MORE OF THESE FOR ALL GROUP MEMBERS

\usepackage{pifont}

  % checkmark
  % crossmark (optional)

% \def\david#1{}{}

\usepackage[T1]{fontenc}
\usepackage[utf8]{inputenc}
\usepackage{authblk}
\usepackage{amsmath, amssymb, amsthm, thmtools, amsfonts, bm, bbm, thm-restate}
\usepackage{algorithm}
\usepackage{algorithmicx}
\usepackage[noend]{algpseudocode}
\usepackage[noadjust]{cite}
\usepackage{pdfpages}

\usepackage{asymptote}
\usepackage{graphicx}
\usepackage{todonotes}
\usepackage{multirow}
\usepackage{comment}
\usepackage{dsfont}
\usepackage{bigstrut}
\usepackage{caption}
\usepackage{subcaption}
\usepackage{multirow}
\usepackage{makecell}
\usepackage{booktabs}
\usepackage{xcolor}

\usepackage{commath}

\definecolor{BrickRed}{rgb}{0.8,0.25,0.33}
\usepackage[pagebackref,colorlinks,citecolor=blue,linkcolor=BrickRed]{hyperref}

\usepackage{enumitem}
\usepackage[normalem]{ulem}

%\usepackage{tikz} 
%\usetikzlibrary{calc}
%\usetikzlibrary{automata, positioning}
\usepackage[framemethod=tikz]{mdframed}
\usepackage{nicefrac}

\usepackage{tikz}
\usetikzlibrary{positioning, fit, calc}

\usepackage{pifont}% http://ctan.org/pkg/pifont

\usepackage{cleveref}

%\newcolumntype{P}[1]{>{\centering\arraybackslash}p{#1}}
%\newcolumntype{M}[1]{>{\centering\arraybackslash}m{#1}}

%\theoremstyle{plain}
\newtheorem{thm}{Theorem}[section]
%\numberwithin{thm}{subsection}

\newtheorem{prop}[thm]{Proposition}

\newtheorem{fact}[thm]{Fact}
\newtheorem{lem}[thm]{Lemma}

\newtheorem{Def}[thm]{Definition}
\newtheorem{obs}[thm]{Observation}

\newtheorem{rem}[thm]{Remark}

\crefname{thm}{Theorem}{Theorems}
\crefname{cla}{Claim}{Claims}
\crefname{lem}{Lemma}{Lemmas}
\crefname{fact}{Fact}{Facts}
\crefname{prop}{Proposition}{Propositions}

%%%%%%%%% Comment out proofs %%%%%%%%
%\excludecomment{proof}
%%%%%%%%%%%%%%%%%%%%%%%%%%%%%%%%%%%%%

\allowdisplaybreaks

\newcommand{\E}{\mathbb{E}}
\newcommand{\R}{\mathbb{R}}
\renewcommand{\Pr}{\mathds{P}r}

\newcommand{\eps}{\varepsilon}
\newcommand{\calA}{\mathcal{A}}
\newcommand{\calB}{\mathcal{B}}

\newcommand{\calF}{\mathcal{F}}

\newcommand{\calI}{\mathcal{I}}
\newcommand{\calK}{\mathcal{K}}
\newcommand{\calM}{\mathcal{M}}
\newcommand{\calP}{\mathcal{P}}

\newcommand{\poly}{\mathrm{poly}}
\newcommand{\Exp}{\mathrm{Exp}}

\newcommand{\Uni}{\mathrm{Uni}}

\newcommand{\Cov}{\mathrm{Cov}}

\newcommand{\calC}{\mathcal{C}}

%%%%%%%%%%%%%%%%

\newenvironment{wrapper}[1]
{
	\smallskip
	\begin{center}
		\begin{minipage}{\linewidth}
			\begin{mdframed}[hidealllines=true, backgroundcolor=gray!20, leftmargin=0cm,innerleftmargin=0.275cm,innerrightmargin=0.275cm,innertopmargin=0.275cm,innerbottommargin=0.275cm,roundcorner=10pt]
				#1}
			{\end{mdframed}
		\end{minipage}
	\end{center}
	\smallskip
}

\let\vec\mathbf
\renewcommand{\vec}{\mathbf}

    \author[1]{Niv Buchbinder\thanks{Supported in part by ISF grant 3001/24 and United States - Israel
BSF grant 2022418.}}
    \author[2]{Joseph (Seffi) Naor\thanks{Supported in part by ISF grant 3001/24 and United States - Israel
BSF grant 2022418.}}
    \author[2]{David Wajc\thanks{Supported in part by a Taub Family Foundation ``Leader in Science and Technology'' fellowship and ISF grant 3200/24.}}

    \affil[1]{Tel Aviv University}
    \affil[2]{Technion --- Israel Institute of Technology}

\title{Chasing Submodular Objectives,\\and Submodular Maximization via Cutting Planes}

\date{\vspace{-1cm}}

\begin{document}

\maketitle

\pagenumbering{gobble}
\begin{abstract}
We introduce the \emph{submodular objectives chasing problem}, which generalizes many natural and previously-studied problems:~a sequence of constrained submodular maximization problems is revealed over time, with both the objective and available ground set changing at each step.
The goal is to maintain solutions of high approximation and low total \emph{recourse} (number of changes), compared with exact offline algorithms for the same input sequence.
For the central cardinality constraint and partition matroid constraints we provide polynomial-time algorithms achieving both optimal $(1-1/e-\eps)$-approximation and optimal competitive recourse for \emph{any} constant-approximation.

\medskip 

Key to our algorithm's polynomial time, and of possible independent interest, is a new meta-algorithm for $(1-1/e-\eps)$-approximately maximizing the multilinear extension under general constraints, which we call {\em approximate-or-separate}. Our algorithm relies on an improvement of the round-and-separate method [Gupta-Levin SODA'20], inspired by an earlier proof by [Vondr\'ak, PhD~Thesis'07].
The algorithm, whose guarantees are similar to the influential {\em continuous greedy} algorithm [Calinescu-Chekuri-P\'al-Vondr\'ak SICOMP'11], can use any cutting plane method and separation oracle for the constraints.
This allows us to introduce cutting plane methods, used for exact unconstrained submodular minimization since the '80s [Grötschel/Lovász/Schrijver Combinatorica'81], as a useful method for (optimal approximate) constrained submodular maximization.
We show further applications of this approach to static algorithms with curvature-sensitive approximation, and to communication complexity protocols.
\end{abstract}

\tableofcontents

\newpage

\pagenumbering{arabic}
\section{Introduction}\label{sec:intro}

Constrained submodular maximization is a rich area of combinatorial optimization, studied for at least half a century \cite{nemhauser1978analysis,nemhauser1978best}.\footnote{A set function $f:2^E\to \mathbb{R}_+$ is \emph{submodular} if it satisfies for all $S\subseteq T\subseteq E\setminus \{i\}$, it satisfies $f(i \mid S) \geq f(i \mid T)$, for $f(i \mid S):=f(S\cup \{i\}) - f(S)$ the \emph{marginal} gain of adding $i$ to $S$. It is also \emph{monotone} if for all $S\subseteq T$, $f(S)\leq f(T)$.} 
Motivated by such problems' wide applicability, numerous different algorithmic paradigms have been developed for such problems in static settings \cite{nemhauser1978analysis,calinescu2011maximizing,filmus2014monotone,sviridenko2017optimal,BF24,BF24b,BF25}, streaming algorithms \cite{alaluf2022optimal,chekuri2015streaming,feldman2022streaming,feldman2020oneway,kazemi2019submodular,levin2021streaming}, online and secretary settings~\cite{bateni2013submodular,buchbinder2020online,buchbinder2019online,feldman2018goes}, parallel computation \cite{balkanski2022adaptive2,balkanski2018adaptive1,chekuri2019parallel1,chekuri2019parallel2}, distributed computing~\cite{mirzasoleiman2016distributed,pontebarbosa2016new}, and more.
In this work, we search for algorithms applicable to several computational models, but focus mainly on dynamic algorithms with bounded \emph{recourse}.

Minimizing movement cost has a long history in the online and dynamic algorithms literature. Here, inputs change over time and we must update high-quality solutions, while minimizing changes to the solution, known as \emph{recourse}. See, e.g.,  \cite{manasse1988competitive,fiat1991competitive,borodin1992optimal,friedman1993convex,koutsoupias1995k,gupta2014maintaining,bansal2015polylogarithmic,megow2016power,feldkord2018fully,sellke2020chasing,gupta2020fully,argue2021chasing,bhattacharya2021online,fichtenberger2021consistent,gupta2022online,bubeck2023randomized,bhattacharya2023chasing,lkacki2024fully,bhattacharya2025fully}. A growing body of work has yielded results for recourse minimization for \emph{submodular maximization} \cite{chakrabarti2015submodular,chekuri2015streaming,chan2018online,buchbinder2019online,lattanzi2020fully,chen2022complexity,banihashem2023dynamic,banihashem2024dynamic,dutting2024consistent,dutting2025cost}.
In this work we initiate the study of recourse-minimizing dynamic algorithms for constrained submodular maximization \emph{with evolving submodular objectives}.
\vspace{-0.15cm}
\begin{wrapper}\textbf{The Submodular Objectives Chasing Problem:}\\ 
Initially, we are given a ground set $E$ and family of feasible subsets $\calI\subseteq 2^E$.
Then, at each time step $t=1, \ldots, T$ we are revealed a subset of available elements $E_t \subseteq E$, a non-negative monotone submodular function $f_t:2^{E_t}\to~\mathbb{R}_+$, 
and a desired value $V_t \leq OPT_t:= \max_{S \in {\cal I}\cap 2^{E_t}}f_t(S)$. 
Then, before the next time step, we must output a feasible subset $S_t\in {\cal I}\cap 2^{E_t}$ of available elements that is \emph{$\alpha$-approximate}, $\E[f_t(S_t)]\geq \alpha\cdot V_t$, with the goal of minimizing the \emph{recourse}, $\sum_{t=1}^{T} \E[|S_t \oplus S_{t-1}|]$, for $A \oplus B := (A\setminus B)\cup (B\setminus A)$ the symmetric difference.
\end{wrapper}
\vspace{-0.15cm}

Due to the ubiquity of submodularity, this abstract problem captures various concrete problems. We outline two motivating applications here. In both applications the underlying constraint is (or can be cast as) a cardinality or partition constraint.\footnote{A \emph{partition constraint} is defined by a partition of the ground set $E$ into  parts $P_1,\dots,P_r$ with per-part \emph{cardinality constraints} $k_i$ for each $i$, i.e., the feasible sets are $\calI:=\{S\subseteq E \mid |S\cap P_i|\leq k_i$ for all $i\in [r]$\}. It is \emph{simple} if $k_i=1\forall i$.} These are the best understood and most widely used constraints, and are also the focus of our results for submodular objective chasing.
\vspace{-0.4cm}
\paragraph{Fully-dynamic submodular welfare:}
Consider a market consisting of buyers and items, both arriving and departing over time. Each item may be allocated to at most one buyer. Each buyer $i$ has a non-negative monotone submodular valuation function $f_i$. At each time, we wish to maintain an allocation of the currently present items to the present buyers, of near-maximum \emph{social welfare}, $\sum_i f_i(S_i)$, for $S_i$ the items allocated to buyer $i$, while minimally changing the allocation over time. (Changes incur costs and buyer dissatisfaction.)
Via a standard reduction, this problem fits the submodular chasing problem with a simple partition constraint.
\vspace{-0.4cm}
\paragraph{Fully-dynamic max coverage:}
Consider a max-coverage instance consisting of a ground set of elements and a family of subsets, where both elements and subsets arrive and depart over time. At any time, the algorithm may hold at most $k$ subsets. Our goal is to cover a maximum number of elements while minimally changing the selected subsets. (In applications to facility location, sensor placing, hiring, etc, changes typically incur activation costs.) 
This problem fits the submodular chasing problem with a cardinality constraint. Note that $f_t$, the (submodular) coverage function at time $t$,\footnote{A \emph{coverage} function $f:2^E\to \mathbb{R}_+$ is a function associating each element of $E$ with a set in some family, and taking  $f(S)=|\bigcup_{i\in S} i|$. It is easy to see that such $f$ are non-negative monotone submodular functions.} changes as elements in the ground set of the set-cover instance arrive and depart.

\smallskip 
Previously, numerous works studied special cases of the {\em submodular objective chasing problem}. For example, many results are known for a fixed function $f$ (i.e., $f_t=f$ for all $t$) subject to cardinality constraints in incremental settings, $E_t=E_{t-1}\cup \{e_t\}$, while minimizing worst-case~(i.e.,~per-step) or amortized (i.e., average over time steps) \emph{absolute recourse} (i.e., recourse on the worst-case sequence) \cite{chakrabarti2015submodular,chekuri2015streaming,chan2018online,buchbinder2019online,lattanzi2020fully,chen2022complexity,banihashem2023dynamic,banihashem2024dynamic,dutting2024consistent,dutting2025cost}.
The state-of-the-art includes $(1-1/e-\epsilon)$-approximations using polylogarithmic amortized recourse \cite{chen2022complexity}, and constant worst-case recourse slow $2/3$-approximation and polytime $0.51$-approximation \cite{dutting2025cost}.
(See \Cref{sec:related-work} for more.)

While such absolute recourse guarantees help us characterize approximation/recourse tradeoffs for worst-case sequences, they naturally imply loose bounds for sequences chaseable using $o(T)$ total recourse.
Moreover, for the general problem with cardinality constraint of $k$, non-trivial absolute recourse guarantees are impossible, since successive $\langle f_t, E_t\rangle$ may have disjoint approximate maximizers, forcing $k$ recourse per step.

In this work, we study this general problem using the more fine-grained metric of \emph{competitive recourse}, introduced by  Bhattacharya et al.~\cite{bhattacharya2023chasing}. Here, one aims for near-optimal recourse for any given sequence, \emph{competitive} with that of the optimal \emph{offline} algorithm, which knows the update sequence in advance and can optimize for it.
In our context, we say a submodular objective chasing algorithm is \emph{$c$-competitive} if its recourse is at most $c$ times the minimum recourse needed by an (offline) algorithm to maintain feasible solutions $S_t$ with $f(S_t)\geq V_t$.
This notion of competitiveness is in line with traditional notions of movement cost in online settings,
for problems such as caching \cite{manasse1988competitive,fiat1991competitive}, the $k$-server problem \cite{manasse1988competitive,koutsoupias1995k,bansal2015polylogarithmic,bubeck2023randomized}, metrical task systems \cite{borodin1992optimal,bubeck2023randomized}, and convex body chasing \cite{friedman1993convex,argue2021chasing,sellke2020chasing}. 

Given that polylogarithmic (amortized) absolute recourse is known to be consistent with high approximation ratios for basic cardinality constraints in special cases,  we ask: \emph{what approximation is attainable (efficiently) using polylogarithmic \underline{competitive recourse} under more general dynamics?}

\subsection{Our Main Results}

Our main result is an algorithm achieving logarithmic competitive recourse and $(1-1/e-\eps)$ approximation for chasing submodular objectives subject to either cardinality or partition matroids.

\begin{restatable}{thm}{mainthm}\label{thm-main}
For any $\eps>0$, there is a $(1-1/e-\eps)$-approximate and $ O(\eps^{-1}\log (\max_t |E_t|\eps^{-1}))$-competitive randomized algorithm for monotone submodular objective chasing subject to either cardinality or partition  constraints. 
The algorithm runs in polynomial time, using only value oracle access to the functions~$f_t$.
\end{restatable}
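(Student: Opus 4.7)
The plan is to lift the problem to the multilinear extension, maintain a good fractional solution with low competitive \emph{fractional} recourse using the approximate-or-separate meta-algorithm established earlier, and then round to an integer solution using randomness coupled across time. Throughout let $F_t$ denote the multilinear extension of $f_t$, let $P_t:=P(\calI)\cap\{x\in[0,1]^E : x_e=0\ \forall e\notin E_t\}$ be the constraint polytope restricted to available elements, and let $n:=\max_t|E_t|$.

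First I would maintain, online, a fractional point $x_t\in P_t$ with $F_t(x_t)\geq (1-1/e-\eps)V_t$, produced by warm-starting the approximate-or-separate algorithm at the previous iterate $x_{t-1}$. Because that meta-algorithm only alternates between cutting-plane queries and approximate multilinear ascents, warm-starting is natural: we trigger additional iterations only when $x_{t-1}$ either falls outside $P_t$ or fails to certify value $V_t$. The main recourse claim to prove is that for any offline integral sequence $(S_t^*)$ with $f_t(S_t^*)\geq V_t$, the fractional chaser can be arranged so that
\[
\sum_t \|x_t - x_{t-1}\|_1 \;\leq\; O(\eps^{-1}\log(n/\eps))\cdot \sum_t |S_t^* \oplus S_{t-1}^*|.
\]
To establish this I would stratify $x_t$ into $O(\log(n/\eps))$ dyadic level-sets of fractional values and, for each level, charge the movement of the sub-fractional solution against OPT's symmetric difference via a coupling / potential-function argument in the style of the chasing literature; the $\eps^{-1}$ factor arises from the inner precision parameter of approximate-or-separate.

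Next, to convert $x_t$ into an integer $S_t\in\calI\cap 2^{E_t}$, I would use swap rounding for partition matroid constraints (or pipage rounding for pure cardinality), but with a \emph{single shared random source} used across all time steps. A standard coupling---e.g., for partition matroids, maintaining a persistent random threshold per part and per rank slot---yields $\E[|S_t\oplus S_{t-1}|] = O(\|x_t-x_{t-1}\|_1)$. Combined with $\E[f_t(S_t)]\geq F_t(x_t)\geq (1-1/e-\eps)V_t$, this means the integer recourse is within a constant factor of the fractional recourse, while the approximation guarantee is preserved in expectation.

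The main obstacle is the competitive bound in the fractional chasing step. Continuous-greedy-style algorithms typically rebuild the fractional maximizer from scratch, offering no recourse control, so I expect the argument to lean heavily on the incremental structure of approximate-or-separate: when $V_t$, $f_t$, or $E_t$ change only slightly, one must show the number of additional cutting-plane iterations needed to reach a valid $x_t$ is proportional to the movement an optimal integer chaser would incur on the same update. A subtle sub-case is a jump in $V_t$ with $(f_t,E_t)$ fixed---here $x_t$ must climb even though the instance looks unchanged; but this forced climb is itself covered by the fact that OPT must reconfigure $S_t^*$ to certify the higher value, so the charging still goes through. Putting together the polynomial runtime from approximate-or-separate, the fractional competitive-recourse bound above, and the coupled rounding yields the claimed $(1-1/e-\eps)$-approximate and $O(\eps^{-1}\log(n\eps^{-1}))$-competitive algorithm.
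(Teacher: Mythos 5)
Your high-level architecture (fractional chasing $\to$ recourse-respecting rounding) matches the paper, but both halves of your argument have genuine gaps where the paper does something concrete that you have not supplied.

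First, the fractional competitive-recourse bound. You say you would ``lift to the multilinear extension'' and then charge movement to OPT via dyadic level sets and a potential function, with approximate-or-separate warm-started at $x_{t-1}$. But the multilinear constraint $F_t(x)\geq V_t$ is not linear, and nothing in your sketch explains how a warm-started cutting-plane method yields recourse competitive with the \emph{offline optimum's} reconfiguration cost (as opposed to some absolute bound on the number of iterations). The paper's key move, which you are missing, is to linearize the value requirement using the \emph{coverage (Wolsey) extension}: $f_t^*(x)\geq V_t$ is equivalent to the exponentially many covering constraints $f_t(S)+\sum_i f_t(i\mid S)x_i\geq V_t$, so that $\calK_t=\calP_t\cap\{x: f_t^*(x)\geq V_t\}$ is a packing-covering polytope. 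The competitive recourse then follows as a black box from the chasing-positive-bodies theorem of Bhattacharya et al., with $d=\max_t|E_t|$ the covering sparsity; approximate-or-separate is used only to make the (intractable) separation over $f_t^*(x)\geq V_t$ polynomial-time, and one still must argue (as the paper does) that each hyperplane it produces forces $\Omega(\eps)$ movement so the number of cutting planes is polynomially bounded, and that the constraints can be truncated ($\min\{V_t-f(S),\,f(i\mid S)\}$) so that fixing a violated constraint genuinely costs $\Omega(\eps)$ recourse against the \emph{integral} optimum. None of this is recoverable from your level-set sketch.

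Second, the rounding. You assert that swap/pipage rounding with a shared random source gives simultaneously $\E[|S_t\oplus S_{t-1}|]=O(\|x_t-x_{t-1}\|_1)$ and $\E[f_t(S_t)]\geq F_t(x_t)$. No such ``lossless'' recourse-respecting rounding is known for general $k$, and this is precisely the difficulty the paper works around: its rounding (independent copies of a Keyfitz-type $k=1$ scheme on $\vec{x}/k$) only achieves marginals $1-(1-x_i/k)^k\geq 1-e^{-x_i}$, hence only $\E[f_t(S_t)]\geq F_t(\vec{1}-\exp(-\vec{x}^t))$ via negative association and submodular dominance. The approximation is rescued because the fractional algorithm is engineered to satisfy the \emph{stronger} guarantee $F_t(\vec{1}-\exp(-\vec{x}^t))\geq(1-1/e-\eps)V_t$ (this is why Proposition~\ref{prop-mainsep} separates whenever $F(\vec{1}-\exp(-\vec{x}))<(1-1/e-\eps)V$, not merely when $F(\vec{x})$ is small). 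Your proposal neither proves the rounding guarantee you rely on nor provides the compensating stronger fractional guarantee, so as written the approximation ratio of the rounded solution is unsupported.
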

Ignoring recourse, our algorithm's approximation is essentially optimal: a $1-1/e$ approximation is optimal even for the special case of a fixed coverage $f$ function for poly-time algorithms assuming \textsc{P}$\neq $\textsc{NP} \cite{feige1998threshold,moshkovitz2008two,dinur2014analytical} or unconditionally if given only value oracle access to $f$ \cite{nemhauser1978best,mirrokni2008tight}.
   Returning to recourse, as we observe, for constant $\eps>0$ our algorithm's competitive recourse is also asymptotically optimal, and is in fact necessary to achieve \emph{any} constant-approximation.

\begin{restatable}{obs}{chasingLB}\label{lem:chasing-LB}
    No constant-approximate submodular objective chasing algorithm has $o(\log \max_t |E_t|)$-competitive recourse, even for a fixed coverage function $f$ and cardinality constraint of $k=1$.
\end{restatable}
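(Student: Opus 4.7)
The plan is to adapt the classical halving-adversary lower bound for randomized online problems to our setting. Fix a ground set $E$ of $n = 2^m$ elements labeled by binary strings in $\{0,1\}^m$, and let $f$ be the coverage function in which each element covers its own distinct universe element, so that $f$ is monotone submodular with $f(\{e\})=1$ for every $e$; impose the cardinality constraint $k=1$. The hard input distribution $\mu$ consists of $P$ independent phases of $m$ steps each: in phase $i$, draw uniform bits $b_1^{(i)},\ldots,b_m^{(i)}\in\{0,1\}$ and set the feasible set at the $j$-th step of the phase to $E_t := \{e\in E : e[1..j] = b^{(i)}[1..j]\}$, with target $V_t=1$. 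Thus $|E_t|$ halves at each within-phase step, the element $b^{(i)} := b_1^{(i)}\cdots b_m^{(i)}$ lies in every $E_t$ of phase $i$, and $\max_t|E_t|=n/2$. The clairvoyant offline algorithm maintains $S_t=\{b^{(i)}\}$ throughout phase $i$ (paying recourse at most $2$ per phase boundary), so the offline recourse is at most $2P$.

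For the online direction I would apply Yao's principle. Since $f(\emptyset)=0$ and $f(\{e\})=1$, any $\alpha$-approximate randomized algorithm $A$ must satisfy $\Pr_A[S_t^A\neq\emptyset\mid I]\geq\alpha$ at every step on every input $I$. The key estimate is that, at a within-phase step $t$ in position $j\geq 2$, conditional on the history $I[1..t-1]$ (which determines the algorithm's distribution over $S_{t-1}$), the bit $b_j$ is uniform and independent. Hence for any candidate singleton $\{e\}\subseteq E_{t-1}$, the event $e\notin E_t$ has probability exactly $1/2$, giving
\[
\Pr\!\left[S_{t-1}\not\subseteq E_t \;\big|\; I[1..t-1]\right] \;=\; \tfrac12\,\Pr\!\left[S_{t-1}\neq\emptyset \;\big|\; I[1..t-1]\right] \;\geq\; \tfrac{\alpha}{2}.
\]
Since $S_t\subseteq E_t$ forces $S_t\neq S_{t-1}$ in this event, $E[|S_t\oplus S_{t-1}|]\geq \alpha/2$. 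Summing over the $P(m-1)$ within-phase non-initial steps gives $E_\mu E_A[\text{recourse}(A)]\geq P(m-1)\alpha/2$, and comparing with the offline bound yields $c\cdot 2P \geq P(m-1)\alpha/2$, i.e.\ $c \geq (m-1)\alpha/4 = \Omega(\log n) = \Omega(\log\max_t|E_t|)$ for any constant $\alpha>0$.

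The main subtlety I would be careful about is that the algorithm is allowed to output $\emptyset$ at individual steps (only its expected value must meet $\alpha V_t$). The conditional argument above is the clean way to see that such ``idling'' still cannot avoid paying $\alpha/2$ expected recourse per within-phase step, since the mass that the random bit $b_j$ forces out of $E_t$ is exactly half of the mass that the approximation guarantee requires the algorithm to place on singletons, regardless of how the algorithm distributes between $\emptyset$ and non-empty candidates.
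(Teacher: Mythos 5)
Your proof is correct and follows essentially the same strategy as the paper's: a clairvoyant offline algorithm holds the one element that survives all deletions (recourse $O(1)$ per phase), while the $\alpha$-approximation guarantee forces the online algorithm to hold a nonempty singleton with probability at least $\alpha$, which the independent random deletions then evict often enough to accumulate $\Omega(\alpha\log n)$ expected recourse. The only difference is cosmetic — the paper deletes one uniformly random element at a time and sums the harmonic series $\sum_t \alpha/t$, whereas you halve $E_t$ via i.i.d.\ bits and pay $\alpha/2$ per each of the $\log n$ halvings — and your handling of the ``algorithm may output $\emptyset$'' subtlety via conditioning on the history is exactly the right way to make the paper's sketch rigorous.
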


We prove the above in \Cref{app:lower-bounds}, where we also show that our use of randomization is~inevitable: deterministic algorithms require (at least) exponentially higher competitive recourse.

Our competitive recourse result implies absolute (amortized) recourse for sequences of interest, as we show in  \Cref{sec:special-cases}. 
For example, for the well-studied fixed function ($f_t=f$ for all $t$) in the insertion-only (and symmetrically the deletion-only) setting, the amortized recourse needed to maintain a $(1-\eps)$-approximate solution is $O(\eps^{-1}\log (k/\eps))$. 
More interestingly, we show that using the exponential histograms technique from the streaming literature, these results extend to the sliding windows model, where we must output solutions for the last $L$ arriving elements, which is a popular model often used as a testing benchmark for dynamic algorithms (see discussion in \cite{peng2023fully}). (To the best of our knowledge, ours is the first use of this technique to control recourse, rather than space.)
Our results for sequences of interest imply that our algorithm of \Cref{thm-main}, which is competitive with such a solution (using $V_t$ a $(1-\eps)$-approximation of the current optimal solution), has polylogarithmic absolute recourse for such sequences, without being tailored to them. 

Without knowledge of $V_t$ (a natural variant of our problem), 
one may ask for an efficient approximation of $OPT$ at time $t$. Taking $V_t$ to be some polynomial-time computable $\alpha$-approximation of $OPT_t$. Depending on how precise the computed approximation $V_t$ is, the obtained approximation ratio is between $(1-1/e-\eps)$ and $\alpha(1-1/e-\eps)$, with recourse higher or lower, respectively.
Strikingly, if we ignore recourse, our approach yields a new meta-algorithm for $(1-1/e-\eps)$-approximate constrained submodular maximization algorithm:

\smallskip\noindent\textbf{New meta-algorithm for submodular maximization:} 
Key to our polynomial-time algorithms is a new meta-algorithm for maximizing the multilinear extension (\Cref{def:multilinear}) of a monotone submodular function under general constraints defined by a separable polytope~$\calP$.\footnote{Recall that polytope $\calP$ is \emph{separable} if there exists a polynomial-time \emph{separation oracle} that given $\vec{x}$ either determines that $\vec{x}\in \calP$, or outputs a separating hyperplane witnessing that $\vec{x}\not\in \calP$, i.e., a hyperplane separating $\vec{x}$~from~$\calP$.} 
We obtain a $(1-1/e-\eps)$-approximation, similar to the influential \emph{continuous greedy} algorithm~\cite{calinescu2011maximizing}.
Our algorithm uses as a black-box any underlying cutting plane method such as the Ellipsoid (e.g., \cite{Khachiyan80,Vaidya96,LSW15,jiang2020improved}) and requires only a separation oracle for $\calP$.
The key ingredient is a new procedure that we call {\em Approximate-or-Separate}, which informally either outputs a point $\vec{x}\in \calP$ which is a $(1-1/e-\eps)$-approximation, or else finds with good probability a separating hyperplane required for the underlying cutting plane algorithm. 
This generic approach for solving submodular maximization using any cutting plane algorithm as a ``black box" is similar in spirit to the exact unconstrained subomdular minimization, which was one of the first applications of cutting plane based algorithms in TCS \cite{grotschel1981ellipsoid}.

We anticipate further applications of this meta-algorithm and illustrate its use in \Cref{sec:applications}. We first provide static algorithms with optimal {\em curvature-sensitive} approximation, matching the results of \cite{sviridenko2017optimal,Feldman21}, but via simple black-box use of our meta-algorithm. We then develop communication complexity protocols, using our low-recourse submodular chasing objectives algorithm as the cutting plane method used.

We now overview this meta-algorithm, and other key ideas for submodular objective chasing.

\subsection{Technical Overview}\label{sec:techniques}

For simplicity, we focus here on 
submodular chasing with  cardinality constraint of $k$. However, our rounding extends to partition constraints, fractional chasing algorithms extend to matroids, and our new meta-algorithm extends to any separable polytope.

Our starting point is the fundamental {\em convex body chasing problem} \cite{friedman1993convex}: A sequence of convex shapes $\mathcal{K}_1,\mathcal{K}_2,\dots, \mathcal{K}_T \subseteq \mathbb{R}^n$ is revealed one at a time, and we must move a point so that it always lies inside the current convex shape, while minimizing overall movement according to some norm, $\sum_{t=1}^{T}\|\vec{x}^t-\vec{x}^{t-1}\|$.~For the $\ell_1$ norm, most relevant to recourse,  $\Omega(\sqrt{n})$ competitiveness is unavoidable \cite{bubeck2020chasing}, and $O(n)$ is attainable (for any norm) by the celebrated result of~\cite{sellke2020chasing}.
In recent~work \cite{bhattacharya2023chasing} showed that for convex shapes given by mixed packing-covering polytopes, by allowing a small violation of $\eps>0$ to the constraints, exponentially better competitive recourse is possible. 

\textbf{Slowly Chasing Submodular Objectives.}
We begin by relaxing the submodular objective chasing problem by ``dropping integrality constraints''.
That is, we allow both the optimal offline algorithm and ours to compute a sequence of points $\vec{x}^1,\dots,\vec{x}^T\in [0,1]^E$ satisfying $\sum_i x_i^t\leq k$ for all $t$.
This relaxation replaces recourse $\sum_t |S_t\oplus S_{t-1}|$ with total distance moved, $\sum_t \|\vec{x}^t-\vec{x}^{t-1}\|_1$. 
Finally, to extend the submodular set function from binary-valued to real-valued vectors, we use the Wolsey coverage extension, $f^*_t$ (\Cref{def:wolsey}), and add covering constraints equivalent to~$f^*_t(\vec{x}^t)\geq V_t$. This gives rise to a sequence of polytopes defined by packing-covering LPs, for which the chasing positive bodies framework of \cite{bhattacharya2023chasing} yields, using low fractional recourse compared to the optimal offline algorithm, feasible fractional vectors $\vec{x}^0,\vec{x}^1,\dots$ of high value, i.e., $f^*_t(\vec{x}^t) \geq (1-\eps) V_t$.
By standard connections (\Cref{lem:F>=(1-1/e)f*}), our fractional points therefore have high value according to independent rounding, a.k.a.~the multilinear extension, $F_t$ (\Cref{def:multilinear}); specifically, $F_t(\vec{x}^t)\geq (1-1/e-\eps)\cdot V_t$, which we later show how to attain integrally with the same recourse, via rounding.

\textbf{Polynomial-time algorithms.} Our relaxation is problematic from the point of view of fast algorithms, since $f^*(\vec{x})$ is APX-hard \cite{vondrak2007submodularity}. So, (unless \textsc{P}$=$\textsc{NP}) no efficient separation oracles exist  for $f^*_t(\vec{x}^t)\geq V_t$, as needed by the (cutting plane-based) algorithm of \cite{bhattacharya2023chasing}.
To overcome the computational intractability of $f^*$, inspired by and strengthening the \emph{round-or-separate} method of \cite{gupta2020online}, we provide an algorithmic counterpart of \Cref{lem:F>=(1-1/e)f*}: We provide an efficient \emph{approximate-or-separate}~method, which given point $\vec{x}$, monotone submodular function $f$ and values $V,\eps$, either (1) ascertains that $\vec{x}$ approximates $V$, in that $F(\vec{x}) \geq (1-1/e-\eps)V$, or (2) separates the constraint $f^*(\vec{x})\geq V$, finding a highly-violated covering constraint witnessing $f^*(\vec{x})\leq (1-\eps)V$.
Fixing such a constraint by moving to some $\vec{x}'$ satisfying $f^*_t(\vec{x}')\geq V_t$ requires non-trivial movement, which cannot happen often, by our recourse bounds.
This underlies our poly-time (fractional) submodular objectives chasing algorithm (\Cref{sec:chasing-optimally}). 
More generally, our new meta-algorithm uses the approximate-or-separate subroutine in any cutting plane method to separate the generally intractable constraint $f^*(\vec{x})\geq V$, or alternatively terminate early with a feasible vector $\vec{x}\in \calP$ providing a good approximation, $F(\vec{x})\geq (1-1/e-\eps)\cdot V$
(\Cref{sec:new-static}).

\textbf{Rounding.} 
Our fractional algorithm outputs a solution satisfying the slightly stronger guarantee, $F_t(\vec{1}- e^{-\vec{x}^t})\geq (1-1/e-\eps)\cdot V_t$, where $\vec{1}- e^{-\vec{x}}$ is a vector whose $i^{th}$ coordinate is $1-\exp(-x_i)$ for all $i\in E$. 
To obtain integral solutions of high value we use independent repetition of marginal-preserving and recourse-respecting algorithms for rounding a vector $\vec{x}$ satisfying a simple cardinality constraint, $\sum_i x_i\leq 1$ (e.g., \cite{keyfitz1951sampling}), using it to output a feasible vector with each $i$ sampled with probability $(1-(1-x_i^t/k)^k)\geq (1-\exp(-x_i^t))$.
The high submodular value follows from the output distribution also satisfying negative association \cite{joag1983negative}, and hence having expected value at least as high as independent rounding with the same marginals \cite{christofides2004connection,qiu2022submodular}, namely $F(\vec{1}-\exp(-\vec{x}))$.

\subsection{Further Related Work}\label{sec:related-work}

\paragraph{Consistent Submodular Maximization.}
For the special case of our problem with fixed $f$, i.e., $f_t=f$ for all times $t$, subject to cardinality constraints and incremental settings, $E_t = E_{t-1}\cup \{e_t\}$ for some $e_t\in E\setminus E_t$, many results were obtained in the context of preemptive online algorithms \cite{chan2018online,buchbinder2019online} and streaming algorithms \cite{chakrabarti2015submodular,chekuri2015streaming}. It has recently been studied with the explicit objective of minimizing \emph{worst-case} recourse \cite{dutting2024consistent,dutting2025cost}, with the restrictive requirement that $\underline{\max}_t \E[|S_t\oplus S_{t-1}|]$ be small.
\cite{dutting2025cost} showed that with constant worst-case recourse, the optimal approximation ratio is $2/3$ information-theoretically, and $0.51$ is achievable efficiently using randomization, while $0.5+\eps$ requires $\Omega_\eps(k)$ worst-case recourse deterministically \cite{dutting2024consistent}.
In contrast, the line of work on \emph{fast} dynamic constrained submodular maximization \cite{chen2022complexity,lattanzi2020fully,banihashem2024dynamic,banihashem2023dynamic} yields algorithms that are $(1-1/e-\epsilon)$-approximate under insertion-only streams and $(1/2-\epsilon)$-approximate in fully-dynamic settings with $O_{\eps} (\poly \log n)$ \emph{amortized} recourse, i.e., $\sum_{t=1}^T \E[|S_t\oplus S_{t-1}|]\leq T\cdot \poly \log n$ for fixed $\eps$.
None of these algorithms match our absolute recourse guarantees when the optimal offline recourse is a sufficiently low $o(T)$.

\section{Main Technical Background}\label{sec:prelims}

We study non-negative monotone submodular set functions $f\colon 2^E \to \R_+$ with $|E|=n$. A function $f$ is \emph{non-negative}  if $f(S) \geq 0$ for every set $S \subseteq E$, and \emph{monotone} if $f(S) \leq f(T)$ whenever $S \subseteq T \subseteq E$.
We use the shorthand $f(i \mid S) \triangleq f(S \cup \{i\}) - f(S)$ to denote the marginal contribution of element $i$ to set $S$. A function $f$ is \emph{submodular} if $f(i \mid S) \geq f(i \mid T)$ for every two sets $S \subseteq T \subseteq E$ and element $i \in E \setminus T$. We only assume \emph{value oracle access} to $f$, which given a set $S \subseteq E$ returns~$f(S)$.
A well-studied special case of (monotone) submodular functions are \emph{matroid rank functions},  $rank_\calM(\cdot)$, mapping a set $S$ to the size of a largest independent set of matroid $\calM$ in $S$. 
We will not need further properties of matroids, but we recall that these are ubiquitous in combinatorial optimization.

\paragraph{Extensions.} We use two extensions of submodular set functions to vectors in $[0,1]^{E}$. The first, corresponding to independent rounding, is the \emph{multilinear extension} of \cite{calinescu2011maximizing}. 
\begin{Def}\label{def:multilinear}
    The \emph{multilinear extension} $F:[0,1]^E\to \mathbb{R}$ of a set function $f:2^E\to \mathbb{R}$ is given by $$F(\vec{x})=\sum_{S\subseteq E} f(S) \prod_{i\in S} x_i \prod_{i\not\in S} (1-x_i).$$
\end{Def}
\begin{fact}\label{fact:ML-scaling}
    For any non-negative monotone submodular function $f:2^E\to \mathbb{R}_+$, real $p\in [0,1]$ and vectors $\vec{x}\in [0,1]^E$ and $\vec{y}\in [0,1]^E$ with $\vec{y}\geq p\cdot \vec{x}$,
    $$F(\vec{y}) \geq p\cdot F(\vec{x}).$$
\end{fact}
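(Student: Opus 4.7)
The plan is to split the inequality into two steps: first reduce to the case $\vec{y}=p\vec{x}$ using monotonicity of $F$, and then prove $F(p\vec{x})\geq p\cdot F(\vec{x})$ using concavity of $F$ along non-negative directions emanating from the origin, together with non-negativity of $f(\emptyset)$.

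For the first step, I would observe that because $f$ is monotone, its multilinear extension $F$ is coordinate-wise monotone on $[0,1]^E$: indeed, $\partial F/\partial x_i = F(\vec{x}\vee \vec{1}_i)-F(\vec{x}\wedge \vec{1}_{E\setminus i}) = \E[f(i\mid R_{\vec{x}\wedge \vec{1}_{E\setminus i}})]\geq 0$ by monotonicity. Since $\vec{y}\geq p\vec{x}$ coordinate-wise, this gives $F(\vec{y})\geq F(p\vec{x})$, and the task reduces to showing $F(p\vec{x})\geq p\cdot F(\vec{x})$.

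For the second step, define the univariate function $g(t):=F(t\vec{x})$ on $t\in[0,1]$. The key fact I would invoke is that for submodular $f$, $F$ is concave along any non-negative direction; equivalently, the Hessian of $F$ satisfies $\vec{d}^\top \nabla^2 F(\vec{z})\,\vec{d}\leq 0$ for every $\vec{d}\geq \vec{0}$, since its off-diagonal entries $\partial^2 F/\partial x_i\partial x_j$ equal marginal differences $f(i\mid R\cup\{j\})-f(i\mid R)\leq 0$ and its diagonal entries vanish by multilinearity. Hence $g$ is concave on $[0,1]$. Concavity with $g(0)=f(\emptyset)\geq 0$ and $g(1)=F(\vec{x})$ yields
\[
g(p)=g\bigl(p\cdot 1+(1-p)\cdot 0\bigr)\geq p\cdot g(1)+(1-p)\cdot g(0)\geq p\cdot F(\vec{x}),
\]
which combined with $F(\vec{y})\geq F(p\vec{x})=g(p)$ gives the claim.

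I do not expect any serious obstacle: both the monotonicity and the ``concave along non-negative directions'' properties of $F$ are standard consequences of $f$ being monotone and submodular, and the non-negativity hypothesis is used exactly in the final step to drop the $(1-p)\,f(\emptyset)$ term. The only point that might deserve a line of care is justifying concavity of $F$ along non-negative rays from the origin; if a self-contained argument is preferred, I would write $g(t)=\sum_{S\subseteq E} f(S)\prod_{i\in S}(tx_i)\prod_{i\notin S}(1-tx_i)$ as a polynomial in $t$ and invoke the Hessian computation above on the segment, but the one-line reference to the well-known concavity property should suffice.
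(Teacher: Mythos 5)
Your proposal is correct and follows essentially the same route as the paper's proof: monotonicity of $F$ reduces to the point $p\vec{x}$, concavity of $r\mapsto F(r\vec{x})$ (a standard consequence of submodularity) gives $F(p\vec{x})\geq p F(\vec{x})+(1-p)F(\vec{0})$, and non-negativity of $f$ discards the last term. The extra detail you supply on why $F$ is coordinate-wise monotone and concave along non-negative directions is exactly the justification the paper leaves implicit.
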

\begin{proof}
    By submodularity of $f$, for all non-negative $\vec{x}\geq \vec{0}$, the  function $F(\vec{0}+r\cdot \vec{x})$ is a concave function of $r\in [0,1]$. Therefore,  combined with monotonicity and non-negativity of $f$, we get
    \begin{align*}
    F(\vec{y}) & \geq F(p\cdot \vec{x}) = F((1-p)(\vec{0}+ 0 \cdot \vec{x})+ p ((\vec{0}+1 \cdot \vec{x})) \geq (1-p)F(\vec{0})+ p \cdot F(\vec{x})\geq p \cdot F(\vec{x}). \qedhere 
    \end{align*}
\end{proof}

The second extension we need is sometimes referred to as the \emph{Wolsey extension} \cite{levin2022submodular}, due to its roots in the work of Wolsey on submodular set coverage \cite{wolsey1982analysis}.  We use the more evocative name \emph{coverage extension} suggested by \cite{levin2022submodular}, given our use of this extension in coverage constraints.
\begin{Def}\label{def:wolsey}
    The \emph{coverage extension} $f^*:[0,1]^E\to \mathbb{R}$ of a set function $f:2^E\to \mathbb{R}$ is given by $$f^*(\vec{x})=\min_{S\subseteq E} \bigg(f(S) + \sum_{i\in E} f(i\mid S)\cdot x_i\bigg).$$
\end{Def}

For monotone submodular $f$, both $F$ and $f^*$ are extensions: $F(\mathds{1}_S) = f^*(\mathds{1}_S)=f(S)$~\cite{vondrak2007submodularity}.

The following shorthand will reoccur frequently in this paper.
\begin{Def}
    For any vector $\vec{x}\in [0,1]^E$, we denote by $\vec{1}-\exp(-\vec{x})$ the vector whose $i^{th}$ coordinate is $1-\exp(-x_i)$ for all $i\in E$.
\end{Def}

The following lemma was originally used by Vondr\'ak \cite{vondrak2007submodularity} to prove that the multilinear extension of any monotone submodular function $(1-1/e)$-approximates the submodular value-maximizing distribution with the given marginals, $f^+$, which is dominated by $f^*$ \cite{vondrak2007submodularity}.
While the \emph{statement} of \cite[Lemma 3.8]{vondrak2007submodularity} does not mention $F(\vec{1}-\exp(-\vec{x}))$, this stronger version which we state (and need) is precisely what Vondr\'ak proves (and we reprove later).

\begin{restatable}{lem}{vondrak}\label{lem:F>=(1-1/e)f*}
    Let $f:2^E\to \mathbb{R}_+$ be a monotone subdmodular function and $\vec{x}\in [0,1]^E$ be a vector. Then,
    $F(\vec{x}) \geq F(\vec{1}-\exp(-\vec{x}))\geq (1-1/e) \cdot f^*(\vec{x})$.
\end{restatable}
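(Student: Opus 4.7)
The plan is to split the lemma into its two inequalities. The first, $F(\vec{x})\geq F(\vec{1}-\exp(-\vec{x}))$, is immediate: since $1-e^{-u}\leq u$ for all $u\geq 0$, we have $\vec{1}-\exp(-\vec{x})\leq \vec{x}$ coordinatewise, and monotonicity of $f$ (hence of $F$ on $[0,1]^E$) gives the inequality.

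For the substantive inequality $F(\vec{1}-\exp(-\vec{x}))\geq (1-1/e)\cdot f^*(\vec{x})$, I will follow the continuous-greedy style ODE argument attributed to Vondr\'ak. Define the trajectory $\vec{y}(t):=\vec{1}-\exp(-t\vec{x})$ for $t\in[0,1]$, which satisfies $\vec{y}(0)=\vec{0}$, $\vec{y}(1)=\vec{1}-\exp(-\vec{x})$, and $\dot y_i(t)=x_i(1-y_i(t))$. The aim is to prove the differential inequality
\[
\tfrac{d}{dt}F(\vec{y}(t))\;\geq\; f^*(\vec{x})-F(\vec{y}(t)).
\]
To get there, I will apply the chain rule to obtain $\tfrac{d}{dt}F(\vec{y}(t))=\sum_i x_i(1-y_i(t))\cdot\partial_i F(\vec{y}(t))$, and then bound the right-hand side using $f^*$. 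Specifically, let $R(t)\sim \vec{y}(t)$ denote the random set sampled independently with marginals $\vec{y}(t)$. The definition of the coverage extension gives, for every realization of $R(t)$, $f^*(\vec{x})\leq f(R(t))+\sum_i f(i\mid R(t))\,x_i$, so taking expectations yields $f^*(\vec{x})\leq F(\vec{y}(t))+\sum_i x_i\,\mathbb{E}[f(i\mid R(t))]$.

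The key identity linking this bound to the ODE is $\mathbb{E}[f(i\mid R(t))] = (1-y_i(t))\cdot \partial_i F(\vec{y}(t))$. This follows by conditioning on whether $i\in R(t)$: if $i\in R(t)$ the marginal is zero, while conditional on $i\notin R(t)$ the random set on $E\setminus\{i\}$ has the same law as the one defining the partial derivative $\partial_i F(\vec{y}(t))=\mathbb{E}[f(R_{-i}\cup\{i\})-f(R_{-i})]$ via multilinearity of $F$. Combining these pieces gives the desired $\tfrac{d}{dt}F(\vec{y}(t))\geq f^*(\vec{x})-F(\vec{y}(t))$.

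Finally I will integrate: multiplying by $e^t$ turns the inequality into $\tfrac{d}{dt}[e^tF(\vec{y}(t))]\geq e^t f^*(\vec{x})$, and integrating from $0$ to $1$ gives $e\cdot F(\vec{y}(1))\geq (e-1)f^*(\vec{x})+F(\vec{0})\geq (e-1)f^*(\vec{x})$ by nonnegativity of $f$. Dividing by $e$ yields $F(\vec{1}-\exp(-\vec{x}))\geq (1-1/e)f^*(\vec{x})$, completing the proof. I expect the main subtle step to be the identity $\mathbb{E}[f(i\mid R(t))]=(1-y_i(t))\partial_i F(\vec{y}(t))$, where one must carefully split on the event $i\in R(t)$ and use that the marginal vanishes on that event — the submodularity of $f$ plays no direct role here beyond ensuring the inequality $f(i\mid S)\geq 0$ implicit in monotonicity; submodularity is used in the coverage-extension bound on $f^*(\vec{x})$ itself.
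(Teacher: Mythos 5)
Your proof is correct and follows essentially the same route as the paper: your trajectory $\vec{y}(t)=\vec{1}-\exp(-t\vec{x})$ is exactly the marginal vector of the paper's exponential-clock process $S(t)=\{i\mid Y_i\le t\}$ with $Y_i\sim\Exp(x_i)$, and both arguments establish and then integrate the same differential inequality $\tfrac{d}{dt}F(\vec{y}(t))\ge f^*(\vec{x})-F(\vec{y}(t))$ with $F(\vec{0})\ge 0$. The only (cosmetic) difference is that you compute the derivative deterministically via the chain rule and the identity $\E[f(i\mid R(t))]=(1-y_i(t))\,\partial_i F(\vec{y}(t))$, whereas the paper obtains the same quantity from memorylessness of the exponential clocks.
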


\paragraph{Chasing Positive Bodies.}
In the positive bodies chasing problem, the (online) input is a sequence of non-empty packing-covering polytopes of the form $\calK_t \triangleq \{\vec{x}\in \mathbb{R}^E_+ \mid C^t\vec{x}\geq \vec{1}, \; P^t\vec{x}\leq \vec{1}\}$, where both $P^t$ and $C^t$ are non-negative matrices. The following theorem is proved in \cite{bhattacharya2023chasing} (stated here with a violation of the covering constraint, which is more appropriate for our problem).

\begin{thm}\label{thm-fractional-chasing}
For any $\eps>0$, there exists a (deterministic) online  algorithm which given a sequence of non-empty packing-covering polytopes of the form $\calK_t$ above, outputs at each time $t$ an
$$\vec{x}^t \in \calK^{{1-\epsilon}}_t \triangleq \{\vec{x}\in \mathbb{R}^E_+ \mid C\vec{x}\geq \vec{1-\boldsymbol{\epsilon}}, \; P\vec{x}\leq \vec{1}\}.$$
The algorithm's recourse is $\sum_t \|\vec{x}^t-\vec{x}^{t-1}\|_1 = O(\eps^{-1} \log(d\cdot \eps^{-1})) \cdot OPT_R$, for 
 $d$ the maximal number of non-negative coefficients in any covering constraint and $OPT_R\triangleq \min_{\vec{x}^t\in \calK_t\; \forall t} \sum_t \|\vec{x}^t-\vec{x}^{t-1}\|_1$.
\end{thm}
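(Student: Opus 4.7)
The plan is to apply the online primal-dual framework for mixed packing-covering linear programs, interpreting the fractional recourse $\sum_t \|\vec{x}^t-\vec{x}^{t-1}\|_1$ as the primal objective we want to control. I would maintain the primal vector $\vec{x}^t \in \mathbb{R}^E_+$ together with implicit dual variables $y^t_j$ for each covering constraint $c_j^\top \vec{x}\geq 1$ and $z^t_k$ for each packing constraint $p_k^\top \vec{x}\leq 1$ seen so far. At each time $t$, first check whether $\vec{x}^{t-1}$ already lies in the relaxed body $\calK_t^{1-\eps}$; if so, set $\vec{x}^t=\vec{x}^{t-1}$ and incur zero recourse. Otherwise, enter a continuous update phase driven by the violated constraints of $\calK_t$.

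The core design is a multiplicative coupling between primal and dual variables. I would maintain the invariant that each coordinate is a monotone exponential in the accumulated duals, roughly $x_i^t \approx \frac{\eps}{d}\bigl(\exp(\sum_j c_{j,i} y_j^t - \sum_k p_{k,i} z_k^t)-1\bigr)$, with the $\eps/d$ offset chosen so that the initial primal mass is $O(\eps)$ and so each coordinate can grow from its initial value to $1$ only after the exponent increases by $O(\log(d/\eps))$. During the update phase I continuously raise $y_j$ for every still-violated covering constraint of $\calK_t$, which drives up all coordinates with positive coefficient in $c_j$; whenever a packing constraint becomes tight, its dual $z_k$ is grown at exactly the rate needed to freeze those coordinates. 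The phase terminates once every covering constraint of $\calK_t$ is satisfied up to the $\eps$ relaxation.

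The analysis then proceeds in two charging steps. First, I would show that the multiplicative structure gives a differential identity of the form $d\|\vec{x}^t\|_1 \leq O(\log(d/\eps))\cdot \sum_j \Delta y_j^t$ while coordinates grow, so that the total primal $\ell_1$-movement is bounded by $O(\log(d/\eps))$ times the total dual growth. Second, I would bound the dual growth by $O(\eps^{-1}\cdot OPT_R)$ via step-by-step weak duality: any offline trajectory $\vec{x}^{*,t}\in \calK_t$ satisfies $c_j^\top \vec{x}^{*,t}\geq 1$ throughout the update phase, so each unit of $\Delta y_j$ is ``paid for'' up to the $\eps$ slack by the offline primal variables at the corresponding coordinates, and a telescoping argument that tracks the $\ell_1$-displacement of the offline vector itself converts these local payments into a global $O(OPT_R/\eps)$ bound. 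Combining the two estimates yields the claimed $O(\eps^{-1}\log(d\eps^{-1}))$ competitive recourse.

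The main obstacle is the simultaneous treatment of packing and covering duals in the charging argument: packing duals can block covering progress, and without the $\eps$ slack the multiplicative scheme may fail to terminate (or the charging may fail when many packing constraints are tight). Getting the ``right'' exponent so that (i) coordinates stay monotone non-decreasing within an update phase, (ii) the multiplicative rate converts primal movement to dual gain at the $\log(d/\eps)$ rate, and (iii) the offline weak-duality payment survives the $\eps$-relaxation, is the delicate technical part and is precisely what forces the extra $\eps^{-1}$ factor over the classical $O(\log d)$ bound for pure online fractional covering.
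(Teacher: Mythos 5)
The paper does not actually prove this statement: \Cref{thm-fractional-chasing} is quoted directly from Bhattacharya et al.~\cite{bhattacharya2023chasing} (restated with an $\eps$-violation of the covering constraints), so there is no in-paper proof to compare against. What the present paper does use, in the proof of \Cref{lem-main-fractional-fast}, is the internal structure of that external argument: a reduction to chasing single positive halfspaces, with each violated constraint fixed by a KL-divergence (multiplicative-update) projection, and a potential-function analysis against the offline trajectory. Your primal-dual sketch with exponentially coupled primal and dual variables is in the same family of techniques, so the approach is not wrong in spirit.

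However, as a proof it has a concrete gap. The recourse $\sum_t \|\vec{x}^t-\vec{x}^{t-1}\|_1$ charges both increases and decreases of coordinates, and at time $t$ the previous point $\vec{x}^{t-1}$ may violate a \emph{packing} constraint of the new body $\calK_t$, forcing coordinates to go down. Your update rule only describes growing coordinates to fix covering violations and ``freezing'' coordinates when packing constraints become tight; freezing prevents further growth but does not produce the required decrease, and your charging scheme (primal growth $\le O(\log(d/\eps))\cdot$ dual growth, dual growth $\le O(\eps^{-1})\cdot OPT_R$) only accounts for upward movement. You need a symmetric mechanism and charging for downward movement — in \cite{bhattacharya2023chasing} this is handled by the Bregman projection treating packing and covering violations uniformly, with the potential (a KL divergence to the offline point) decreasing in proportion to the online movement in either direction and increasing by at most $O(\eps^{-1}\log(d/\eps))$ per unit of offline movement. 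Relatedly, your final charging step (``a telescoping argument that tracks the $\ell_1$-displacement of the offline vector'') is exactly the potential argument that carries the whole theorem, and you leave it at the level of intention; as you yourself note, the delicate part is unresolved. So the proposal is a reasonable outline of the known proof strategy but not a complete or self-contained proof.
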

\section{Approximate-or-Separate}\label{sec:polytime-separate}\label{sec:new-static}
\label{sec:approx-or-separate}

In this section we design a new meta-algorithm for approximating the multilinear extension $F$. Its guarantees are very similar to those of the influential Continuous Greedy Algorithm~\cite{calinescu2011maximizing}, and it can be used as an alternative approach to the latter. 
An advantage of this new meta-algorithm is its simplicity, and flexibility, allowing to implement it in various computational models, as we illustrate in subsequent sections.

Conceptually, the meta-algorithm is simple, and uses any cutting plane-based method, such as the Ellipsoid algorithm \cite{Khachiyan80,Vaidya96,LSW15,jiang2020improved}.
It is given (value oracle access to) a non-negative monotone  submodular function $f$ and a non-empty separable polytope $\calP\subseteq [0,1]^E$. It first guesses a value $V$ with $V/f(OPT)\in [1-\eps,1]$, for $f(OPT)\triangleq \max_{\vec{x}\in \{0,1\}^E\cap \calP}f(\vec{x})$. (This is standard and introduces only a logarithmic overhead; see the proof of \Cref{thm:guessing}.) Next, it uses any cutting plane-based method to find a point $\vec{x}$ in the polytope $\calP\cap \{\vec{x}\mid f^*(\vec{x})\geq V\}$, which is non-empty as $\calP$ is non-empty and $V\leq f(OPT)$. Such $\vec{x}$ is feasible (is in $\calP$) and satisfies by \Cref{lem:F>=(1-1/e)f*}  $F(\vec{x})\geq (1-1/e)\cdot f^*(\vec{x})\geq (1-1/e)\cdot V \geq (1-1/e-\eps)\cdot f(OPT)$.
Unfortunately, as outlined in the introduction, the constraint $f^*(\vec{x})\geq V$ is given by exponentially many constraints, and is moreover not separable, as even computing $f^*(\vec{x})$ is APX-hard \cite{vondrak2007submodularity}.

Our guiding intuition, inspired by \cite{gupta2020online}, is that we only wish to find a vector satisfying $f^*(\vec{x})\geq V$ as a means to compute a point $\vec{x}\in \calP$ with $F(\vec{x})\geq (1-1/e)\cdot V$ (or thereabouts). 
So, if we reach such a feasible point with high $F(\vec{x})$, we may terminate early, 
and we only require a separating hyperplane witnessing $f^*(\vec{x})<V$ when we are not at such a point.
The following simple lemma asserts that in the latter case a separating hyperplane \emph{exists}, and moreover this is equivalent to the classic \Cref{lem:F>=(1-1/e)f*}. Crucially for our rounding in \Cref{sec:rounding}, we even prove a slightly stronger claim that states that such a hyperplane exists even when $F(\vec{1}-\exp(-\vec{x}))<(1-1/e-\eps)\cdot V$.

\begin{lem}
    Let $f\colon 2^{E} \rightarrow \R_+$ be a non-negative monotone submodular function. Then, the following statements are equivalent.
    \begin{enumerate}
        \item \label{1:>=}\textbf{(\Cref{lem:F>=(1-1/e)f*})} For all $
        \vec{x}\in [0,1]^E$, we have $F(\vec{1}-\exp(-\vec{x}))\geq (1-1/e)\cdot f^*(\vec{x})$.
        \item \label{2:existential} \textbf{(High value or witness)} For all $\vec{x}\in [0,1]^E$, $\eps\in (0,1)$ and $V>0$, if $F(\vec{1}-\exp(-\vec{x}))<(1-1/e-\eps)\cdot V$, then there \underline{exists} a set $S\subseteq E$ satisfying $f(S)+\sum_i f(i \mid S)\cdot x_i\leq (1-\Theta(\eps))\cdot V$.
    \end{enumerate}
\end{lem}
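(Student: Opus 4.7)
The equivalence essentially unpacks the definition of $f^*(\vec{x})$ as a minimum over $S \subseteq E$, so I would prove each direction independently in essentially one line. For $(1) \Rightarrow (2)$, given $\vec{x}, \eps, V$ with $F(\vec{1}-\exp(-\vec{x})) < (1-1/e-\eps)\, V$, applying statement~1 yields
\[
f^*(\vec{x}) \;\le\; \frac{F(\vec{1}-\exp(-\vec{x}))}{1-1/e} \;<\; \frac{1-1/e-\eps}{1-1/e}\cdot V \;=\; \Bigl(1 - \tfrac{e}{e-1}\,\eps\Bigr)\cdot V.
\]
Since $f^*(\vec{x})$ is a minimum over the finite family $2^E$, the minimizing set $S$ exists and witnesses statement~2 with the constant hidden in $\Theta(\eps)$ being $e/(e-1)$.

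For $(2) \Rightarrow (1)$, I would argue by contrapositive. Suppose some $\vec{x} \in [0,1]^E$ violates statement~1, i.e.\ $F(\vec{1}-\exp(-\vec{x})) < (1-1/e)\, f^*(\vec{x})$. Setting $V := f^*(\vec{x})$, the strict gap lets me pick $\eps > 0$ small enough that the hypothesis of statement~2 still holds. Statement~2 then produces a set $S$ with $f(S)+\sum_i f(i\mid S)\,x_i \le (1-\Theta(\eps))\,V < V = f^*(\vec{x})$, contradicting the definition of $f^*(\vec{x})$ as the infimum over $S \subseteq E$ of precisely this expression.

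\textbf{Main obstacle.} There is no real technical obstacle here: the equivalence is a pure reformulation. The only care needed is to track the $\Theta(\eps)$ constant consistently between directions (it comes out to $e/(e-1)$), and to observe that the witnessing $S$ in statement~2 always exists because $f^*$ is a minimum over the finite collection $2^E$. The substantive mathematical content --- the inequality $F(\vec{1}-\exp(-\vec{x})) \ge (1-1/e)\, f^*(\vec{x})$ itself --- lies entirely in statement~1, which is established separately as the Vondr\'ak-style bound of Lemma~\ref{lem:F>=(1-1/e)f*}.
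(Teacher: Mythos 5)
Your proposal is correct and follows essentially the same route as the paper: direction $(1)\Rightarrow(2)$ by dividing out the $(1-1/e)$ factor to get $f^*(\vec{x})\le (1-\tfrac{e}{e-1}\eps)V$ and invoking that $f^*$ is a minimum over the finite family $2^E$, and direction $(2)\Rightarrow(1)$ by contrapositive with $V=f^*(\vec{x})$ and a contradiction with the definition of $f^*$. The constant you track, $e/(e-1)$, matches the paper's $\eps/(1-1/e)$.
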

\begin{proof}
    (\ref{1:>=} $\Rightarrow$ \ref{2:existential}) If $(1-1/e-\eps)V\geq F(\vec{1}-\exp(-\vec{x}))\geq (1-1/e)f^*(\vec{x})$ (the last inequality by \eqref{1:>=}), then we have that $(1-\eps/(1-1/e)) \cdot V \geq f^*(\vec{x}) = \min_{S\subseteq E} \left(f(S)+\sum_i f(i \mid S)\cdot x_i\right)$.

    (\ref{2:existential} $\Rightarrow$ \ref{1:>=}) If $F(\vec{1}-\exp(-\vec{x}))<(1-1/e-\eps)\cdot f^*(\vec{x})$ for some $\eps>0$, then by \eqref{2:existential} there exists a set $S\subseteq E$ with $f(S) + \sum_i f(i\mid S)\cdot x_i \leq (1-\Theta(\eps))\cdot f^*(\vec{x}) < f^*(\vec{x})$, contradicting the definition~of~$f^*(\vec{x})$.   
\end{proof}

Of course, the existence of a separating hyperplane does not imply efficient separation.
The core of our efficient algorithms is the following algorithmic counterpart to the existential statement \eqref{2:existential} above, using the exponential clocks of Vondr\'ak's proof of \Cref{lem:F>=(1-1/e)f*}. 

\begin{wrapper}
\begin{restatable}[Approximate-or-separate]{prop}{separateorapproximate}\label{prop-mainsep}
    Let $f\colon 2^{E} \rightarrow \R_+$ be a non-negative monotone submodular function, $\vec{x}\in [0,1]^n$, $\eps\in (0,1)$ and $V> 0$. If $F(\vec{1}-\exp(-\vec{x}))<(1-1/e-\eps) \cdot V$, then  
    the random set $S(t) \triangleq \{i\in E \mid  Y_i \leq t\}$ for independent $t\sim \Uni[0,1]$ and $Y_i\sim \Exp(x_i)$ for all $i\in E$ satisfies  with probability at least~$\eps^2/2$:   \begin{equation*}
        f(S(t))+ \sum_{i\in E}f(i\mid S(t)) \cdot x_i \leq \left(1-\eps/2 \right) \cdot V.
    \end{equation*}
\end{restatable}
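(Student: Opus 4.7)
The plan is to exploit the exponential-clocks viewpoint behind the proof of \Cref{lem:F>=(1-1/e)f*}, but instead of only extracting the single-point bound on $g(1) := \E[f(S(1))] = F(\vec{1}-\exp(-\vec{x}))$, to extract a weighted-average bound on the random value $f(S(t)) + \sum_i f(i \mid S(t))\, x_i$ and then combine it with Markov's inequality. Throughout I will write $g(t) := \E[f(S(t))] = F(\vec{1}-\exp(-\vec{x}t))$ and $H(t) := \E_Y\!\left[f(S(t)) + \sum_i f(i \mid S(t))\, x_i\right]$.

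First I would derive the identity $H(t) = g(t) + g'(t)$. A standard infinitesimal-element computation gives $g'(t) = \sum_i x_i e^{-x_i t}\, \E[f(i \mid S^{-i}(t))]$ where $S^{-i}(t) := \{j \neq i : Y_j \leq t\}$. Since $f(i \mid S(t))$ vanishes on $\{Y_i \leq t\}$ and equals $f(i\mid S^{-i}(t))$ on its complement, while $Y_i$ is independent of $S^{-i}(t)$, we have $\E[f(i \mid S(t))] = e^{-x_i t}\, \E[f(i \mid S^{-i}(t))]$, so $g'(t) = \sum_i x_i\, \E[f(i \mid S(t))]$ and hence $H(t) = g(t) + g'(t)$. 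The integrating factor $e^t$ then gives $(e^t g(t))' = e^t H(t)$, and integrating over $[0,1]$ with $g(0) = 0$ yields
\[ g(1) \;=\; \int_0^1 e^{t-1} H(t)\, dt. \]

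Next I would use the hypothesis $g(1) < (1-1/e-\eps) V$ to deduce a weighted-average bound on $H$. Setting $c := e/(e-1)$ and $\rho(t) := c \cdot e^{t-1}$ (a probability density on $[0,1]$, since $\int_0^1 e^{t-1}\,dt = 1-1/e$), we obtain $\E_{t \sim \rho}[H(t)] = g(1)/(1-1/e) < V(1-c\eps)$. By iterated expectation, $f(S(t)) + \sum_i f(i \mid S(t))\, x_i$ has expectation strictly less than $V(1-c\eps)$ over $(t, Y)$ with $t \sim \rho$ and $Y_i \sim \Exp(x_i)$ independent, and it is non-negative by monotonicity of $f$. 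Markov's inequality at threshold $(1 - c\eps/2) V$ then yields
\[
\Pr_{t \sim \rho,\, Y}\!\left[f(S(t)) + \sum_i f(i\mid S(t))\, x_i \;\leq\; (1-c\eps/2)V\right] \;\geq\; \frac{c\eps/2}{1-c\eps/2} \;\geq\; c\eps/2.
\]

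Finally I would transfer this bound to uniform $t$. Because $\rho(t) \leq c$ for all $t\in[0,1]$, every event in the product space satisfies $\Pr_{t\sim\rho,Y}[\cdot] \leq c \cdot \Pr_{t\sim \Uni, Y}[\cdot]$, so the above probability under $t \sim \Uni[0,1]$ is at least $\eps/2$. Since $c > 1$ implies $(1-c\eps/2) V \leq (1-\eps/2) V$, the event just controlled is contained in the target event $\{f(S(t)) + \sum_i f(i\mid S(t))\, x_i \leq (1-\eps/2) V\}$, so the proposition holds with probability at least $\eps/2 \geq \eps^2/2$. The main obstacle is that the naive uniform integral $\int_0^1 H(t)\,dt = g(1) + \int_0^1 g(t)\,dt$ can be as large as $2g(1)$, which exceeds $V$ for small $\eps$; the exponential weight $e^{t-1}$ is forced by the integrating factor and is exactly what makes Markov succeed.
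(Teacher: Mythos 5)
Your proposal is correct, and it departs from the paper's argument in its second half in a way that is worth noting. Both proofs start from the same identity $\E\big[f(S(t))+\sum_i f(i\mid S(t))x_i\big]=\phi(t)+\phi'(t)$ (your derivation via the partial derivatives of the multilinear extension is a legitimate substitute for the paper's memorylessness/infinitesimal computation). The paper then argues \emph{pointwise in $t$}: a log-gap potential $\psi(t)=-\ln(V(1-\eps)-\phi(t))$ shows that the set of times with $\phi(t)+\phi'(t)<V(1-\eps)$ has uniform measure at least $\eps$, and a conditional Markov inequality at each such $t$ contributes another factor $\eps/2$, giving $\eps^2/2$. You instead read the integrating-factor relation as the \emph{exact} identity $\phi(1)=\int_0^1 e^{t-1}\big(\phi(t)+\phi'(t)\big)\,dt$ --- the paper only ever uses this direction as an ODE comparison inequality --- so the hypothesis becomes a bound on the expectation of the random quantity $f(S(t))+\sum_i f(i\mid S(t))x_i$ under the tilted time density $\rho(t)=\tfrac{e}{e-1}e^{t-1}$; a single Markov inequality on the joint space plus the change of measure $\rho\leq\tfrac{e}{e-1}$ back to $\Uni[0,1]$ finishes the proof. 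This is not only valid but quantitatively stronger: you obtain success probability $\eps/2$ (and the sharper threshold $(1-\tfrac{e}{e-1}\cdot\tfrac{\eps}{2})V$), versus the paper's $\eps^2/2$, which would reduce the number of repetitions in \Cref{alg:Approximate-or-Separate} from $O(\eps^{-2}\log(T\delta^{-1}))$ to $O(\eps^{-1}\log(T\delta^{-1}))$ and improve the oracle-complexity bounds in \Cref{thm-fractional1} accordingly. One cosmetic remark: the intermediate bound $V(1-c\eps)$ is negative when $\eps>1-1/e$, but in that regime the proposition's hypothesis is vacuous since $F(\vec{1}-\exp(-\vec{x}))\geq 0$, so nothing breaks.
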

\end{wrapper}

\begin{proof} 
Fix time $t\in [0,1]$. By memorylessness of exponentials, up to $O(dt^2)$ terms, for any set~$S\subseteq E$:
\begin{align*}
    \E[f(S(t+dt))-f(S(t)) \mid S(t)=S] & = \bigg[\sum_{i\in E}f(i \mid S) \cdot x_i \bigg]dt.
\end{align*}
So, by total expectation over the set $S(t)$, in the limit as $dt$ tends to zero,  $\phi(t)\triangleq \E[f(S(t))]$ satisfies:
\begin{align*}
    \phi(t) + \frac{d \phi(t)}{dt} 
& = \E\bigg[f(S(t))+\sum_{i\in E}f(i \mid S(t)) \cdot x_i \bigg].
\end{align*}

Now, if $\phi(t)+\phi'(t)\geq U$ for all $t\in [0,1]$, since $\phi(0)\geq 0$, then $\phi(t)$ is the solution of an ordinary differential inequality with minimal solution $\phi(t) \geq (1-\exp(-t))\cdot U$, and so $\phi(1)\geq (1-1/e)\cdot U$.
For example, by definition of $f^*$, this holds with $U=f^*(\vec{x})$.
Since each element $i$ belongs to $S(1)$ with probability $\Pr[Y_i\leq 1]=1-\exp(-x_i)$, we have that $\phi(1)=F(\vec{1}-\exp(-\vec{x}))$, this concludes Vondr\'ak's proof of \Cref{lem:F>=(1-1/e)f*}. 
Now, since by hypothesis $\phi(1)=F(\vec{1}-\exp(-\vec{x})) < (1-1/e)(1-\eps)\cdot V$, intuitively $\phi'(t)\geq U-\phi(t)$ for $U=(1-\eps)V$ should not hold for all, or even a high enough fraction of, $t$. 

To formalize this intuition, consider the ``log-gap'' function,\footnote{We thank Kalen Patton for suggesting this function, which streamlines the proof presented here.} $$\psi(t)\triangleq -\ln(V(1-\eps)-\phi(t)).$$ By the chain rule, 
$$\psi'(t)=\frac{\phi'(t)}{V(1-\eps)-\phi(t)}.$$
This derivative is non-negative, by monotonicity of $f$ and hence $\phi$, together with the hypothesis implying that $\phi(t)\leq \phi(1) = F(\vec{1}-\exp(-\vec{x})) < (1-1/e-\eps)\cdot V < V(1-\eps)$.
If moreover $\phi(t)+\phi'(t)\geq V(1-\eps)$, or equivalently $\phi'(t)\geq V(1-\eps)-\phi(t)(> 0)$, then $\psi'(t)\geq 1$. Thus, $$\psi'(t)\geq \mathds{1}[\phi(t)+\phi'(t)\geq V(1-\eps)].$$
But since $\phi(1) \leq (1-1/e-\eps)\cdot V$ by the hypothesis, and $\phi(0)\geq 0$ by non-negativity of $f$, we have, 
\begin{align*}\psi(1)-\psi(0) & \leq -\ln(V/e)+\ln(V(1-\eps)) = 1+\ln(1-\eps) \leq 1-\eps.
\end{align*}
Combining the preceding steps, we have that indeed $\Pr_t[\phi(t)+\phi'(t) < V(1-\eps)]\geq \eps$, since
\begin{align*}\Pr_{t}[\phi(t)+\phi'(t) \geq V(1-\eps)] 
& \leq \int_0^1 \psi'(t)\;dt =\psi(1)-\psi(0) \leq 1-\eps. \end{align*}
But by Markov's inequality, for any fixed $t\in[0,1]$  satisfying $\phi(t) + \phi'(t) \leq V(1-\eps)$,  
\begin{align*}
\Pr_{S\sim S(t)}\bigg[f(S)+\sum_{i\in E}f(i \mid S) \cdot x_i & \geq V\left(1-\eps/2\right)\bigg] \leq 
\frac{1-\eps}{1-\eps/2} \leq 1-\eps/2.\end{align*}
So, with probability at least $\eps\cdot \eps/2$, the set $S(t)$ for $t\sim \Uni[0,1]$ satisfies the desired inequality. 
\end{proof}

\begin{rem}
     A statement similar to  \Cref{prop-mainsep} was given by \cite[Lemma 4.1]{gupta2020online}. Unfortunately is too weak for our purposes, as it is not stated in general enough terms (for them, $V=f(E)$). Our algorithm is also faster, requiring  $O_\eps(n)$ oracle calls, as opposed to their $\tilde{O}_\eps(n^2)$. Most importantly, our proposition yields an optimal $(1-1/e)$ factor, as opposed to their $1/e$ factor.
\end{rem}

We are now ready to formally present our new meta-algorithm: \Cref{alg:Approximate-or-Separate} receives a non-negative monotone submodular function $f$, a non-empty separable polytope $\calP\subseteq [0,1]^E$ and values $V\in \R_+$ and $\eps,\delta\in (0,1)$. The algorithm guarantees that if $V\leq \max_{\vec{x}\in \calP}f^*(\vec{x})$, then with probability $1-\delta$ it outputs a vector $\vec{x}\in \calP$ such that $F(1-e^{\vec{x}})\geq (1-1/e-\eps)\cdot V$.

To this end, let $ALG$ be any cutting plane method, requiring at most $T$ cutting planes to solve $\calK(V)\triangleq \calP \cap \{\vec{x}\mid f^*(\vec{x})\geq V\}$.\footnote{For simplicity, the reader is recommended to consider a deterministic $ALG$, though one can use a randomized algorithm and union bound over its error probability and ours.}
That is, after querying at most $T$ vectors $\vec{x}$ for which we find some hyperplane separating $\vec{x}$ from $\calK(V)$ if $\vec{x}\notin \calK(V)$, $ALG$ either determines correctly that $\calK(V)=\emptyset$, or finds a point $\vec{x}\in \calK(V)$.
Our generic algorithm simply executes $ALG$ enough steps and uses \Cref{prop-mainsep} (with standard repetition argument to boost the success probability) to find a valid separating hyperplane for ALG whenever $F(\vec{1}-e^{\vec{x}}) <  (1-1/e-\eps)\cdot V$. The algorithm either terminates early if it finds $\vec{x}\in \calP$ and $F(\vec{1}-e^{\vec{x}})$ a good enough approximation (even if $\vec{x} \not \in \calK(V)$), or it finds another separating hyperplane for $\calK(V)$ (with good probability). Hence, similarly to \Cref{prop-mainsep}, we also name our \Cref{alg:Approximate-or-Separate} {\em Approximate-or-Separate}. 

\begin{algorithm}[t]
	\caption{Approximate-or-Separate $(f:2^E\to \R_+,\; \calP\subseteq [0,1]^E,\; \eps,\delta\in (0,1],\; V\in \R_+)$}\label{alg:Approximate-or-Separate}
 \begin{algorithmic}[1]
    \For{$T$ steps needed by a cutting plane-based method $ALG$ on $\calK(V)\triangleq \calP \cap \{\vec{x}\mid f^*(\vec{x}) \geq V\}$} 
    \If{$ALG$ queries $\vec{x}\in[0,1]^{E}$}
    \If{$\vec{x}\notin \calP$}
    \State Provide $ALG$ with a hyperplane separating $\vec{x}$ from $P$.
    \Else 
    \For{$O(\eps^{-2} \cdot \log(T \cdot \delta^{-1}))$ iterations}
    \State Let $S \triangleq \{i\in E \mid  Y_i \leq t\}$, for independent $t\sim \Uni[0,1]$ and $Y_i\sim \Exp(x_i)$ $\forall i\in E.$ \label{step:sample}
    \If{$f(S) + \sum_{i\in E} f(i \mid S)\cdot x_i < \left(1-\eps\right) \cdot V$}
\State Provide $ALG$ with the separating hyperplane
$f(S) + \sum_{i\in E} f(i \mid S)\cdot x_i \geq V$.
\EndIf
\EndFor
\EndIf
\If{no separating hyperplane found in this step of $ALG$}
\State \textbf{Return} $\vec{x}$.
\EndIf
\EndIf
\EndFor
\State \textbf{Return} $\calK(V)=\emptyset$. 
\end{algorithmic}	
\end{algorithm}

\begin{thm}\label{thm-fractional1}
Let $f\colon 2^{E} \rightarrow \R_+$ be a non-negative monotone submodular function
and $\calP\subseteq [0,1]^E$ be a separable polytope with $|E|=n$.
Then, whenever \Cref{alg:Approximate-or-Separate} outputs a vector $\vec{x}$, then $\vec{x}\in \calP$ (always) and $F(\vec{x})\geq F(\vec{1}-\exp(-\vec{x})) \geq  (1-1/e-\eps) \cdot V$ with probability $1-\delta$. Moreover, if $V\leq \max_{\vec{x}\in \calP}f^*(\vec{x})$ then the algorithm always outputs a vector $\vec{x}$.
\Cref{alg:Approximate-or-Separate} requires $O(\frac{nT}{\eps^2}\cdot \log (\delta^{-1}T))$ value oracle queries to $f$ and at most $T$  separation oracle queries to $\calP$, where $T$ is an upper bound on the number of separation queries required by $ALG$ on $\calK(V)$.
\end{thm}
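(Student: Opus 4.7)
The plan is to verify four claims in turn: (a) deterministic feasibility $\vec{x}\in \calP$ and the coordinate-wise bound $F(\vec{x})\geq F(\vec{1}-\exp(-\vec{x}))$; (b) the approximation guarantee holds with probability at least $1-\delta$; (c) whenever $V\leq \max_{\vec{x}\in\calP} f^*(\vec{x})$, the algorithm never reaches the $\calK(V)=\emptyset$ line; and (d) the two query-complexity bounds. Claim (a) is immediate from the algorithm's structure: a vector is returned only after the separation oracle for $\calP$ has certified $\vec{x}\in \calP$, and the inequality $F(\vec{x})\geq F(\vec{1}-\exp(-\vec{x}))$ follows from monotonicity of $F$ (since $f$ is monotone) together with the elementary fact $1-e^{-y}\leq y$ for $y\in[0,1]$.

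For (b), I would isolate a single outer iteration of $ALG$ at which the algorithm returns $\vec{x}$, and show that the probability of returning a vector with $F(\vec{1}-\exp(-\vec{x}))<(1-1/e-\eps)V$ in this iteration is at most $\delta/T$. This is where Proposition~\ref{prop-mainsep} does the work: after a constant-factor rescaling of $\eps$ (chosen to reconcile the proposition's conclusion at level $(1-\eps/2)V$ with the algorithm's threshold $(1-\eps)V$), every independent trial in the inner sampling step yields, with probability $\Omega(\eps^2)$, a separating hyperplane that passes the algorithm's check. Running the inner loop $\Theta(\eps^{-2}\log(T/\delta))$ times drives the chance of missing all such hyperplanes below $\delta/T$, and a union bound over the at most $T$ outer iterations gives total failure probability at most~$\delta$.

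For (c), suppose $\calK(V)\neq\emptyset$. The key observation is that every hyperplane supplied to $ALG$ is a valid separator for $\calK(V)$: any separator of $\vec{x}$ from $\calP$ also separates it from $\calK(V)\subseteq \calP$, and whenever the algorithm supplies a hyperplane $f(S)+\sum_i f(i\mid S) x_i \geq V$, the triggering inequality forces $f^*(\vec{x})\leq f(S)+\sum_i f(i\mid S) x_i < V$, so $\vec{x}\notin \calK(V)$. Since $ALG$ is correct on $\calK(V)$ given valid cuts and $\calK(V)$ is non-empty, $ALG$ must within $T$ outer iterations query some $\vec{x}\in \calK(V)\subseteq \calP$ for which no inner-loop separator can be produced; the algorithm then returns this $\vec{x}$, and the $\emptyset$-return is never reached. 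Claim (d) is a direct count: each of the $T$ outer iterations uses one separation oracle call to $\calP$, and when $\vec{x}\in \calP$, runs $\Theta(\eps^{-2}\log(T/\delta))$ inner trials, each evaluating $f(S)$ and the $n$ marginals $f(i\mid S)$ for a cost of $O(n)$ value-oracle queries, yielding the stated $O(nT\eps^{-2}\log(\delta^{-1}T))$ value-oracle and $T$ separation-oracle bounds.

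The main obstacle I anticipate is the threshold calibration in step (b): Proposition~\ref{prop-mainsep} delivers a separator at level $(1-\eps/2)V$, while the algorithm tests at $(1-\eps)V$, and one must be careful that rescaling $\eps$ by a constant factor (to line up hypothesis, conclusion, and check threshold) still yields the claimed $(1-1/e-\eps)$ approximation after the constants are absorbed. The remaining ingredients—the amplification, the union bound, validity of supplied cuts, and the counting of oracle calls—are routine, as is the minor adjustment needed if the underlying $ALG$ is itself randomized (its failure probability folds into $\delta$).
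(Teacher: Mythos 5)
Your proposal is correct and follows essentially the same route as the paper: feasibility and $F(\vec{x})\geq F(\vec{1}-\exp(-\vec{x}))$ are immediate from the algorithm's structure and monotonicity, \Cref{prop-mainsep} plus $\Theta(\eps^{-2}\log(T/\delta))$-fold amplification and a union bound over the $\leq T$ outer iterations give the $1-\delta$ guarantee, validity of the supplied cuts for $\calK(V)$ gives the non-emptiness claim, and the oracle counts are direct. Your explicit flagging of the mismatch between the proposition's $(1-\eps/2)V$ conclusion and the algorithm's $(1-\eps)V$ test threshold is a point the paper glosses over, and your proposed constant-factor rescaling of $\eps$ is the right fix.
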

\begin{proof}
If $V\leq \max_{\vec{x}\in \calP}f^*(\vec{x})$, then $\calK(V)=\calP \cap \{\vec{x} \mid f^*(\vec{x}) \geq V\}\neq \emptyset$. In this case, the algorithm would never report that $\calK(V)=\emptyset$.\footnote{Here we assume that $ALG$ is deterministic, but if it is a Monte Carlo randomized algorithm with error probability $\delta$, then by the union bound our error probability would at most double.} By \cref{prop-mainsep}, whenever the algorithm queries $\vec{x}\in \calP$ with $F(\vec{1}-\exp(-\vec{x}))<(1-1/e-\eps) \cdot V$, it will find a separating hyperplane with respect to the polytope $\calP \cap \{\vec{x} \mid f^*(\vec{x}) \geq V\}$ with probability $1-{\delta}/{T}$. This follows as the probability of success is $\Omega(\eps^2)$ and the algorithm makes $O(\eps^{-2} \cdot \log(T\delta^{-1}))$ independent calls to \cref{prop-mainsep}. As there are at most $T$ such steps, the probability that in any of these rounds the algorithm fails to provide a separating hyperplane is at most $\delta$, by the union bound. Hence, the algorithm must output after at most $T$ steps a vector $\vec{x}\in \calP$ such that $F(\vec{x})\geq F(\vec{1}-\exp(-\vec{x})) \geq (1-1/e-\eps) \cdot V$. Finally, checking the constraint for each random sample $S$ requires $n+1$ value oracle calls to $f$.
\end{proof}

By standard guessing of the optimal value $V$ up to a $(1-\eps)$ factor (see \Cref{sec:deferred-fractional}) combined with  \Cref{alg:Approximate-or-Separate}, we obtain our meta-algorithm for constrained submodular maximization.

\begin{restatable}{thm}{guessing}\label{thm-fractional2}\label{thm:guessing}
Let $f\colon 2^{E} \rightarrow \R_+$ be a normalized monotone submodular function, $\calP\subseteq [0,1]^E$ a separable polytope with $|E|=n$, and $ALG$ be a cutting plane method that requires $T$ or fewer separation queries. Then, there is an algorithm that for any $\eps>0$, outputs a vector $\vec{x}\in P$ satisfying $\E[F(\vec{x})]\geq (1-1/e-O(\eps))\cdot \max_{\vec{x}\in \calP}F(\vec{x})$.
The algorithm requires $\tilde{O}(T)$ separation oracle queries to $\calP$,
and  $\tilde{O}({n \cdot T}\cdot {\eps^{-2}})$ value oracle queries to $f$, where $\tilde{O}$ hides polylogarithmic terms in $n,T$ and~$1/\eps$.
\end{restatable}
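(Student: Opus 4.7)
My plan is a standard geometric-search-over-the-optimum reduction to \Cref{alg:Approximate-or-Separate}, whose guarantees are captured by \Cref{thm-fractional1}. Writing $OPT := \max_{\vec{x}\in \calP} F(\vec{x})$, I would invoke \Cref{alg:Approximate-or-Separate} on a logarithmic-size geometric sequence of guesses $V$ for $OPT$, and output the vector $\vec{x}$ of largest $F$-value returned by any call (with $F$ evaluated via sampling to negligible additive error).

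A preparatory step is to verify that $F \leq f^*$ pointwise on $[0,1]^E$ for monotone submodular $f$, so that any guess $V \leq OPT$ automatically satisfies the precondition $V \leq \max_{\vec{x}\in \calP} f^*(\vec{x})$ of \Cref{thm-fractional1}. This follows directly from submodularity: for $R$ independently sampled with marginals $\vec{x}$ and any $S \subseteq E$, submodularity gives $f(R) \leq f(S) + \sum_{i \in R \setminus S} f(i\mid S)$; taking expectations and using $f(i\mid S) = 0$ for $i \in S$ (by monotonicity) yields $F(\vec{x}) \leq f(S) + \sum_i f(i\mid S)\,x_i$, and minimizing over $S$ gives $F(\vec{x}) \leq f^*(\vec{x})$.

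Next I would fix a polynomial range for $OPT$ and run the geometric sweep. With $M := \max_{i \in E} f(\{i\})$, subadditivity of monotone submodular $f$ (using $f(\emptyset)=0$) gives $OPT \leq f(E) \leq n\,M$, so $V_{\max} := n\,M$ upper-bounds $OPT$. I would then guess $V$ in a geometric sequence of ratio $1-\eps$ from $V_{\max}$ down to some $V_{\min}$ that is a $\poly(n/\eps)$ fraction of $V_{\max}$, chosen small enough that the case $OPT < V_{\min}$ contributes only additive $O(\eps)\,V_{\max}$, absorbed into the $O(\eps)$ slack. This yields $O(\eps^{-1} \log(n/\eps))$ guesses; for each, I would invoke \Cref{alg:Approximate-or-Separate} with failure probability scaled down by the number of guesses, so that a union bound caps overall failure at the target level. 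Letting $V^*$ be the largest guess with $V^* \leq OPT$, geometric spacing gives $V^* \geq (1-\eps)\,OPT$, and by the preparatory step $V^* \leq \max_{\vec{x}\in\calP} f^*(\vec{x})$, so \Cref{thm-fractional1} returns with high probability some $\vec{x} \in \calP$ with $F(\vec{x}) \geq (1-1/e-\eps)\,V^* \geq (1-1/e-O(\eps))\,OPT$. Multiplying the per-guess complexity of \Cref{thm-fractional1} by the $\tilde O(1)$ many guesses yields the stated $\tilde O(T)$ separation queries to $\calP$ and $\tilde O(nT/\eps^{2})$ value queries to $f$.

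The main obstacle is the mild bookkeeping around the range of $V$: a clean polynomial lower bound on nonzero $OPT$ is not entirely automatic for arbitrary $\calP$ (e.g., when $\calP$ excludes every singleton). The standard remedy---either treating the $OPT < V_{\min}$ regime via a trivial default vector, or using a preliminary feasibility probe on $\calP$ to establish a baseline---adds only polylogarithmic overhead, absorbed in $\tilde O$ and not affecting the stated bounds.
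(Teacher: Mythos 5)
Your reduction is the same one the paper uses (geometric guesses for $OPT$, each fed to \Cref{alg:Approximate-or-Separate}, with the observation $F\le f^*$ guaranteeing non-emptiness of $\calK(V)$ for $V\le OPT$), and the approximation analysis is sound. However, there is a genuine gap in the complexity accounting. You run \Cref{alg:Approximate-or-Separate} on \emph{all} $N=O(\eps^{-1}\log(n/\eps))$ guesses and assert that this is ``$\tilde O(1)$ many guesses.'' Under the theorem's convention, $\tilde O$ hides only terms polylogarithmic in $n$, $T$ and $1/\eps$, so $N=\Theta(\eps^{-1}\log(n/\eps))$ is \emph{not} $\tilde O(1)$: your sweep costs $O(T\eps^{-1}\log(n/\eps))$ separation queries and $\tilde O(nT\eps^{-3})$ value queries, missing the stated $\tilde O(T)$ and $\tilde O(nT\eps^{-2})$ bounds by a $\Theta(1/\eps)$ factor. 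The paper avoids this by \emph{binary searching} over the $N$ guesses rather than trying them all: since \Cref{thm-fractional1} guarantees that the algorithm always returns a vector whenever $V_k\le OPT\le\max_{\vec{x}\in\calP}f^*(\vec{x})$, the largest $V_k$ at which a vector is returned is at least $(1-\eps)\,OPT$, and binary search locates it with only $O(\log N)$ executions, each of whose returned vectors carries the $(1-1/e-\eps)V_k$ guarantee with probability $1-\delta$ for $\delta=O(\eps/\log N)$. This is the missing idea; with it your argument matches the stated bounds.

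A secondary, smaller issue: your claim that the regime $OPT<V_{\min}$ ``contributes only additive $O(\eps)V_{\max}$, absorbed into the $O(\eps)$ slack'' does not parse, because the slack in the theorem is multiplicative in $OPT$, not in $V_{\max}$; an additive $\eps V_{\max}$ loss can dwarf $OPT$ when $OPT\ll V_{\max}$. The paper sidesteps this entirely by assuming $\vec{e}_i\in\calP$ for all $i$, which pins $OPT\in[\max_i f(\{i\}),\,n\max_i f(\{i\})]$ and makes $O(\eps^{-1}\log n)$ guesses suffice with no low-end truncation. You do flag this as an obstacle, but the remedy you sketch (defaulting or a feasibility probe) needs to preserve the multiplicative guarantee, not merely an additive one.
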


In subsequent sections we illustrate this meta-algorithm's use for submodular maximization in various computational models, starting with polytime submodular objective chasing.

\section{Chasing Submodular Objectives Fractionally}\label{sec:chasing-optimally}

In this section we draw the connection between chasing submodular objectives and chasing positive bodies. We first define a fractional relaxation of the submodular function chasing problem. 
\begin{wrapper}\textbf{The Fractional Submodular Objectives Chasing Problem:}\\ 
	Initially, we are given a polytope $\calP \subseteq [0,1]^{E}$.
	Then, at each time $t=1, \ldots, T$ we are revealed a
	subset $E_t \subseteq E$ defining $\calP_t\triangleq \calP\cap \{\vec{x}\mid x_i=0 \textrm{ for all }i\not\in E^t\}$, a non-negative monotone submodular function $f_t:2^{E_t}\to~\mathbb{R}_+$,  
	and a desired value $V_t \leq OPT_t\triangleq \max_{S\subseteq E_t:\mathds{1}_S\in \calP_t}f_t(S)$. We must then output (before time $t+1$) a vector $\vec{x}^t\in \calP_t$ that is \emph{$\alpha$-approximate}, $\E[F(\vec{x}^t)]\geq \alpha\cdot V_t$, while using (ideally low) \emph{$c$-competitive} recourse, i.e., $\sum_{t=1}^{T}\|\vec{x}^t-\vec{x}^{t-1}\|_1\leq c\cdot OPT_R$, for $OPT_R$ the optimal (offline) recourse, 
	$$OPT_R\triangleq\min\left\{\sum_{t=1}^{T}\|\vec{x}^t-\vec{x}^{t-1}\|_1 \;\middle\vert\; \forall t: S_t\subseteq E_t:\mathds{1}_S\in \calP_t,\; f_t(S_t)\geq V_t \right\}.$$
\end{wrapper}

For the above problem with packing polytopes $\calP$, our non-polynomial \Cref{alg:chasing-slowly} is a simple reduction to positive body chasing (\cref{thm-fractional-chasing}), analyzed in part using \Cref{lem:F>=(1-1/e)f*}. It serves as a warm-up to our faster algorithm, below.\footnote{As an aside, for slowly chasing submodular objectives we propose the name \emph{trailing submodular objectives}.}

\begin{algorithm}[H]
	\caption{Chasing Submodular Objectives (Slowly)}\label{alg:chasing-slowly}
	\begin{algorithmic}[1]
		\State $\vec{x}^0\gets \vec{0}.$
		\For{time $t$}
		\State Reveal $\calK_t\triangleq \calP_t\cap \{\vec{x} \in \R_+ \mid f_t^*(\vec{x})\geq V_t\}$ to the algorithm of \Cref{thm-fractional-chasing} to get vector $\vec{x}^t$.
		\EndFor
	\end{algorithmic}	
\end{algorithm}	
\begin{thm}\label{lem-main-fractional}
	For any $\eps>0$ and packing polytope $\calP\subseteq [0,1]^E$,  \Cref{alg:chasing-slowly} is a deterministic $(1-1/e-\eps)$-approximate $O(\eps^{-1}\log(\max_t |E_t|\cdot \eps^{-1}))$-competitive algorithm for the fractional submodular objective chasing problem. Moreover, we have the stronger approximation guarantee that $F(\vec{1}-\exp(-\vec{x})) \geq (1-1/e-\eps)\cdot V_t$.     
\end{thm}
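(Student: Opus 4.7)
The plan is to reduce the fractional submodular objective chasing problem to the positive body chasing problem from \Cref{thm-fractional-chasing}. The key observation is that $\calK_t = \calP_t \cap \{\vec{x} \in \mathbb{R}_+^E \mid f_t^*(\vec{x}) \geq V_t\}$ is itself a packing-covering polytope: the packing part is inherited from $\calP_t$ (with the domain restriction $x_i = 0$ for $i \notin E_t$ added as a packing constraint, e.g., $M \cdot x_i \leq 1$ for large $M$, or handled by running on the restricted coordinate subspace), while unrolling $f_t^*(\vec{x}) \geq V_t$ via \Cref{def:wolsey} yields the covering family
\[
f_t(S) + \sum_{i \in E_t} f_t(i \mid S) \cdot x_i \;\geq\; V_t \qquad \text{for all } S \subseteq E_t.
\]
For $S$ with $f_t(S) \geq V_t$ these are trivial given non-negativity; for the rest, dividing through by $V_t - f_t(S) > 0$ puts them in the normalized form $\sum_i C^S_i x_i \geq 1$ with non-negative coefficients $C^S_i = f_t(i \mid S)/(V_t - f_t(S))$, and each such constraint has at most $|E_t|$ nonzero entries, so $d \leq \max_t |E_t|$ in the notation of \Cref{thm-fractional-chasing}.

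Next I would verify that $\calK_t$ is non-empty, which is where the hypothesis $V_t \leq OPT_t$ enters. Picking any $S^* \subseteq E_t$ with $\mathds{1}_{S^*} \in \calP_t$ and $f_t(S^*) \geq V_t$, we have $f_t^*(\mathds{1}_{S^*}) = f_t(S^*) \geq V_t$ (since $f^*$ extends $f$ on integral inputs), so $\mathds{1}_{S^*} \in \calK_t$. \Cref{thm-fractional-chasing} therefore outputs $\vec{x}^t \in \calP_t$ satisfying every normalized covering constraint with slack at most $\eps$; unrolling the normalization and using $V_t - f_t(S) \leq V_t$ yields $f_t(S) + \sum_i f_t(i \mid S)\, x^t_i \geq (1-\eps) V_t$ for every $S$, i.e., $f_t^*(\vec{x}^t) \geq (1-\eps) V_t$.

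Both approximation guarantees then fall out of \Cref{lem:F>=(1-1/e)f*}, which gives the chain
\[
F_t(\vec{x}^t) \;\geq\; F_t(\vec{1}-\exp(-\vec{x}^t)) \;\geq\; (1-1/e)\cdot f_t^*(\vec{x}^t) \;\geq\; (1-1/e)(1-\eps)\cdot V_t \;\geq\; (1-1/e-\eps)\cdot V_t.
\]
For the recourse, \Cref{thm-fractional-chasing} gives competitive ratio $O(\eps^{-1}\log(d\eps^{-1})) = O(\eps^{-1}\log(\max_t |E_t| \cdot \eps^{-1}))$ against the best fractional offline movement through the $\calK_t$'s. Any integral feasible sequence $\{S_t\}$ certifying $OPT_R$ yields the fractional sequence $\{\mathds{1}_{S_t}\}$ with $\mathds{1}_{S_t} \in \calK_t$ and $\|\mathds{1}_{S_t}-\mathds{1}_{S_{t-1}}\|_1 = |S_t \oplus S_{t-1}|$, so the fractional offline optimum is bounded by $OPT_R$, giving the claimed competitive bound.

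The only real subtlety, beyond bookkeeping, is modelling the time-varying domain $E_t$ and the implicit equalities $x_i = 0$ for $i \notin E_t$ inside the strictly packing-covering framework, since equalities are not of the form $P\vec{x} \leq \vec{1}$. The cleanest fix is to append tight packing rows $M x_i \leq 1$ for each $i \notin E_t$; these are respected exactly (not $\eps$-relaxed) by the algorithm of \Cref{thm-fractional-chasing}, since only covering constraints are relaxed in $\calK_t^{1-\eps}$, and they do not affect the parameter $d$ (which only counts covering densities).
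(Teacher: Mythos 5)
Your proposal is correct and follows essentially the same route as the paper's proof: reduce to positive body chasing via the packing-covering polytopes $\calK_t = \calP_t \cap \{\vec{x} \mid f_t^*(\vec{x}) \geq V_t\}$, verify non-emptiness from $V_t \leq OPT_t$, convert the $\eps$-relaxed covering constraints into $f_t^*(\vec{x}^t) \geq (1-\eps)V_t$, and conclude via \Cref{lem:F>=(1-1/e)f*} and the sparsity bound $d \leq \max_t|E_t|$. The additional care you take with normalizing the covering constraints, bounding the fractional offline optimum by the integral one, and modelling the $x_i = 0$ constraints for $i \notin E_t$ fills in details the paper leaves implicit, and does so correctly.
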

\begin{proof}
	Since $\calP$ is a packing polytope and since $f^*_t (\vec{x})=\min_{S\subseteq E_t}(f(S)+\sum_{i\in E_t}f(i\mid S)\cdot x_i)$, with $f_t(i\mid S)\geq 0$ by monotonicity of $f_t$, the $\calK_t$ of \Cref{alg:chasing-slowly} are packing-covering polytopes. Moreover, by our assumption that $V_t \leq \max_{S\subseteq E_t:\mathds{1}_S\in \calP_t}f_t(S)$, these $\calK_t$ are non-empty. 
	We can therefore provide them as input to the (deterministic) positive body chasing algorithm of \Cref{thm-fractional-chasing}, which returns at each time $t$ a vector $\vec{x}^t$ that satisfies for any $S\subseteq E_t$, 
	$$\sum_{i\in E_t}f(i \mid S)\cdot x^t_i \geq (1-\eps)\cdot (V_t-f(S)) \geq (1-\eps) \cdot V_t -f(S).$$ Hence, $f_t^*(\vec{x}^t)\geq (1-\eps) V_t$, and by \Cref{lem:F>=(1-1/e)f*}, we have $F_t(\vec{x}^t)\geq (1-1/e)(1-\eps)V_t\geq (1-1/e-\eps)V_t$.
	Finally, the covering constraints of $\calK_t$, of the form $f(S) + \sum_{i\in E_t} f(i\mid S)\cdot x_i \geq V_t$, have sparsity $\max_t |E_t|$, and so by \Cref{thm-fractional-chasing}, \Cref{alg:chasing-slowly}'s recourse is $O(\eps^{-1}\log (\max_t |E_t|\cdot \eps^{-1}))\cdot OPT_R$.\footnote{Indeed, $OPT_R$ above may even be replaced by the (possibly smaller) optimal \emph{fractional} algorithm's recourse.}
\end{proof}

The following theorem achieves the above guarantees \emph{efficiently}, by building on the Approximate-or-Separate framework, using that \cite{bhattacharya2023chasing} is essentially a cutting plane method.

\begin{thm}\label{lem-main-fractional-fast}
	There is a randomized $(1-1/e-O(\eps))$-approximate $O(\eps^{-1}\log(\max_t |E_t|\cdot \eps^{-1}))$-competitive \emph{polynomial-time} algorithm for the fractional submodular function chasing problem for any $\eps>0$ and matroid polytope $\calP\subseteq [0,1]^E$.
	Moreover, its outputs $\vec{x}_t$ satisfy the stronger approximation guarantee $F_t(\vec{1}-\exp(-\vec{x}^t)) \geq (1-1/e-O(\eps))\cdot V_t$.   
\end{thm}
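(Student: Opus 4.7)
The plan is to implement the slow \Cref{alg:chasing-slowly} efficiently by replacing the intractable exact separation oracle for the covering constraint $f_t^*(\vec{x})\geq V_t$ with the Approximate-or-Separate subroutine of \Cref{prop-mainsep}, used inside (an instantiation of) the chasing positive bodies algorithm. The key observation is that the algorithm of \Cref{thm-fractional-chasing} can be viewed as a cutting-plane method whose per-step operations consist of (i) querying a candidate point for feasibility against $\calK_t$ and (ii) performing a small update when given a violated linear inequality. Hence we can plug in \Cref{alg:Approximate-or-Separate} verbatim as the separation routine.

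Concretely, I would run the chasing algorithm on the polytopes $\calK_t=\calP_t\cap\{\vec{x}\mid f_t^*(\vec{x})\geq V_t\}$, intercepting each separation query at a point $\vec{x}$ as follows. If $\vec{x}\notin\calP_t$, return a separating hyperplane using a standard polynomial-time matroid-polytope separation oracle. Otherwise, invoke the inner sampling loop of \Cref{alg:Approximate-or-Separate} with target value $V_t$, accuracy $\eps$, and failure probability $\delta/(TN)$, where $N$ bounds the number of separations per step used by \Cref{thm-fractional-chasing} and $T$ is the horizon. By \Cref{prop-mainsep} and independent repetition, with probability $1-\delta/(TN)$ this either produces a set $S$ with $f_t(S)+\sum_i f_t(i\mid S)x_i<(1-\eps)V_t$, which is a valid separating hyperplane for $\calK_t$, or certifies $F_t(\vec{1}-\exp(-\vec{x}))\geq (1-1/e-\eps)V_t$, in which case we output $\vec{x}^t=\vec{x}$ and advance to time $t+1$.

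For correctness, a union bound over all (at most) $TN$ calls to Approximate-or-Separate bounds the overall failure probability by $\delta$. Conditioned on no failure, every hyperplane ever handed to the chasing algorithm is a genuine valid separator, so its behaviour is indistinguishable from its behaviour under an exact separation oracle for $\calK_t$. Thus \Cref{thm-fractional-chasing} yields the recourse bound $O(\eps^{-1}\log(\max_t|E_t|\cdot\eps^{-1}))\cdot OPT_R$, with sparsity $\max_t|E_t|$ inherited from the covering constraints of $f_t^*$ exactly as in the proof of \Cref{lem-main-fractional}. For approximation: at each time $t$, either we terminate early with a point already satisfying the stronger guarantee $F_t(\vec{1}-\exp(-\vec{x}^t))\geq (1-1/e-\eps)V_t$, or the chasing algorithm runs to completion and outputs a point with $f_t^*(\vec{x}^t)\geq (1-\eps)V_t$, in which case \Cref{lem:F>=(1-1/e)f*} yields $F_t(\vec{1}-\exp(-\vec{x}^t))\geq (1-1/e-O(\eps))V_t$ as well. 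Polynomial runtime is immediate from the polynomial bound on $N$, the polynomial-time matroid separation, and the $O(\eps^{-2}\log(T/\delta))$ oracle calls per invocation of \Cref{alg:Approximate-or-Separate}.

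The main obstacle I expect is justifying the claim that the algorithm of \cite{bhattacharya2023chasing} does fit the cutting-plane template with a polynomial $N$: one must inspect its internal (multiplicative-weights style) updates to verify that each step is driven by a single violated linear inequality and that the number of such updates per time step is polynomial in $n$ and $\eps^{-1}$. Once this is made precise, the integration with \Cref{alg:Approximate-or-Separate} is essentially a black-box combination, and the remaining bookkeeping (union bound, early termination, matroid separation) is routine.
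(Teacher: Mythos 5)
Your high-level plan is the same as the paper's (use the chasing-positive-bodies algorithm as the cutting-plane method and plug in \Cref{prop-mainsep} as the separator for the Wolsey constraint), and your correctness/recourse/approximation bookkeeping conditioned on a polynomial bound $N$ on the number of separations per step is fine. But the step you defer as ``the main obstacle I expect'' is in fact the crux of the theorem, and your proposal does not close it. The paper bounds the number of cutting planes not by inspecting the multiplicative-weights internals for an a priori polynomial bound (none is known for \cite{bhattacharya2023chasing} in general), but by ensuring that \emph{every} violated constraint fed to the algorithm requires $\Omega(\eps)$ recourse to repair; since the algorithm's total recourse is already bounded by $O_\eps(rT\log\max_t|E_t|)$, this caps the number of separations at $O_\eps(rT\log\max_t|E_t|/\eps)$.

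Achieving that $\Omega(\eps)$-per-violation guarantee requires two devices absent from your proposal. First, for the matroid constraints one cannot separate exactly (a constraint violated by an arbitrarily small amount costs arbitrarily little to fix); the paper instead only separates constraints violated multiplicatively by $(1+\eps)$, found via unconstrained submodular minimization of $g(S)=(1+\eps)\cdot rank(S)-\sum_{i\in S}x_i$, and repairs the resulting infeasibility by a final rescaling $\vec{x}^t\gets\vec{y}^t/(1+\eps)$ analyzed via \Cref{fact:ML-scaling}. Second, and more subtly, the Wolsey constraint $f(S)+\sum_i f(i\mid S)x_i\geq V_t$ returned by \Cref{prop-mainsep} can have coefficients $f(i\mid S)\gg V_t$, so fixing an $\eps V_t/2$ additive violation may cost only $\eps V_t/(2\max_i f(i\mid S))$ recourse, which can be arbitrarily small. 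The paper strengthens the hyperplane to $f(S)+\sum_i\min\{V_t-f(S),\,f(i\mid S)\}\cdot x_i\geq V_t$ --- valid because the offline benchmark is integral and every integral point satisfying the original constraint satisfies the truncated one --- which forces $\|\vec{x}-\vec{x}'\|_1=\Omega(\eps)$ per repair. Without these two modifications the union-bound/early-termination argument you give yields correctness but not polynomial running time, so the proof as written has a genuine gap.
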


\begin{proof}
	The proof of \Cref{lem-main-fractional} established that $\calK_t=\calP_t\cap \{\vec{x} \in \R_+ \mid f_t^*(\vec{x})\geq V_t\}$ are valid input polytopes for \cite{bhattacharya2023chasing}, and therefore so are polytopes defined by subsets of the constraints defining $\calK_t$. 
	Next, we show how to combine ideas from \Cref{sec:polytime-separate}
	to derive a randomized polynomial-time algorithm when $\calP=\calP({\cal M})$ is the matroid polytope of ${\cal M}$.
	To this end, we unbox \Cref{thm-fractional-chasing} (specifically, quantifying and improving arguments in \cite[Appendix A]{bhattacharya2023chasing}).
	
	We observe that the positive bodies chasing algorithm in \cite{bhattacharya2023chasing} requires only a separation oracle to $\calK^{1-\eps}_t$. 
	Hence, we may indeed directly use at any time $t$ the framework presented in \Cref{sec:polytime-separate}. However, there are two obstacles that we address in the following. First, the algorithm in \cite{bhattacharya2023chasing} is only guaranteed to output a solution that approximately satisfies the constraints, and second, there is no known polynomial-time bound on the number of separating hyperplanes required by that algorithm (and hence on its running time). We address both problems.
	
	\cite{bhattacharya2023chasing}
	reduce chasing positive bodies to chasing positive \emph{halfspaces}, i.e., chasing polytopes given by a single packing or covering constraint, $\langle \vec{a},\vec{x}\rangle\leq 1$ or $\langle \vec{a},\vec{x}\rangle\geq 1$ for some $\vec{a}\geq \vec{0}$, with the halfspaces used by the algorithm at time $t$ given by some constraints of $\calK_t$. 
	Thus, any solution for $\calK_1,\calK_2,\dots,\calK_T$ induces a solution for the halfspaces chased with the same recourse.
	In particular, for $r\leq n$ the rank of the matroid, this optimal (offline) recourse is at most $2rT$ (replacing all $r$ elements in every step). 
	For any violated constraint on a current point $\vec{x}$, they provide a polynomial-time solvable convex program (equivalent to a simple multiplicative update) to determine the next point $\vec{x}'$ satisfying the current constraint, while obtaining an $O_\eps(\log \max_t|E_t|)$-competitive recourse, and by the above using at most $O_\eps(rT\log\max_t|E_t|)$ recourse.
	A single update step in \cite{bhattacharya2023chasing} that computes $\vec{x}'$ such that $\langle \vec{a},\vec{x}'\rangle= 1$ requires finding a single dual variable that is used for the multiplicative update. 
	This can be done using a binary search and requires time that is linear in $n$ and logarithmic in the maximum and minimum non-zero coefficients $a_i$.

	To obtain a polynomial-time bound from this (as opposed to the finite bound stated in general in \cite{bhattacharya2023chasing}), we give polynomial-time algorithms to find  violated constraints such that moving to a point satisfying it incurs $\Omega(\eps)$ recourse, which by the recourse bound above can only happen $T_{ALG}\triangleq O_\eps((rT\log\max_t|E_t|)/\eps)$ times.

	\paragraph{Finding a (very) violated constraint in the Matroid Polytope:}
	We allow the algorithm momentarily to slightly violate the matroid constraints and only require for all $S\subseteq E_t$, $\sum_{i\in S} x^t_i \leq (1+\eps)\cdot rank(S)$ (and fix this later on in the proof). Next, by solving unconstrained submodular minimization \cite{LSW15,jiang2020improved} of the function $g(S) \triangleq (1+\eps)\cdot rank(S) - \sum_{i\in S} x_i$, with our current vector $\vec{x}$, we either decide that all the relaxed matroid constraints are satisfied, or find a subset $S$, such that $\sum_{i\in S} x_i \geq (1+\eps)\cdot rank(S)$.
	In the latter case, moving to any point $\vec{x}'$ that satisfies the violated constraint $\sum_{i\in S} x_i \leq rank(S)$ incurs recourse $\|\vec{x}-\vec{x}'\|_1 \geq \eps \cdot rank(S) \geq \eps$.    
	\paragraph{Finding a violated constraint in the Wolsey Polytope:}
	Using the framework of \Cref{sec:polytime-separate} we only find such a violated constraint whenever $F_t(\vec{1}-\exp(-\vec{x})) < (1-1/e-\eps)\cdot V_t$, where $\vec{x}$ is our current vector, as otherwise we may output $\vec{x}$. In this case, we use  \Cref{prop-mainsep} to sample 
	$s\triangleq\lceil \eps^{-2}\log (T_{ALG}/\eps))\rceil$ sets, for $T_{ALG}=\Theta_\eps(rt\log\max_t|E_t|)$. As each sampled set $S$ satisfies $f(S) + \sum_{i\in E_t} f(i\mid S)\cdot x_i \leq \left(1-\eps/2\right)\cdot V_t.$ with probability $\Omega(\eps^2)$, we are guaranteed that one of the  sets $S$ satisfies such a violated constraint with probability $1-(1-\eps^{2})^{s}\geq 1-\eps/T_{ALG}$.
	Taking union bound over the $T_{ALG}$ steps of the algorithm in this time-step until we output some vector $\vec{x}$, we find that the probability we fail to witness that $F_t(\vec{1}-\exp(-\vec{x}))<(1-1/e-\eps)\cdot V_t$ at some point of our algorithm where this holds is at most $\eps$, and so $\E[F_t(\vec{1}-\exp(-\vec{x}))]\geq (1-\eps)\cdot (1-1/e-\eps)
	\cdot V_t\geq (1-1/e-2\eps)\cdot V_t$.
	This (together with subsequent scaling, discussed below) implies the desired approximation.
	
	\paragraph{Lower bounding recourse for fixing a violated constraint.} To fix the violated constraint by moving to a point~$\vec{x}'$ satisfying $f(S) + \sum_{i\in E_t} f(i\mid S)\cdot x'_i \geq V_t$ requires recourse 
	$\|\vec{x}-\vec{x}'\|_1\geq \left((\eps/2)\cdot V_t\right)/\max_{i\in E_t} f(i\mid S)$. Unfortunately, this can be very small, as $V_t$ may be small compared to $\max_{i\in E_t} f(i\mid S)$, which can be as large as $OPT_t$. However, we note that every {\bf integral} solution $\vec{x}\in \{0,1\}^E$ that satisfies the Wolsey constraint $f(S) + \sum_{i\in E_t} f(i\mid S)\cdot x_i \geq V_t$ also satisfies the stronger constraint 
	$f(S) + \sum_{i\in E_t} \min\{V_t- f(S), f(i\mid S)\}\cdot x_i \geq V_t$. Hence, as we are competing with the optimal integral recourse, this is a valid constraint for $\calK_t\cap \{\vec{x}\in\{0,1\}^{E_t}\}$, and we may use it as a separating hyperplane.
	Specifically, if we find a set $S\subseteq E_t$ violating the easier constraint, we also find that the stronger constraint is violated, and we may move to a point $\vec{x}'$ satisfying the latter.
	The recourse is now at least 
	\begin{align*}
		\|\vec{x}-\vec{x}'\|_1\geq \left((\eps/2)\cdot V_t\right)/\min\left\{V_t- f(S),\; \max_{i\in E_t} f(i\mid S)\right\} \geq \left((\eps/2)\cdot V_t\right)/V_t= \Omega(\eps).
	\end{align*}
	
	\paragraph{Conclusion.} Finally, we use the above two algorithms at each time $t$ to find halfspaces to provide to the algorithm of \Cref{thm-fractional-chasing} until we find no violated packing or covering constraints. The algorithm is guaranteed to output at any time a point $\vec{y}^t$ satisfying $\sum_{i\in S}y^t_i\leq (1+\eps) \cdot rank(S)$ for each set $S\subseteq E_t$ and with probability $1-\eps/T_{ALG}$ also 
	$F_t(\vec{1}-e^{-\vec{y}^t})\geq (1-1/e-2\eps)\cdot V_t$, and so also $\E[F_t(\vec{1}-e^{-\vec{y}^t})]\geq (1-1/e-O(\eps))\cdot V_t$.
	Finally, taking $\vec{x}^t\gets \vec{y}^t/(1+\eps)$ for all times $t$ satisfies the matroid polytope's constraints, does not increase the $O(\eps^{-1}\log(\max_t |E_t|\cdot \eps^{-1}))$-competitive recourse  of \Cref{thm-fractional-chasing}, and guarantees by monotonicity and \Cref{fact:ML-scaling} that
	\begin{align*}
		F(\vec{1}-e^{-\vec{x}^t})= F\big(\vec{1}-e^{-\frac{\vec{y}^t}{1+\eps}}\big) & \geq F\bigg(\frac{\vec{1}-e^{-\vec{y}^t}}{1+\eps}\bigg) \geq \frac{1}{1+\eps}F(\vec{1}-e^{-\vec{y}^t})\geq (1-1/e-O(\eps))\cdot V_t.\qedhere 
	\end{align*}
\end{proof}
\section{Recourse-Respecting Randomized Rounding}\label{sec:rounding}

In this section we provide a randomized rounding algorithm for submodular objective chasing. 

Our rounding algorithm converts online fractional solutions $\vec{x}^1,\dots,\vec{x}^T$ into randomized integral feasible sets $S_1,\dots,S_T$, of expected submodular value at least $F(\vec{1}-\exp(-\vec{x}^t))$, while respecting the recourse guarantee of the fractional solution; i.e., obtaining no higher (expected) recourse. The guarantees we desire are as follows.

\begin{Def}\label{def:rrrr}
    A cardinality-constrained recourse-respecting randomized rounding receives as inputs vectors $\vec{x}^1,\dots,\vec{x}^T\in [0,1]^E$ online, satisfying $\sum_i x^t_i\leq k$ for all $t$; such an algorithm outputs after each $\vec{x}^t$ is revealed (and before $\vec{x}^{t+1}$ is revealed) a set $S_t\subseteq E_t$ satisfying for all $t:$
\begin{enumerate}[label=(P{{\arabic*}})]
    \item \label{prop:k-uniform} \textbf{(Cardinality)} 
    $|S_t| \leq k$.
    \item \label{prop:distance}\textbf{(Recourse)} $\E[\;| S_t\oplus S_{t-1}|\;]\leq \norm{\vec{x}^t-\vec{x}^{t-1}}_1$.
    \item \textbf{(Marginals)}\label{prop:marginals} $x^t_i\geq \Pr[i\in S_t]\geq 1-(1-x^t_i/k)^k \geq 1-\exp(-x^t_i)$ for all $i\in E$.
    \item \textbf{(NA)} \label{prop:na} The characteristic vector of $S_t$, i.e., $\mathds{1}[S_t]\triangleq (\mathds{1}[i\in S_t] \mid i\in E)$, is negatively~associated.
    \item \label{prop:approx-submod-dom} \textbf{(Value)}
    $\E[f_t(S_t)]\geq F_t\left(\vec{1}-\exp\left(-\vec{x}^t\right)\right)$ for any monotone submodular function $f_t$.
\end{enumerate}
\end{Def}

Note that Property \ref{prop:marginals} implies that the output only contains elements with non-zero $x$-value.

The special case of $k=1$ has been studied in the statistics literature over 70 years ago. Specifically, Keyfitz provided a method with the following guarantees \cite{keyfitz1951sampling}.

\begin{prop}
    There exists a cardinality-constraint recourse-respecting randomized rounding algorithm for $k=1$.
\end{prop}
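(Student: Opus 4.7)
The plan is to construct the rounding via a maximum coupling of the single-element distributions induced at consecutive times; this is essentially the classical construction attributed to Keyfitz. For each $t$ let $\mu_t$ be the distribution on $\{\emptyset\}\cup\{\{i\}\}_{i\in E}$ with $\mu_t(\{i\})=x_i^t$ and $\mu_t(\emptyset)=1-\|\vec{x}^t\|_1$, which is a valid probability distribution since $\sum_i x_i^t\leq 1$. At $t=1$ I would draw $S_1\sim\mu_1$ directly. At each $t\geq 2$, given the realization $S_{t-1}=s$, I would (i) set $S_t=s$ with probability $\min(1,\,\mu_t(s)/\mu_{t-1}(s))$ and (ii) otherwise draw $S_t$ from the residual distribution $r_t(\cdot)/\mathrm{TV}(\mu_{t-1},\mu_t)$, where $r_t(s')\triangleq\max(0,\,\mu_t(s')-\mu_{t-1}(s'))$. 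A short unrolling confirms that the induced joint satisfies $\Pr[S_{t-1}=S_t=s]=\min(\mu_{t-1}(s),\mu_t(s))$ for every $s$ and preserves the target marginal $\Pr[S_t=s]=\mu_t(s)$, i.e.\ it is a maximum coupling.

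I would then verify the five properties of \Cref{def:rrrr} in order. Property \ref{prop:k-uniform} is immediate since the support of $\mu_t$ consists of $\emptyset$ and singletons. Property \ref{prop:marginals} holds with equality: $\Pr[i\in S_t]=\mu_t(\{i\})=x_i^t$, which coincides with $1-(1-x_i^t/k)^k$ when $k=1$. Property \ref{prop:na} is automatic for $k=1$: a $\{0,1\}$-indicator vector whose coordinate sum is at most one is negatively associated, since any two non-trivially monotone functions of disjoint coordinate subsets cannot both evaluate to a ``positive'' case simultaneously.

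For the recourse \ref{prop:distance}, the max-coupling identity gives, for every $i$,
\begin{equation*}
\Pr[i\in S_t\oplus S_{t-1}]
\;=\; x_i^{t-1}+x_i^t - 2\min(x_i^{t-1},x_i^t)
\;=\; |x_i^{t-1}-x_i^t|,
\end{equation*}
so $\E[|S_t\oplus S_{t-1}|]=\sum_i|x_i^{t-1}-x_i^t|=\|\vec{x}^t-\vec{x}^{t-1}\|_1$, matching the required bound (with equality). Finally, property \ref{prop:approx-submod-dom} follows by combining \ref{prop:na} with the standard fact that for a monotone submodular function an NA-distributed set dominates, in expected value, independent sampling with the same marginals~\cite{christofides2004connection,qiu2022submodular}: $\E[f_t(S_t)]\geq F_t(\vec{x}^t)\geq F_t(\vec{1}-\exp(-\vec{x}^t))$, where the last step uses monotonicity of $F_t$ and $1-e^{-x}\leq x$ coordinatewise.

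I do not foresee a substantive obstacle; the only care needed is handling the degenerate cases $\mu_{t-1}(s)=0$ (where the conditional is vacuous, since $s$ is never realized) and $\mathrm{TV}(\mu_{t-1},\mu_t)=0$ (where one sets $S_t=S_{t-1}$ deterministically, so the residual distribution need not be defined).
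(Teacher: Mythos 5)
Your construction is correct and is essentially the same as the paper's: the paper implements Keyfitz's rejection sampling (its Appendix~C, \Cref{alg:keyfitz-rrrr}), which is precisely a maximum coupling of the successive distributions $\mu_{t-1},\mu_t$ on singletons-plus-empty-set, and you verify the same five properties via the same facts (the $0$–$1$ NA rule and submodular dominance). If anything, your description is slightly more careful than the paper's pseudocode, which omits the acceptance probability $\min(1,\mu_t(s)/\mu_{t-1}(s))$ in the case $x_i>y_i$ (though the paper's analysis silently uses it); the paper additionally presents, for interest, a second $k=1$ scheme driven by a single shared $\Uni[0,1]$ random variable and a cascading interval partition, which you do not need.
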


As Keyfitz's algorithm is mainly provided by way of example, for completeness we formalize this algorithm and its analysis in \Cref{app:unit-capacity}. In the same appendix we provide an alternative algorithm for the $k=1$ case requiring only a single $\Uni[0,1]$ random variable, as opposed to drawing such a random variable for each vector, as with Keyftiz's algorithm.

We now show how the $k=1$ case implies our desired guarantees for $k>1$ in general, by independent repetition. This will require the following background on negative association and its role in submodular maximization.

\paragraph{Submodular Dominance and Negative Association.} We require rounding satisfying the constraints and obtaining value at least as high as the multilinear extension (at the point $\vec{y}=\vec{1}-\exp(-\vec{x})$). We will obtain via randomized rounding satisfying \emph{negative association}.

\begin{Def}
    A random vector $\vec{X}$ is \emph{negatively associated (NA)} if for any two monotone functions $f,g$ of disjoint variables in $\vec{X}$, we have $\Cov(f(\vec{X}),g(\vec{X}))\leq 0.$
\end{Def}

As shown by \cite{christofides2004connection,qiu2022submodular}, negative association \cite{joag1983negative} implies \emph{subdmodular dominance}.

\begin{lem}\label{lem:NA->submod}
    If $\vec{X}\in \{0,1\}$ is NA and $f:2^E\to \mathbb{R}$ is submodular, then
    $\E[f(\vec{X})]\geq F(\E[\vec{X}])$.
\end{lem}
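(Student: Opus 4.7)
The plan is to reduce to the independent case by substituting the coordinates of $\vec{X}$ one at a time with independent Bernoulli variables of matching marginals. Let $\vec{Y}=(Y_1,\dots,Y_n)$ be a mutually independent random vector with $\E[Y_i]=\E[X_i]$, drawn independently of $\vec{X}$. By definition of the multilinear extension, $F(\E[\vec{X}])=\E[f(\vec{Y})]$, so it suffices to show $\E[f(\vec{X})]\geq \E[f(\vec{Y})]$. Define the hybrid vectors $\vec{Z}^{(k)}\triangleq(Y_1,\dots,Y_k,X_{k+1},\dots,X_n)$ for $k=0,1,\dots,n$, so that $\vec{Z}^{(0)}$ has the same distribution as $\vec{X}$ and $\vec{Z}^{(n)}$ has the same distribution as $\vec{Y}$. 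The proof will be complete once I establish $\E[f(\vec{Z}^{(k-1)})]\geq \E[f(\vec{Z}^{(k)})]$ for each $k\in\{1,\dots,n\}$ and telescope.

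For this single-step inequality, fix $k$ and let $g(\vec{z}_{-k})\triangleq f(\vec{z}_{-k},1)-f(\vec{z}_{-k},0)$ denote the discrete partial of $f$ in coordinate $k$. Since $f$ is affine in each coordinate,
\begin{align*}
\E[f(\vec{Z}^{(k-1)})]-\E[f(\vec{Z}^{(k)})]=\E\bigl[(X_k-Y_k)\cdot g(Y_1,\dots,Y_{k-1},X_{k+1},\dots,X_n)\bigr].
\end{align*}
Because $Y_k$ is independent of every other variable that appears and shares the mean of $X_k$, this expectation collapses to $\Cov(X_k,\,g(Y_1,\dots,Y_{k-1},X_{k+1},\dots,X_n))$, so it suffices to prove this covariance is non-negative.

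The structural inputs used to prove this non-negativity are as follows. Submodularity of $f$ implies that $g$ is non-increasing in each of its arguments, equivalently $-g$ is coordinate-wise monotone non-decreasing. Meanwhile, the augmented vector $(X_1,\dots,X_n,Y_1,\dots,Y_{k-1})$ is NA, because adjoining a mutually-independent block that is independent of an NA vector preserves NA. Thus $X_k$ (the identity on its own coordinate) and $g(Y_1,\dots,Y_{k-1},X_{k+1},\dots,X_n)$ are functions of disjoint sub-blocks of this augmented NA vector, the former monotone and the latter becoming monotone upon negation. Applying the defining NA inequality yields $\Cov(X_k,-g(\cdots))\leq 0$, i.e., $\Cov(X_k,g(\cdots))\geq 0$, as desired.

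The main obstacle is the bookkeeping in the middle step: at each hybrid level one must verify that the random vector to which NA is applied is itself NA on the relevant disjoint blocks, which rests on the closure of NA under adjoining an independent block and under taking sub-vectors; and one must convert submodularity of $f$ into the coordinate-monotonicity of $g$. Both facts are standard, but handling them carefully is what makes the otherwise clean telescoping argument actually go through.
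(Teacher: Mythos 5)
Your proof is correct. Note that the paper does not actually prove \Cref{lem:NA->submod}; it imports it from \cite{christofides2004connection,qiu2022submodular}, so there is no in-paper argument to compare against. Your hybrid (coordinate-by-coordinate substitution) argument is a valid self-contained derivation and is essentially the standard route in those references: the identity $\E[f(\vec{Z}^{(k-1)})]-\E[f(\vec{Z}^{(k)})]=\Cov(X_k,\,g(W))$ is right because $Y_k$ is independent of $W=(Y_1,\dots,Y_{k-1},X_{k+1},\dots,X_n)$ and has the same mean as $X_k$; the augmented vector is NA by the product-closure property (\Cref{lem:na}, using that a mutually independent block is itself NA); and applying the NA covariance inequality to the non-decreasing pair $X_k$ and $-g(W)$ (with $-g$ non-decreasing by submodularity) gives $\Cov(X_k,g(W))\geq 0$. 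The argument correctly uses only submodularity of $f$, not monotonicity, matching the lemma's hypotheses.
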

In our analysis we will need the following well-known properties of NA vectors \cite{joag1983negative,dubhashi1996balls}.
\begin{lem}\label{lem:na}
The following give rise to NA distributions:
    \begin{enumerate}
        \item \label{lem:0-1-lemma} \textbf{(0-1 rule)} Any random vector $\vec{X}\in \{0,1\}^n$ with $\sum_i X_i\leq 1$ is NA.
        \item \textbf{(Products)} The concatenation $\vec X \circ \vec Y$ of independent NA vectors $\vec X$ and $\vec Y$ is also NA.
        \item \textbf{(Composition)} If $\vec X$ is an NA vector and $f_1,\dots, f_r$ are monotone functions of disjiont variables in $\vec X$, then the vector $(f_1(\vec{X}),\dots,f_r(\vec{X}))$ is also NA.
    \end{enumerate}
\end{lem}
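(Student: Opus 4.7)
The plan is to handle each of the three items separately using the defining covariance inequality for NA, with the common strategy of reducing each claim to an application of NA on a base vector via carefully chosen monotone test functions. For the 0-1 rule, let $\phi,\psi$ be monotone functions of disjoint index sets $A,B\subseteq[n]$. Since covariance is invariant under additive shifts, I would replace $\phi,\psi$ by $\tilde\phi(\vec{X}_A):=\phi(\vec{X}_A)-\phi(\vec{0}_A)$ and $\tilde\psi(\vec{X}_B):=\psi(\vec{X}_B)-\psi(\vec{0}_B)$, which are still monotone and additionally nonnegative. The key observation is that $\sum_i X_i\leq 1$ forces either $\vec{X}_A=\vec{0}_A$ or $\vec{X}_B=\vec{0}_B$ in every realization, so $\tilde\phi\cdot\tilde\psi\equiv 0$ and hence $\Cov(\phi,\psi)=\Cov(\tilde\phi,\tilde\psi)=-\E[\tilde\phi]\cdot\E[\tilde\psi]\leq 0$.

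For Products, given monotone $\phi,\psi$ of disjoint coordinates of $\vec{X}\circ\vec{Y}$, I would condition on $\vec{X}$ and use NA of $\vec{Y}$ and then of $\vec{X}$. For each fixed $\vec{X}$, $\phi$ and $\psi$ are monotone in disjoint coordinates of $\vec{Y}$ (with the $\vec{X}$-coordinates acting as parameters), so NA of $\vec{Y}$ gives $\E_{\vec{Y}}[\phi\psi\mid\vec{X}]\leq \E_{\vec{Y}}[\phi\mid\vec{X}]\cdot\E_{\vec{Y}}[\psi\mid\vec{X}]$. Setting $\Phi(\vec{X}):=\E_{\vec{Y}}[\phi\mid\vec{X}]$ and $\Psi(\vec{X}):=\E_{\vec{Y}}[\psi\mid\vec{X}]$, each is an average of monotone functions of $\vec{X}$ at a fixed $\vec{y}$ and hence monotone; by independence of $\vec{X},\vec{Y}$ each depends only on the $\vec{X}$-coordinates originally used by $\phi$ (resp.\ $\psi$), which are disjoint. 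NA of $\vec{X}$ then gives $\E[\Phi\Psi]\leq \E[\Phi]\E[\Psi]=\E[\phi]\E[\psi]$, completing the chain.

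For Composition, let $g,h$ be monotone functions of disjoint subsets $I,J\subseteq[r]$ of the output coordinates. I would lift them to monotone functions of $\vec{X}$ by setting $G(\vec{X}):=g\bigl((f_i(\vec{X}))_{i\in I}\bigr)$ and $H(\vec{X}):=h\bigl((f_j(\vec{X}))_{j\in J}\bigr)$. Monotonicity of $G$ (resp.\ $H$) follows from monotonicity of $g$ (resp.\ $h$) and of each composed $f_\ell$, and the coordinates of $\vec{X}$ on which $G$ depends are contained in the union of the variable sets of $\{f_i\}_{i\in I}$, with the analogous statement for $H$. Since the $f_\ell$'s use pairwise disjoint variables by hypothesis and $I\cap J=\emptyset$, these two coordinate sets of $\vec{X}$ are disjoint, and NA of $\vec{X}$ yields $\Cov(g,h)=\Cov(G,H)\leq 0$.

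The only subtle point is in the Products step: one must verify that $\E_{\vec{Y}}[\phi\mid\vec{X}]$ is genuinely a monotone function of only the ``correct'' subset of $\vec{X}$-coordinates. Monotonicity is preserved under pointwise integration in $\vec{y}$, and the coordinate restriction uses independence of $\vec{X}$ and $\vec{Y}$, so that if $\phi=\phi(\vec{X}_A,\vec{Y}_B)$ then $\E_{\vec{Y}}[\phi\mid\vec{X}]=\int\phi(\vec{X}_A,\vec{y}_B)\,dP_{\vec{Y}_B}(\vec{y}_B)$ depends on $\vec{X}$ only through $\vec{X}_A$. Once these two facts are in hand, each item becomes a short chain of inequalities built from the NA hypotheses on the base vectors.
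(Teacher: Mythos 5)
Your proofs of all three items are correct; they are the standard arguments from the references the paper cites for this lemma (\cite{joag1983negative,dubhashi1996balls}) -- the paper itself states the lemma without proof. The shift-to-nonnegative trick for the 0--1 rule, the conditioning/tower argument for products, and the lifting argument for composition are exactly the textbook route, and your handling of the one subtle point (that $\E_{\vec Y}[\phi\mid\vec X]$ is monotone and depends only on the correct $\vec X$-coordinates, via independence) is right. The only convention to make explicit is that ``monotone'' in the NA definition means both test functions are non-decreasing (or both non-increasing), which your nonnegativity step in the 0--1 rule implicitly uses.
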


With the above in place, we are now ready to lift any recourse-respecting randomized rounding from the (trivial) $k=1$ case to arbitrary $k\geq 1$.

\begin{lem}
    Let $\calA$ be a cardinality-constrained recourse-respecting randomized rounding algorithm for the $k=1$ case. 
    For each time $t$, let $S_t^{(1)},\dots,S_t^{(k)}$ be the outputs of $k>1$ independent copies of $\calA$ on input stream $\vec{x}^1/k,\dots,\vec{x}^t/k$. Then an algorithm $\calA'$ outputting $S_t\triangleq \bigcup_{i=1}^k S_t^{(i)}$ is a cardinality-constrained recourse-respecting randomized rounding algorithm for $k>1$.
\end{lem}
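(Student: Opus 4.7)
The plan is to verify each of the five properties \ref{prop:k-uniform}--\ref{prop:approx-submod-dom} from \Cref{def:rrrr} for $\calA'$ by lifting the corresponding property from the $k=1$ copies $\calA^{(1)},\dots,\calA^{(k)}$, each run independently on the scaled stream $\vec{x}^1/k,\dots,\vec{x}^T/k$. Note first that the scaled input is valid for a unit-cardinality rounding algorithm, since $\sum_i x^t_i/k \leq 1$ by hypothesis. The cardinality bound \ref{prop:k-uniform} is immediate: $|S_t| \leq \sum_j |S_t^{(j)}| \leq k\cdot 1 = k$. For recourse \ref{prop:distance}, I will use the set-theoretic inclusion $S_t \oplus S_{t-1} \subseteq \bigcup_{j=1}^{k}(S_t^{(j)}\oplus S_{t-1}^{(j)})$ (an element flipping membership in the union must flip in at least one copy), then take expectations and apply \ref{prop:distance} for each copy, obtaining $\E[|S_t \oplus S_{t-1}|] \leq \sum_{j=1}^{k}\|\vec{x}^t/k-\vec{x}^{t-1}/k\|_1 = \|\vec{x}^t-\vec{x}^{t-1}\|_1$.

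For the marginals property \ref{prop:marginals}, I will use independence of the $k$ copies together with the marginals guarantee on each copy to write
\[
\Pr[i\in S_t] \;=\; 1-\prod_{j=1}^{k}\Pr[i\notin S_t^{(j)}].
\]
Property \ref{prop:marginals} applied with $k=1$ on input $\vec{x}^t/k$ gives $\Pr[i\in S_t^{(j)}]=x_i^t/k$, so $\Pr[i\in S_t]=1-(1-x_i^t/k)^k$. Bernoulli's inequality $(1-x_i^t/k)^k\geq 1-x_i^t$ yields the upper bound $\Pr[i\in S_t]\leq x_i^t$, while $(1-x_i^t/k)^k\leq \exp(-x_i^t)$ yields the lower bound $\Pr[i\in S_t]\geq 1-\exp(-x_i^t)$.

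The NA property \ref{prop:na} is where the most care is needed and is the main conceptual step. For each copy $j$, the vector $\mathds{1}[S_t^{(j)}]$ is NA by the 0-1 rule (\Cref{lem:na}, item~\ref{lem:0-1-lemma}) since $|S_t^{(j)}|\leq 1$. By the Products rule, the concatenated vector $(\mathds{1}[S_t^{(j)}])_{j=1}^{k}$ indexed by $(i,j)\in E\times[k]$ is NA, since the copies are independent. Finally, observe that $\mathds{1}[i\in S_t] = 1-\prod_{j=1}^{k}(1-\mathds{1}[i\in S_t^{(j)}])$ is a monotone (increasing) function of the variables $\{\mathds{1}[i\in S_t^{(j)}]:j\in[k]\}$, and the variable sets used for different $i$ are disjoint. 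Hence by the Composition rule, $\mathds{1}[S_t]$ is NA.

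With NA in hand, the value property \ref{prop:approx-submod-dom} follows by chaining two monotonicity facts. By \Cref{lem:NA->submod} applied to the NA vector $\mathds{1}[S_t]$ and the submodular function $f_t$, we have $\E[f_t(S_t)]\geq F_t(\E[\mathds{1}[S_t]])$. Since $\E[\mathds{1}[i\in S_t]] = 1-(1-x_i^t/k)^k \geq 1-\exp(-x_i^t)$ coordinatewise, and $F_t$ is monotone in each coordinate (as $f_t$ is monotone), we conclude $\E[f_t(S_t)]\geq F_t(\vec{1}-\exp(-\vec{x}^t))$, completing the verification.
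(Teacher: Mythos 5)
Your proof is correct and follows essentially the same route as the paper's: cardinality and recourse via subadditivity of the union/symmetric difference, marginals from the per-copy guarantee, NA via the 0-1 rule plus the product and composition closure properties, and value via submodular dominance of NA vectors together with monotonicity of $F_t$. The only (harmless) difference is that you compute the marginal $\Pr[i\in S_t]=1-(1-x^t_i/k)^k$ exactly using independence, where the paper simply cites the per-copy property and a union bound.
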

\begin{proof}
    Properties \ref{prop:k-uniform} and \ref{prop:distance} follow by the same properties holding for $\calA$ and triangle inequality:  
    \begin{align*}
        |S_t| & =\left|\bigcup_{i=1}^k S_t^{(i)}\right|\leq \sum_{i=1}^k |S_t^{(i)}|\leq k.\\
        \E[\;|S_t \oplus S_{t-1}|\;] & \leq \E\left[\;\left|\bigcup_{i=1}^k S_t^{(i)} \oplus \bigcup_{i=1}^kS_{t-1}^{(i)}\right|\;\right] \leq \sum_{i=1}^k \E\left[\left| S^{(i)}_t\oplus S^{(i)}_{t-1}\right|\right] \leq \sum_{i=1}^k \norm{\vec{x}^t/k - \vec{x}^{t-1}/k} = \norm{\vec{x}^t - \vec{x}^{t-1}}.
    \end{align*}
    Similarly, Property \ref{prop:marginals} follows from the same property of the independent copies of $\calA$ and union bound:
    \begin{align*}
        \Pr[e\in S_t] & = \Pr\left[\exists i:\; e\in S_t^{(i)}\right] \in \left[1-(1-x^t_i/k)^k,\;\; x^t_i\right]. 
    \end{align*}
    Next, Property \ref{prop:na} follows from the same property of $\calA$ and the variables $\mathds{1}[e\in S_t] = \bigvee_i \mathds{1}[e\in S_t^{(i)}]$ being monotone increasing functions of disjoint NA variables.
    Finally, Property \ref{prop:approx-submod-dom} follows from the two preceding properties together with \Cref{lem:NA->submod} and monotonicity of $f$.
\end{proof}

\paragraph{Extension to Partition Matroids.}
As one might expect, applying a recourse-respecting randomized rounding algorithm independently with the appropriate cardinality constraint for each of the different parts of a partition matroid implies similar guarantees.
Here, for partition $P_1,\dots,P_r$ of $E$ we use the shorthand $\vec{x}_{\mid r}$ to indicate the projection of $\vec{x}$ onto the $|P_i|$-dimensional space indexed by elements in $P_i$.

\begin{thm}\label{thm:rounding-partition}
    Fix a partition constraint $((P_1,k_1),\dots,(P_r,k_r))$.
    Applying a recourse-respecting randomized rounding algorithm independently for each $j\in [r]$ to the vectors $\vec{x}^1_{\mid j},\dots,\vec{x}^T_{\mid j} \in [0,1]^E$ as they are revealed, for $\vec{x}^1,\dots,\vec{x}^T\in [0,1]^E$ (fractionally) satisfying the cardinality constraints, results in sets $S^t_{\mid j}$ whose union, $S_t\triangleq \bigcup_j S^t_{\mid j}$, is contained in $\{i \mid x^t_i\neq 0\}$ and satisfies the following:
\begin{enumerate}[label=(Q{{\arabic*}})]
    \item \label{partition:k-uniform} \textbf{(Partition)} 
    $S_t$ satisfies the partition constraint: $S_t\cap P_j = S^t_{\mid j}$ has cardinality at most $k_j$.
    \item \label{partition:distance}\textbf{(Recourse)} $\E[\;| S_t\oplus S_{t-1}|\;]\leq \norm{\vec{x}^t-\vec{x}^{t-1}}_1$.
    \item \label{partition:approx-submod-dom} \textbf{(Value)}
    $\E[f_t(S_t)]\geq F_t\left(\vec{1}-\exp\left(-\vec{x}^t\right)\right)$ for any monotone submodular function $f$.
\end{enumerate}
\end{thm}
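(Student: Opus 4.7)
The plan is to verify each of the three properties \ref{partition:k-uniform}--\ref{partition:approx-submod-dom} by lifting its single-part counterpart from \Cref{def:rrrr}, applied to each $P_j$ independently. Since $E = \bigsqcup_{j=1}^{r} P_j$, the outputs $S^t_{\mid j}\subseteq P_j$ are automatically pairwise disjoint, so $S_t = \bigsqcup_j S^t_{\mid j}$ is well-defined; the containment $S_t \subseteq \{i\mid x^t_i \neq 0\}$ is also immediate from \ref{prop:marginals}, which gives $\Pr[i\in S^t_{\mid j}] \leq x^t_i$.

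Properties \ref{partition:k-uniform} and \ref{partition:distance} drop out directly. For \ref{partition:k-uniform}, applying \ref{prop:k-uniform} to the $j$-th execution gives $|S_t \cap P_j| = |S^t_{\mid j}| \leq k_j$. For \ref{partition:distance}, both the symmetric difference and the $\ell_1$ norm decompose across the disjoint parts, so
\begin{align*}
\E\bigl[\,|S_t \oplus S_{t-1}|\,\bigr] = \sum_{j=1}^{r}\E\bigl[\,|S^t_{\mid j} \oplus S^{t-1}_{\mid j}|\,\bigr] \leq \sum_{j=1}^{r}\|\vec{x}^t_{\mid j} - \vec{x}^{t-1}_{\mid j}\|_1 = \|\vec{x}^t - \vec{x}^{t-1}\|_1,
\end{align*}
where the middle inequality applies \ref{prop:distance} per part.

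The substantive step is \ref{partition:approx-submod-dom}, and it is where the NA machinery of \Cref{lem:na} earns its keep. By \ref{prop:na}, each characteristic vector $\mathds{1}[S^t_{\mid j}]$ is NA on the coordinates indexed by $P_j$; since the $r$ executions are independent, the Products clause of \Cref{lem:na} yields that the concatenation $\mathds{1}[S_t] = \mathds{1}[S^t_{\mid 1}] \circ \cdots \circ \mathds{1}[S^t_{\mid r}]$ is NA on all of $E$. Invoking \Cref{lem:NA->submod} on the monotone submodular $f_t$ then gives $\E[f_t(S_t)] \geq F_t(\E[\mathds{1}[S_t]])$, and \ref{prop:marginals} applied per part, combined with the coordinatewise monotonicity of $F_t$ (from monotonicity of $f_t$), upgrades this to $F_t(\E[\mathds{1}[S_t]]) \geq F_t(\vec{1}-\exp(-\vec{x}^t))$. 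I do not anticipate a real obstacle: the partition structure makes every ingredient decompose cleanly per part, and the only subtlety worth flagging is remembering to pass through the Products rule for NA before applying submodular dominance across all of $E$.
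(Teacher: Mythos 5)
Your proof is correct and follows essentially the same route as the paper: Properties (Q1) and (Q2) drop out of (P1) and (P2) by disjointness of the parts and linearity of expectation, and (Q3) is established by combining (P4) with the Products clause of \Cref{lem:na} to get NA of $\mathds{1}[S_t]$, then invoking \Cref{lem:NA->submod} together with (P3) and monotonicity of $f_t$. You simply spell out the per-part decomposition a bit more explicitly than the paper does.
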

\begin{proof}
    That $S_t\subseteq \{i\mid x^t_i\neq 0\}$ follows from the upper bound of Property \ref{prop:marginals}.
    Properties \ref{partition:k-uniform} and \ref{partition:distance} follow directly from \ref{prop:k-uniform} and \ref{prop:distance} (the latter together with linearity of expectation). To prove Property \ref{partition:approx-submod-dom}, 
    we rely on closure of NA under products and disjoint functions (\Cref{lem:na}), implying together with Property \ref{prop:na} that the characteristic vector of $S_t$ is NA.
    Property \ref{partition:approx-submod-dom} then follows by Property \ref{prop:marginals} and submodular dominance of NA vectors (\Cref{lem:NA->submod}) and monotonicity of $f$.
\end{proof}

Finally, combining our fractional submodular objective chasing algorithms and the preceding rounding schemes, we are ready to prove our main result.

\mainthm*
\begin{proof}
    By \Cref{{lem-main-fractional-fast}}, there exists a $(1-1/e-\eps)$-approximate and $O(\eps^{-1}\log(\max_t|E_t|\eps^{-1})$-competitive fractional submodular objective chasing algorithm for packing constraints, which include cardinality and partition constraints.
    Apply this algorithm to the input sequence $\langle f_t,E_t,V_t \rangle_t$ to obtain fractional solutions $\vec{x}^1,\dots,\vec{x}^T$.
    Whenever one such vector is revealed, apply the rounding algorithm of \Cref{thm:rounding-partition}, to obtain online sets $S_1,\dots,S_T$ that are: (1) feasible (by Property \ref{partition:k-uniform}), (2) have recourse no higher than the fractional solution, $\sum_t \E[\;|S_t\bigoplus S_{t-1} |\;] \leq \sum_t \E[\|\vec{x}^t-\vec{x}^{t-1}\|_1]$ (by Property \ref{partition:distance}), and  therefore this new algorithm is $O(\eps^{-1}\log(\max_t |E_t|\eps^{-1})$-competitive, and (3) the sets have expected value at least $
    \E[f_t(S_t)]\geq F_t(\vec{1}-\exp(-\vec{x}^t) \geq (1-1/e-\eps)\cdot V_t$, by Property \ref{partition:approx-submod-dom} and $\E[\mathds{1}_{S_t}]\geq \vec{1}-\exp(-\vec{x})$ (placewise), \Cref{lem:NA->submod}, and the stronger approximation guarantee of \Cref{{lem-main-fractional-fast}}. That is, the obtained integral algorithm is $(1-1/e-\eps)$-approximate.
\end{proof}

\begin{rem}[Concentration]
    By \cite{duppala2025concentration}, non-negative monotone submodular functions of sets with NA indicator vector satisfy lower tail bounds. Consequently, our output sets $S_t$ satisfy, for $m$ the maximum marginal at time $t$ and $\mu\triangleq F_t(\vec{1}-\vec{x}^t)$, 
    $$\Pr[f_t(S_t) < (1-\delta)\mu] \leq \exp\left(-\frac{\delta^{2}\mu}{2m}\right).$$
\end{rem}
% {\color{blue}

\section{More Applications of Approximate-or-Separate}\label{sec:applications}

In this section we illustrate the flexibility of the approximate-or-separate proposition and algorithmic paradigm, by providing (1) a simple polynomial-time algorithm for constrained maximization of monotone submodular functions with bounded curvature, and (2) a communication-optimal communication complexity protocol for constrained submodular maximization. 
% \todo{more?}
% \todo{foreshadow this in the abstract + intro}

% }

\subsection{Curvature-Sensitive Approximation}\label{sec:curvature}

The $(1-1/e)$-approximation for constrained submodular maximization is optimal in the worst case for general monotone submodular functions. However, it can be refined if the submodular function has bounded \emph{total curvature}, defined as follows \cite{conforti1984submodular}.
\begin{Def}\label{def:curvature}
A set function $f$ with no zero-valued elements has \emph{total curvature} $$c_f \triangleq 1-\min_{i\in E} \frac{f(i\mid E\setminus\{i\})}{f(\{i\})}.$$
\end{Def}

Conforti and Cournejols \cite{conforti1984submodular} showed that the greedy algorithm is $(1-\exp(-c))/c$-approximate for monotone submodular function maximization subject to cardinality constraints and $1/(1+c)$-approximate for any matroid constraints.
In an influential paper, \cite{sviridenko2017optimal} improved on both bounds, and provided two methods to provide a $(1-c/e-\eps)$-approximation for any matroid constraints, which they prove is optimal (up to the $\eps$ term) for all values of $c\in [0,1]$. Feldman \cite{Feldman21} later provided an alternative method to obtain this result.

The following decomposition of \cite{iyer2013curvature} is useful (see \Cref{appendix:curvature} for a self-contained proof):
% \sn{no need for the self contained proof, can be removed.}
\begin{restatable}{lem}{curvaturedecomposition}\label{lem:curvature-decomposition}
    Any non-negative monotone subdmodular function $f$ with curvature $c > 0$ can be written as $f=c\cdot g + (1-c)\cdot \ell$, where $\ell$ is linear, $\ell(S)\triangleq \sum_{i\in S} f(\{i\})$, and the resulting function $g=\frac{f-(1-c)\ell}{c}$ is a non-negative monotone submodular function satisfying $g\leq f$.
\end{restatable}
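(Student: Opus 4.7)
The plan is to verify the four required properties of $g$ (non-negativity, monotonicity, submodularity, and $g\leq f$) separately, each reducing to a short calculation. The key observation driving the first two is that the curvature bound combined with submodularity gives a uniform lower bound on every marginal: for every $i\in E$ and every $S\subseteq E$ with $i\in S$,
\[
f(i\mid S\setminus\{i\}) \geq f(i\mid E\setminus\{i\}) \geq (1-c)\cdot f(\{i\}),
\]
where the first inequality uses $S\setminus\{i\}\subseteq E\setminus\{i\}$ and diminishing marginal returns, and the second is the definition of $c_f \leq c$.

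For non-negativity of $g$, I would fix any $S=\{s_1,\ldots,s_k\}$ and telescope $f(S)=\sum_{j=1}^k f(s_j\mid\{s_1,\ldots,s_{j-1}\})$, applying the marginal bound above to each term, to conclude $f(S)\geq (1-c)\sum_{i\in S} f(\{i\}) = (1-c)\ell(S)$, so $g(S)=(f(S)-(1-c)\ell(S))/c \geq 0$. The same telescoping argument on the elements of $T\setminus S$ added one by one, for any $S\subseteq T$, yields $f(T)-f(S)\geq (1-c)(\ell(T)-\ell(S))$, which rearranges to $g(S)\leq g(T)$ and gives monotonicity.

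For submodularity, I would simply observe that $c\cdot g = f - (1-c)\ell$ is the sum of the submodular function $f$ and the (trivially submodular) modular function $-(1-c)\ell$, hence $c\cdot g$ is submodular, and dividing by $c>0$ preserves submodularity. For $g\leq f$, rearrangement reduces the claim to $(1-c)f(S)\leq (1-c)\ell(S)$; for $c=1$ this is vacuous, while for $c<1$ it reduces to the subadditivity bound $f(S)\leq \sum_{i\in S} f(\{i\})$, which itself follows by telescoping marginals and using $f(\emptyset)\geq 0$ together with submodularity.

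None of these steps presents a real obstacle; the only mildly subtle point is recognizing that submodularity is what lets one upgrade the curvature inequality from a statement about the single set $E\setminus\{i\}$ to a uniform lower bound $f(i\mid S\setminus\{i\})\geq (1-c)f(\{i\})$ valid for every $S\ni i$, after which the telescoping arguments for non-negativity and monotonicity are essentially forced.
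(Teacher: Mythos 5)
Your proposal is correct and follows essentially the same route as the paper's proof: the same uniform marginal lower bound $f(i\mid S\setminus\{i\})\geq f(i\mid E\setminus\{i\})\geq(1-c)f(\{i\})$ obtained from submodularity plus the curvature definition, the same telescoping for non-negativity and monotonicity, submodularity via the fact that $c\cdot g$ differs from $f$ by a modular function, and $g\leq f$ via subadditivity $f(S)\leq\ell(S)$. No gaps.
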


The improved curvature result of \cite{sviridenko2017optimal} follows directly from the next theorem, which applies their \cite[Theorem 3.1]{sviridenko2017optimal} to the functions $g$ and $\ell$ in the decomposition of $f$ guaranteed in \Cref{lem:curvature-decomposition}.

%where $OPT$ is an (unknown) optimal feasible solution for $g(\cdot)+\ell(\cdot)$, and

\begin{thm}
Let $\hat{v}\triangleq \max_{i\in E}\max\{g(\{i\}), \ell(\{i\})\}$.
For every $\eps>0$, there is an algorithm that given a monotone submodular function $g$, a linear function $\ell$ and a matroid $\calM$, produces in polynomial time a feasible subset $S \subseteq E$  satisfying:
\begin{align*}
\E[f(S)] & =\E[c\cdot g(S)+(1-c)\cdot \ell(S)] \\
& \geq (1-1/e)\cdot c\cdot g(O)+ (1-c)\cdot \ell(O) - O(\eps) \cdot \hat{v} \\
& \geq (1-c/e-O(\eps)) \cdot f(O),\end{align*}
where $g$ and $\ell$ are as in \Cref{lem:curvature-decomposition}, $\hat{v}\leq f(O)$, 
and $O\subseteq E$ is the feasible subset maximizing $f$.%\nb{I rewrote the theorem. Correct?}
\end{thm}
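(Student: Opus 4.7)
The plan is to apply the approximate-or-separate meta-algorithm (\Cref{thm-fractional1}) to the matroid polytope intersected with a linear half-space enforcing a guessed lower bound on the linear part $\ell$, and then round the fractional output via swap rounding (or pipage rounding) over the matroid. Such rounding preserves linear expectations exactly and produces an indicator vector satisfying submodular dominance (the matroid-polytope analog of \Cref{lem:NA->submod}), so the expected $g$-value is bounded below by the multilinear extension $F_g$ at the fractional point.

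Concretely, I would guess values $V_g \approx g(O)$ and $V_\ell \approx \ell(O)$ over a multiplicative grid of polynomially many values (and report the best resulting rounded set). For each pair, invoke approximate-or-separate on the polytope $\calP \triangleq \calP(\calM) \cap \{\vec{x} : \langle \ell,\vec{x}\rangle \geq V_\ell\}$---polynomially separable since the matroid polytope is and a single half-space trivially is---with submodular function $g$ and target $V_g$. When both guesses undershoot ($V_g \leq g(O)$ and $V_\ell \leq \ell(O)$), the integral point $\vec{1}_O$ certifies that $\calP \cap \{\vec{x} : g^*(\vec{x}) \geq V_g\} \neq \emptyset$ (using $g^*(\vec{1}_O)=g(O)$), so \Cref{thm-fractional1} returns w.h.p.\ a vector $\vec{x} \in \calP$ with $F_g(\vec{x}) \geq (1-1/e-\eps)V_g$. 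Swap rounding of $\vec{x}$ then yields a random independent set $S\in\calI(\calM)$ with $\E[g(S)] \geq F_g(\vec{x})$ and $\E[\ell(S)] = \langle\ell,\vec{x}\rangle \geq V_\ell$.

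Combining these bounds gives $\E[f(S)] = c\,\E[g(S)] + (1-c)\,\E[\ell(S)] \geq c(1-1/e-\eps)V_g + (1-c)V_\ell$. A rescaling $\eps \mapsto \eps/\mathrm{rank}(\calM)$ applied to both the guessing grid and the precision parameter of approximate-or-separate converts the resulting multiplicative error into the desired additive $O(\eps)\hat v$ form, via $g(O),\ell(O) \leq \mathrm{rank}(\calM)\cdot \hat v$; polynomial runtime is preserved since \Cref{thm-fractional1}'s complexity is polynomial in $n$ and the inverse precision. The second claimed inequality in the theorem reduces to the algebraic identity
\begin{equation*}
(1-1/e)c\,g(O) + (1-c)\ell(O) - (1-c/e)f(O) = \frac{c(1-c)}{e}(\ell(O) - g(O)) \geq 0,
\end{equation*}
whose non-negativity uses $g \leq f$ from \Cref{lem:curvature-decomposition} together with $f(O) \leq \sum_{i \in O} f(\{i\}) = \ell(O)$ (submodular subadditivity). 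Finally, $\hat v \leq f(O)$ follows from feasibility of singletons in $\calM$ and the decomposition equality $g(\{i\}) = \ell(\{i\}) = f(\{i\})$.

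The only real delicacy---rather than a fundamental obstacle---is converting approximate-or-separate's natural multiplicative error $O(\eps)V_g$ into the theorem's additive $O(\eps)\hat v$, which the standard rescaling $\eps \mapsto \eps/\mathrm{rank}(\calM)$ handles at only polynomial overhead. Everything else---separation of the intersected polytope, feasibility of $\vec{1}_O$, and the submodular-dominance/linear-preservation properties of swap rounding---is a direct black-box invocation of the paper's meta-algorithm plus textbook matroid rounding.
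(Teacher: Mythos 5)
Your proposal is correct and follows essentially the same route as the paper: guess $g(O)$ and $\ell(O)$, run Approximate-or-Separate on the matroid polytope intersected with the easily-separable linear half-space $\{L(\vec{x})\geq V_\ell\}$, and round losslessly via swap/pipage rounding. You in fact supply several details the paper defers to Sviridenko et al.\ (the feasibility certificate $\vec{1}_O$, the conversion of the multiplicative $O(\eps)V_g$ loss into the additive $O(\eps)\hat v$ form via $g(O),\ell(O)\leq \mathrm{rank}(\calM)\cdot\hat v$, and the closing algebraic identity), all of which check out.
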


% \nb{rewriting}

% \begin{thm}[Theorem 3.1 \cite{sviridenko2017optimal}]
% Let $g$ be a monotone submodular function, $\ell$ a linear function $\ell$, and $\calM$ a matroid. Let $G$ and $L$ be the multilinear extensions of $g$ and $\ell$, respectively. Let $\hat{v}\triangleq \max_{i\in E}\max\{g(\{i\}), \ell(\{i\})\}$.
% Then, for every $\eps>0$ there is an algorithm that 
% produces in polynomial time a feasible vector $\vec{x}$ in the matroid polytope satisfying:
% \[G(\vec{x})+L(\vec{x})\geq (1-1/e)\cdot g(O)+ \ell(O) - O(\eps) \cdot \hat{v} \geq (1-c/e-O(\eps)) \cdot f(O),\]
% where the last inequality is a consequence of \Cref{lem:curvature-decomposition} and since $\hat{v}\leq f(O)$, 
% where $O\subseteq E$ is the feasible subset maximizing $f$.%\nb{I rewrote the theorem. Correct?}
% \end{thm}

% \nb{END}

% By the decomposition of \Cref{lem:curvature-decomposition}, such a set $S$  F(\vec{x}) &= c\cdot G_f(\vec{x}) + (1-c)\cdot L_f(\vec{x}) \\& \geq (1-1/e-2\eps)\cdot c\cdot g_f(OPT) + (1-\eps)\cdot (1-c)\cdot \ell_f(OPT) \\
% & = (1-2\eps)\cdot f(OPT) - (c/e)\cdot g_f(OPT) \\
% & \geq (1-c/e-2\eps)\cdot f(OPT).  & g_f\leq f, \textrm{ by \Cref{lem:curvature-decomposition}}.

Two proofs for this theorem are provided by \cite{sviridenko2017optimal}. The first one works by a modification of the continuous greedy algorithm that yields a feasible vector $\vec{x}$ such that $G(\vec{x})+ L(\vec{x}) \geq (1-1/e)g(O)+ \ell(O) - O(\eps) \cdot \hat{v}$, where $G$ and $L$ are the multilinear extensions of $g$ and $\ell$, respectively. Then, this fractional vector $\vec{x}$ can be rounded to an integral solution without an additional loss \cite{calinescu2011maximizing}. The second proof relies on a modification of the local search algorithm of \cite{filmus2014monotone}.

We show that a very simple application (rather than modification) of \Cref{alg:Approximate-or-Separate} yields the same guarantee as \cite{sviridenko2017optimal}. The idea is simple: similarly to \cite{sviridenko2017optimal}, we guess two values $\lambda= \ell(O)$ and $\gamma=g(O)$. Then, we execute \Cref{alg:Approximate-or-Separate} on the polytope $\calK'(V)\triangleq \calP \cap \{\vec{x}\mid f^*(\vec{x}) \geq \gamma\} \cap \{\vec{x}\mid L(\vec{x}) \geq \lambda\}$, noting that if the values are guessed correctly, $\calK'(V)\neq \emptyset$. We further observe that we are only adding a single constraint ($L(\vec{x}) \geq \lambda$) and checking whether this constraint is violated is easy.
Therefore, by \Cref{thm-fractional1}, the final vector satisfies $G(\vec{x})+ L(\vec{x}) \geq (1-1/e)\cdot \gamma+\lambda$. The additional (additive) error of $\eps \cdot \hat{v}$ is due to the inaccurate guessing of the values $\gamma$ and $\lambda$. The procedure for guessing these values can be handled in the same way as it is done in \cite{sviridenko2017optimal}. Finally, rounding again yields the claimed bound in expectation.

Summarizing, at the cost of a small $\poly(n/\eps)$ slowdown due to guesswork (see \cite{sviridenko2017optimal} for details), the Approximate-or-Separate framework allows us to get curvature-sensitive algorithms with effectively the same running time as our $(1-1/e-\eps)$ approximation. This demonstrates the flexibility of our framework that allows adding more constraints (linear or even convex) easily, while relying only on a separation (or approximate separation) oracle required for the underlying cutting based algorithm.
\subsection{Communication Complexity}\label{sec:communication-complexity}

In this section we outline the utility of our approximate-or-separate framework to the communication complexity model \cite{kushilevitz2006communication}. 
In the basic model, Alice and Bob have inputs $a\in \calA$ and $b\in \calB$, respectively, unknown to each other, and must communicate over discrete back-and-forth rounds, ideally sending as little bits of information as possible in order to agree on some $f(a,b)$, where $f:\calA\times \calB\to \calC$ is known to both parties.
For example, Alice and Bob may hold parts of the edges of a graph, and may want to compute a perfect matching in the union graph, if one exists \cite{blikstad2022nearly}.

In our context, Alice and Bob have some subsets $A\subseteq [n]$ and $B\subseteq [n]$ of a common ground set~$[n]$.
They both know a common non-negative monotone submodular function $f:2^{[n]}\to \mathbb{R}_+$, a partition constraint of rank $r$ (i.e., maximum solution cardinality of $r$), and wish to compute a feasible subset $C\subseteq A\cup B$ that $(1-1/e-\eps)$-approximates the optimal feasible set $OPT\subseteq A\cup B$ with probability at least $1-\eps$.
Alice and Bob have their own random bits, which they may use to communicate. This is the so-called ($\eps$-error) \emph{private-coin} communication model, as opposed to the \emph{public-coin} communication model, where Alice and Bob also share a common random string.
By Newman's theorem \cite{newman1991private}, by doubling the error probability from $\eps$ to $2\eps$, the communication complexity in both models differs by at most an additive $O_\eps(\log n)$ bits. 

The problem was previously studied by Feldman et al.~\cite{feldman2023one} for cardinality constraint. They showed that even in the one-way communication model (Alice only send a single message to Bob), non-trivial guarantees can be achieved: a $0.51$-approximation can be achieved in polytime by sending a message of $O(r)$ elements, while a $(2/3-\eps)$-approximation can be achieved (in exponential time) by sendig a message of $O(r\log(r/\eps))$ elements. In this section we will show that $\tilde{O}(r)$ many elements, or $\tilde{O}(r)$ bits, suffice to obtain a $(1-1/e-\eps)$-approximation for partition constraints more generally, using back and forth communication.

Since the solution can be of cardinality $r$, it is intuitively clear that $\Omega(r)$ bits are necessary for Alice and Bob to communicate an approximately-maximum solution $C\subset A\cup B$. The next fact, whose proof is a simple counting argument, noting that there are $\sum_{k=\alpha r}^r {n \choose r} = \Omega({n\choose \alpha r})$ possible outputs Alice and Bob must agree on to get an $\alpha$-approximate solution depending only on Alice's input if Bob's input is empty.

\begin{restatable}{fact}{cclb}\label{fact:cclb}
	The private-coin communication complexity of any constant-approximate cardinality-$r$-constrained non-negative submodular function minimization is $\Omega(r \log(n/r))$.
\end{restatable}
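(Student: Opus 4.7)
The plan is a standard volume/counting lower bound combined with Yao's minimax principle, along exactly the lines suggested by the parenthetical preceding the statement. Fix the simple submodular (in fact, linear) objective $f(S)=|S|$ under the cardinality-$r$ constraint, fix Bob's input to $B=\emptyset$, and draw Alice's input $A$ uniformly from $\binom{[n]}{r}$. Under this distribution, the optimum value of $f$ over feasible subsets of $A\cup B=A$ equals $r$, so every $\alpha$-approximate feasible output $C$ must satisfy $C\subseteq A$ and $|C|\geq \alpha r$.

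By Yao's minimax principle, it suffices to lower-bound the worst-case communication of any deterministic protocol that succeeds with probability at least $1-\eps$ (for some constant $\eps<1/2$) over this distribution on $A$. Fix such a deterministic protocol of communication complexity $L$: it admits at most $2^L$ distinct transcripts, each of which determines a single output set $C$. A transcript whose output is $C$ can be correct only on those Alice inputs $A$ with $|C|\geq \alpha r$ and $C\subseteq A$; for a fixed $C$ of size $k\geq \alpha r$, the number of $A\in\binom{[n]}{r}$ containing $C$ is $\binom{n-k}{r-k}$, which is maximized at $k=\alpha r$ and equals $\binom{n-\alpha r}{r-\alpha r}$.

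Summing over transcripts, the number of correctly handled Alice inputs is at most $2^L\cdot \binom{n-\alpha r}{r-\alpha r}$, and correctness on a $(1-\eps)$-mass of the uniform distribution requires this to be at least $(1-\eps)\binom{n}{r}$. Using the elementary identity $\binom{n}{r}\binom{r}{\alpha r}=\binom{n}{\alpha r}\binom{n-\alpha r}{r-\alpha r}$, this rearranges to
\[
2^L \;\geq\; (1-\eps)\cdot\frac{\binom{n}{\alpha r}}{\binom{r}{\alpha r}}.
\]
Taking logarithms and applying $\log\binom{n}{\alpha r}=\Omega(\alpha r\log(n/(\alpha r)))$ together with $\log\binom{r}{\alpha r}=O(r)$ yields $L=\Omega(\alpha r\log(n/r))-O(r)=\Omega(r\log(n/r))$ for constant $\alpha$, which is the claimed bound in the nontrivial regime $n/r$ exceeding a sufficiently large constant (for $n=O(r)$ the stated bound is $O(r)$ and vacuous).

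I do not anticipate any real technical obstacle: this is a textbook covering-style lower bound. The only minor care points are (i) absorbing the factor $\binom{r}{\alpha r}=2^{O(r)}$ into the final $\Omega(r\log(n/r))$ bound, which holds comfortably whenever the bound is itself non-trivial, and (ii) invoking Yao's minimax principle cleanly for private-coin randomized protocols with constant error, converting the randomized private-coin lower bound into a distributional deterministic lower bound against the uniform distribution on $\binom{[n]}{r}$.
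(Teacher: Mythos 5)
Your proof is correct and follows essentially the same route as the paper, which only sketches this fact as a one-line counting argument (fix Bob's input to $\emptyset$, observe that $\Omega\bigl(\binom{n}{\alpha r}\bigr)$ distinct outputs must be distinguishable from Alice's input alone). Your write-up is a complete and careful rendering of that sketch — the Yao reduction, the per-transcript covering bound $\binom{n-\alpha r}{r-\alpha r}$, and the binomial identity absorbing the $2^{O(r)}$ factor are exactly the details the paper elides — so no gap.
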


We now show that a nearly-matching $r\cdot \poly(1/\eps,\log n)$ bits is achievable by a randomized protocol relying on the approximate-or-separate paradigm, instantiated using the low-round and low-recourse algorithm used for submodular objective chasing in \Cref{sec:chasing-optimally}.

Before doing so, we make some simple simplifying assumptions.

\begin{obs}\label{obs:guess-opt}
	Using $O(\log n)$ bits, Alice and Bob can communicate $\max_{i\in A\cup B} f(i)$.
\end{obs}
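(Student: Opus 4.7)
The plan is simple since $f$ is common knowledge to both parties. Each party can locally identify the best singleton in their own set and communicate only its identity (an index in $[n]$), and then both parties evaluate $f$ on the two candidate singletons to determine the maximum.

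More concretely, I would proceed as follows. First, Alice locally computes $i_A \in \arg\max_{i\in A} f(\{i\})$ and sends the index $i_A$ to Bob; encoding an element of $[n]$ requires $\lceil \log_2 n \rceil$ bits. Symmetrically, Bob computes $i_B \in \arg\max_{i\in B} f(\{i\})$ and sends $i_B$ to Alice, using another $\lceil \log_2 n \rceil$ bits. Since both parties know $f$ (it is public), both can now evaluate $f(\{i_A\})$ and $f(\{i_B\})$ and output their maximum, which by construction equals $\max_{i\in A\cup B} f(\{i\})$. The total communication is $2\lceil \log_2 n \rceil = O(\log n)$ bits, as claimed.

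There is essentially no obstacle here: the argument only uses that $f$ is common knowledge and that the maximum over $A\cup B$ of a function on singletons decomposes as the maximum of the maxima over $A$ and $B$. The only mild subtlety is the convention $f(i) = f(\{i\})$, which is the standard shorthand used throughout the paper.
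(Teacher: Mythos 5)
Your proof is correct and is essentially identical to the paper's: both parties exchange the indices of their respective best singletons ($O(\log n)$ bits each) and then evaluate both singletons locally using the shared value oracle for $f$. No issues.
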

\begin{proof}
	This can be done by letting Alice and Bob communicating some $i\in \arg\max_{i\in A} f(i)$ and $i\in \arg\max_{i\in B} f(i)$ to each other, and then evaluate both singletons using the oracle access to $f$.
\end{proof}
By standard sub-additivity arguments, implying that ignoring any element of marginal value at most $(\eps/n)\cdot \max_{i\in A\cup B}$ cannot affect the value of any set by more than $\eps\cdot \max_{i\in A\cup B} f(i) \leq \eps \max_{C\subseteq A\cup B} f(C)$, we can therefore assume that transmitting any marginal requires $O(\log (n/\eps))$ bits.

We now proceed to (approximately) guessing $f(OPT)$, needed by our approximate-or-separate method.

\begin{obs}[Guessing $f(OPT)$]
	Using $O(r\log n)$ communication, and an $O(\eps^{-1})$ multiplicative overhead, Alice and Bob can be assumed to know some value $V$ such that $V/f(OPT)\in [1-\eps,1]$.
\end{obs}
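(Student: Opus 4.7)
The plan is to combine the preceding observation with a simple geometric search over candidate values. By the preceding observation, $O(\log n)$ bits let Alice and Bob agree on the best singleton value $M := \max_{i\in A\cup B} f(\{i\})$. First, I bracket $f(OPT)$: since any singleton is feasible in a partition matroid of positive rank, monotonicity gives $f(OPT)\ge M$; conversely, non-negativity plus submodularity imply subadditivity, so $f(OPT)\le \sum_{i\in OPT}f(\{i\}) \le |OPT|\cdot M \le r\cdot M$. Hence $f(OPT)\in [M,rM]$.

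Both parties then locally---with no communication---enumerate the geometric grid $V_j := (1-\eps)^j \cdot rM$ for $j=0,1,\ldots,J$, where $J := \lceil \log_{1/(1-\eps)} r\rceil = O(\eps^{-1}\log r)$. Consecutive values are spaced by a factor of $(1-\eps)$ and the extremes bracket $[M,rM]$, so some $V_{j^\star}$ satisfies $V_{j^\star}/f(OPT)\in [1-\eps,1]$.

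Finally, Alice and Bob try the $V_j$ in order of increasing $j$ (decreasing $V_j$), running for each candidate the main protocol built on the approximate-or-separate framework. For $V_j > f(OPT)$ the underlying cutting-plane method exhausts its allotted rounds without producing a certified solution, while for $V_j \le f(OPT)$ (and in particular for $V_j=V_{j^\star}$) it returns a $(1-1/e-\eps)$-approximation of $V_j$. Both parties observe the same transcript and therefore agree on the first $j$ that succeeds; the corresponding $V_j$ is the desired $V$. The multiplicative overhead on the downstream protocol is $J=O(\eps^{-1}\log r)=\tilde{O}(\eps^{-1})$, with the $\log r$ factor absorbed into the outer $\tilde{O}$ of the main theorem; the guessing step itself costs only the $O(\log n)\le O(r\log n)$ bits already spent on computing $M$.

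The main (mild) obstacle is coordinated detection of success or failure across the two parties, but this is automatic: the termination behaviour of the cutting-plane procedure is a deterministic function of the shared transcript, so no extra synchronization bits are required. A secondary subtlety is that correctness only needs the \emph{first} successful $V_j$ to be close enough to $f(OPT)$; since the grid is a $(1-\eps)$-cover of $[M,rM]$, this is immediate.
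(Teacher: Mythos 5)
Your approach is close in spirit but quantitatively weaker than what the observation claims, and this traces to a specific choice you make. You bracket $f(OPT)$ using only the singleton maximum $M = \max_{i\in A\cup B} f(\{i\})$, which yields the interval $[M, rM]$ — a factor-$r$ range — and therefore $O(\eps^{-1}\log r)$ grid points. The observation claims an $O(\eps^{-1})$ multiplicative overhead, which your scheme does not achieve; you acknowledge this and argue the extra $\log r$ factor is absorbed into the $\tilde{O}$ of the downstream theorem, but that is a weakening of the stated claim, not a proof of it.

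The paper achieves the tighter $O(\eps^{-1})$ overhead by spending the full $O(r\log n)$ communication budget up front rather than only $O(\log n)$: Alice and Bob each locally compute an $\Omega(1)$-approximate solution within their own part ($A$ and $B$ respectively), exchange these solutions (each has at most $r$ elements, hence $O(r\log n)$ bits), and take the better of the two. The maximum of the two values is then a constant-factor approximation to $f(OPT)$, since $f(OPT[A\cup B]) = O(f(OPT[A]) + f(OPT[B]))$ by subadditivity. With a constant-factor bracket, only $\log_{1/(1-\eps)}(\Theta(1)) = O(\eps^{-1})$ guesses are needed. In short, the gap is that you under-used the available communication to get a loose bracket; using the $O(r\log n)$ bits to exchange approximate solutions gives a constant-factor bracket and the claimed overhead. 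The rest of your argument — the geometric grid, the $(1-\eps)$ cover, and the determinism of the shared transcript — is fine, though the paper simply runs all guesses and returns the best solution rather than stopping at the first success, which sidesteps any subtleties about when the cutting-plane method declares failure under randomization.
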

\begin{proof} Alice and Bob can agree on some $\Omega(1)$-approximate solution value, by exchanging $\Omega(1)$-approximate optimal solutions in $A$ and in $B$, using $O(r\log n)$ bits. 
	This way, they can use $\log_{1/(1-\eps)}(\Theta(1))=O(1/\eps)$ many guesses for $f(OPT)$, one such value $V$ of which $(1-\eps)$-approximates this value. 
	That is, $V/f(OPT)\in [1-\eps,1]$. Running the protocol with $O(1/\eps)$ guesses for $f(OPT)$ will then allow them to agree on the best solution for all these guesses.
\end{proof}

We now describe the algorithm for each of the above guessed values $V$, where the correct guess will result with probability at least $1-\eps$ in an output solution of value $(1-1/e-O(\eps))\cdot f(OPT)$ in expectation. By reverse Markov, since their output solution has value in $[0,f(OPT)]$ by definition, the probability that the output has value less than $(1-1/e-O(\eps))\cdot f(OPT)$ is $1-O(\eps)$. Therefore, repeating this algorithm $\tilde{O}(1/\eps)$ many times and taking the best solution therefore results in success with high probability. We therefore focus on proving the desired approximation holds in expectation for a guessed value $V$ with $V/f(OPT)\in [1-\eps,1]$.

\begin{thm}
	The private-coin randomized communication complexity of $(1-1/e-\eps)$-approximate partition-constrained non-negative monotone submodular function maximization is $r\cdot \poly(\log n, 1/\eps)$.
\end{thm}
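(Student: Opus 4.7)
The plan is to instantiate the Approximate-or-Separate meta-algorithm (\Cref{thm-fractional2}) using the same low-iteration cutting plane method (the chasing positive bodies algorithm of \cite{bhattacharya2023chasing}) that is employed in the proof of \Cref{lem-main-fractional-fast}, and to distribute its execution between Alice and Bob with small per-iteration communication. By Newman's theorem \cite{newman1991private}, it will suffice to give a public-coin protocol achieving the claimed bound and convert to private-coin at an additive $O_\eps(\log n)$ cost, which is subsumed by the target complexity.

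The first step will be for both parties to agree on all algorithmic randomness---in particular the samples $t \sim \Uni[0,1]$ and $Y_i \sim \Exp(x_i)$ that define $S(t)$ inside Approximate-or-Separate (\Cref{prop-mainsep})---via their shared random string, so that they deterministically maintain the same iterate $\vec{x}^t \in [0,1]^{[n]}$ throughout. The cutting plane driver requires separation for three sets of constraints: the partition matroid polytope $\calP(\calM)$, the support restriction $\{\vec{x} : x_i = 0 \text{ for all } i \notin A \cup B\}$ (which encodes $C \subseteq A \cup B$), and the Wolsey inequality $f^*(\vec{x}) \geq V$ invoked by Approximate-or-Separate. The first is common knowledge and separable locally. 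For the third, each party independently samples the same $S(t)$ from the shared string and evaluates $f(S(t)) + \sum_i f(i \mid S(t)) \cdot x_i^t$ against $(1-\eps) V$ using the common value oracle for $f$, so no bits are exchanged. Only the support constraint is genuinely two-sided: Alice will transmit each index $i \in \text{supp}(\vec{x}^t) \setminus A$ that she has not previously queried, Bob will respond whether $i \in B$, and symmetrically from Bob to Alice; caching ensures each element triggers communication at most once.

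To bound the total communication, I will invoke the analysis in the proof of \Cref{lem-main-fractional-fast} specialized to a single instance ($T=1$, $\max_t|E_t| = n$), yielding $T_{ALG} = O_\eps(r \log n / \eps)$ cutting plane iterations. Since the algorithm of \cite{bhattacharya2023chasing} is driven by exactly one explicitly named violating halfspace per step, the coordinates that ever become positive during the run lie inside the union of supports of the halfspaces actually used; bounding this aggregate support by $\tilde{O}_\eps(r)$ then gives $O(\log n)$ bits per distinct queried element, for a grand total of $r \cdot \poly(\log n, 1/\eps)$ bits. Once the fractional $\vec{x}$ is agreed upon, both parties will independently apply the recourse-respecting rounding of \Cref{thm:rounding-partition} using the shared randomness and output the same $C$; the in-expectation approximation guarantee will be boosted to the claimed high-probability $(1-1/e-\eps)$-approximation by the reverse-Markov plus $\tilde{O}(1/\eps)$-fold repetition sketched immediately before the theorem. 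The main obstacle I anticipate is precisely the aggregate-support bound: rigorously justifying that ``each newly positive coordinate was introduced by an explicitly separating halfspace of small support'' requires unboxing the multiplicative update rule of \cite{bhattacharya2023chasing}, rather than relying only on the $L_1$-recourse guarantee that was sufficient for \Cref{lem-main-fractional-fast}.
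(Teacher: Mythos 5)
Your protocol architecture differs from the paper's in a way that creates a genuine gap. You have both parties maintain the identical full iterate $\vec{x}^t\in[0,1]^{[n]}$, so that Wolsey separation is free and the only two-sided communication is support-membership querying, whose cost you propose to bound by the aggregate support $|\bigcup_t \mathrm{supp}(\vec{x}^t)|$. But that aggregate-support bound --- which you yourself flag as the main obstacle --- is exactly what fails. The guarantee of \cite{bhattacharya2023chasing} is an $\ell_1$ bound, $\sum_t\|\vec{x}^t-\vec{x}^{t-1}\|_1=\tilde O_\eps(r)$, and it does not control sparsity: the covering constraints being chased are the Wolsey inequalities $f(S)+\sum_{i\in A\cup B}f(i\mid S)\,x_i\geq V$, whose support is generically all of $A\cup B$ (up to $n$ elements), and a single multiplicative update for such a dense constraint makes every coordinate in its support positive. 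Hence the union of supports can be $\Theta(n)$ while the total $\ell_1$ movement is still $\tilde O_\eps(r)$, and your support-querying step then costs $\Theta(n\log n)$ bits, destroying the $r\cdot\poly(\log n,1/\eps)$ target when $r\ll n$. There is also a circularity: to evaluate the Wolsey constraint locally, each party must restrict the sum to $i\in A\cup B$ (or know that $x_i=0$ off $A\cup B$), which presupposes that the very support constraint you are trying to separate has already been enforced.

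The paper sidesteps this by never letting either party hold the full iterate: Alice maintains $x^t_i$ only for $i\in A$ and Bob only for $i\in B$, and the problem is presented as a two-step chasing instance ($E_0=\emptyset$, then $E_1=A\cup B$) so that $OPT_R\leq r$ gives $\tilde O_\eps(r)$ absolute recourse and $\tilde O_\eps(r)$ cutting planes. Per round the parties exchange only $O_\eps(\log n)$-bit discretized aggregates (the partial sums $\sum_{i\in A}f(i\mid S^t(w))\,x_i$, the per-part sums for partition separation, and the contributions to the binary-search update), while the shared sampled sets $S^t(w)$ are kept synchronized via the coupling $i\in S^t(w)\iff 1-e^{-x^t_i}\geq Z^w_i$, whose expected number of flips is bounded by the $\ell_1$ recourse; the final rounding is likewise distributed, with Alice sending only $O(r)$ IDs. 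To salvage your architecture you would need to replace per-coordinate support communication with such aggregate exchanges; as written, the step you deferred is not a technicality but the point where the argument breaks.
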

\begin{proof}[Proof]
	By Newman's theorem, we can assume that Alice and Bob share randomness. Similarly, by the preceding observations we assume they know $\max_{i\in A\cup B} f(i)$ and a value $V$ satisfying $V\in [(1-\eps)f(OPT),\;f(OPT)]$.
	
	The high-level idea is for Alice and Bob to simulate together the Approximate-or-Separate \Cref{alg:Approximate-or-Separate} using as its cutting plane method (a variant of) the algorithm of \cite{bhattacharya2023chasing} used in \Cref{lem-main-fractional-fast}. 
	Alice and Bob will together compute a vector sequence $\vec{x}^0=\vec{0},\vec{x}^1,\dots,\vec{x}^R$ chasing the $R=\tilde{O}(r)$ halfspaces during the run of the polynomial-time submodular objective chasing algorithm of \Cref{lem-main-fractional-fast} when applied to two instances: the first instance has $E_0=\emptyset$, and the second instance has $E_1=A\cup B$.
	Since the optimal offline algorithm has recourse at most $r$ on this trivial sequence, by the competitive recourse of our submodular objective chasing algorithm (see \Cref{lem-main-fractional-fast}), the absolute recourse of the algorithm Alice and Bob simulate is $\tilde{O}_\eps(r).$
	At each time $t$ of the joint execution of the algorithm, Alice will know all $x^t_i$ for all $i\in A$ and similarly Bob will know $x^t_i$ for all $i\in B$, which is trivial at first, as $\vec{x}^0=\vec{0}$.

	\paragraph{Sampling sets: from low recourse to low communication.} Initially,  Alice and Bob sample (together) a set $W$ of $O(\eps^{-2}\log n)$ many i.i.d.~``witness times'' $w \sim \Uni[0,1]$. They also maintain together for each $w\in W$ a set $S^t(w)$ containing each element $i\in A\cup B$ independently with probability $1-\exp(-x^t_i)$.
	Alice and Bob both know $S^t(w)$ at time $t$ for each $w\in W$.
	In order to avoid resampling and communicating this set at each time $t$, they initially sample variables $Z_i^w\sim \Uni[0,1]$ for each $i\in A\cup B$, and include $i$ in $S^t(w)$, if and only if $1-\exp(-x^t_i) \geq Z^t_i$, which by happens with probability $1-\exp(-x^t_i)$, and so these sets are drawn as in \Cref{prop-mainsep}, as needed by the algorithm of \Cref{sec:chasing-optimally}, and we use shortly hereadter. Crucially, since the total recourse $\sum_t \|\vec{x}^t-\vec{x}^{t-1}\| = \tilde{O}(r)$, this implies that Alice and Bob can simply notify of all $i\in S^t(w)\oplus S^{t-1}(w)$. But this event happens for each $i$ at time $t$ when $1-\exp(-x^t_i)<Z^t_i \leq 1-\exp(-x^{t-1}_i)$ or $1-\exp(-x^{t-1}_i)<Z^t_i \leq 1-\exp(-x^{t}_i)$, which happens with probability  
	$$\left|1-\exp(-x^t_i)-1-\exp(-x^{t-1}_i)\right|\leq \left|x^t_{i}-x^{t-1}_i\right|,$$
	where the inequality is a simple corollary of the intermediate value theorem.
	But then, by the recourse bound and linearity of expectation, the total number of elements $(O(\log n)$ bits each) which Alice and Bob need to communicate to each other to maintain these sets $S^t(w)$ is $\tilde{O}(r)$.
	
	\paragraph{Separation: from few rounds to low communication.}
	We now show how to implement each of the $\tilde{O}(r)$ rounds of the algorithm of \Cref{lem-main-fractional-fast}. (To see this bound on the number of rounds: recall that this algorithm incurred $\Omega(\eps)$ recourse per violated constraint found, and so the bound on the number of rounds/cutting planes follows from the preceding recourse upper bound.)
	To implement this algorithm, we need to (1) find a violated constraint in the Wolsey polytope, if one exists, (2) find a partition constraint which is $(1+\eps)$ violated, if one exists (this parallels the very violated constraint in the matroid polytope paragraph in \Cref{lem-main-fractional-fast}), and (3) update $\vec{x}$. We show how to implement each of these using $\tilde{O}(r)$ bits over all rounds: For (1), since Alice and Bob both know $S^t(w)$ for all $w\in W$, they can each compute $f(i\mid S^t(w))$ for each element $i$ in their set, and communicate $\sum_{i\in A}f(i\mid S^t(w))\cdot x_i$ and $\sum_{i\in B}f(i\mid S^t(w))\cdot x_i$ to each other, allowing them to compute $f(S)+\sum_{i\in A\cup B}f(i\mid S^t(w))\cdot x_i$.
	By our observation that Alice and Bob need only communicate marginals of value at least $(\eps/n)\cdot \max_{i\in A\cup B} f(i)$, and since they know $\max_{i\in A\cup B} f(i)$ by \Cref{obs:guess-opt}, this requires only $O_\eps(\log n)$ bits per round to obtain this value up to additive error of $\eps$, and so this requires $\tilde{O}(r)$ bits to communicate overall. 
	For (2), in each part $P$ we have Alice notify Bob when $\sum_{i\in A\cap P}x_i - $ increases or decreases by at least $\eps/4$ from the last time a message concerning part $P$ was sent from Alice, in which cases Alice notifies Bob of the closest multiple of $\eps/4$ that this sum attained. Since there are $4n/\eps$ such values, this requires sending $O_\eps(\log n)$ (accounting for $O(\log r)=O(\log n)$ bits to encode the index $P$) whenever Alice notifies Bob.
	But since each such message requires $\Omega(\eps)$ recourse, Alice sends at most $\tilde{O}_\eps(r)$ such messages throughout, and similarly for Bob.
	Therefore, Alice and Bob know $\sum_{i\in P} x^t_i$ up to an additive $\eps/2$.
	Whenever their estimate exceeds the cardinality constraint for the part by $\eps$, 
	Finally, for (3), which as discussed in \Cref{lem-main-fractional-fast} reduces to a binary search, Alice and Bob simulate the binary search to compute an approximation of the update step of \cite{bhattacharya2023chasing}, which as discussed in \Cref{lem-main-fractional-fast} suffices for the algorithm's correctness.
	Each such round requires Alice and Bob to compute their contribution to the KL divergence and send this value to Bob and Alice, which can be done using $O_\eps(\log n)$ many bits while retaining the required precision.
	Over the $\tilde{O}(r)$ time steps $t$, this requires Alice and Bob to send $\tilde{O}(r)\cdot O_\eps(\log^2n) = \tilde{O}(r)$ many bits.
	
	\paragraph{Rounding:} Finally, it remains to round the solution. Alice and Bob together perform pivotal sampling in each part \cite{srinivasan2001distributions}, as follows: Alice first performs this rounding for her elements, and stops before rounding the last element in each part $P$ with $\sum_{i\in A\cap P} x_i$ is not integral: thus, for each part $P$ in the partition, some elements in $A\cap P$ have their $x^R_i$-value rounded to zero, $\lfloor \sum_{i\in A\cap P} x^R_i\rfloor$ elements have their $x$-value rounded to one (in which case she sends them to Bob), and one element retains a fractional value $\sum_{i\in A\cap P} x_i - \lfloor \sum_{i\in A\cap P} x_i\rfloor$. Alice then sends at most $1+\lfloor \sum_{i\in A\cap P} x_i\rfloor$ IDs to Bob, including the fractional value $\tilde{x}^t_i$ of the last element $i$ in the part. 
	As $\sum_i x^R_i\leq r$, at most $r+\sum_i x_i = \leq 2r$ IDs are sent, using $O(r\log n)$ bits.
	To make sending the fractional values cheap,  the fractional part is  rounded down to the closest multiple of $1-\eps$ (decreasing the multilinear value by at most $1-\eps$, by \Cref{fact:ML-scaling}) and further rounding down to zero if this new value is less than $\eps/n$ (which by linearity decreases the multilinear objectives by at most $\eps/n\sum_i x_i \cdot \max_j f(j) \leq \eps\cdot f(OPT)$). This value therefore requires $O(\log_{1/(1-\eps)}\log(n/\eps))=O_\eps(\log n)$ many bits to transmit. 
	Following this, Bob obtains a subset $S$ of $[n]$ with each element $i$ belonging to it with probability $x_i$ (up to the negligible rounding of values discussed above) and this distribution is NA (see \cite{branden2012negative,dubhashi2007positive,byrka2018proportional}). This then implies that $\E[f(S)]\geq F(\vec{x})$, by \cite{christofides2004connection,qiu2022submodular}, as desired.
\end{proof}
\section{Summary and Open Questions}\label{sec:summary}

In this work we introduce the chasing submodular objectives problem, and provide polynomial-time algorithms with optimal approximation and competitive recourse for the two most widely-studied basic constraints: cardinality constraints and partition matroids. To obtain polynomial-time algorithm (for this and other models) we introduce a new approximate-or-separate framework, allowing us to reduce constrained submodular maximization to any cutting plane method.

\paragraph{Chasing subomdular objectives.} A natural question we leave open is the existence of algorithms with optimal, or even $\Omega(1)$-approximate, submodular objective chasing subject to richer constraint families. 
We note that for the expressive family of matroid constraints, we provide an optimal fractional algorithm in \Cref{sec:chasing-optimally}, which reduces this search to that of designing approximate recourse-respecting randomized rounding algorithms.
Another natural questions is the study of submodular objective chasing with \emph{non-monotone} submodular functions, which while less common in applications, are also theoretically interesting. 
One challenge is that for such functions $f^*$ is not an extension (and also does not yield a packing-covering LP), and so other ideas are~needed.

\paragraph{Submodular Maximization.}
In this work we introduce the approximate-or-separate method. This method allows us to obtain optimally-approximate (fractional) solutions to submodular maximization subject to separable constraints, using nothing more than a separation oracle and cutting plane methods.
We illustrate this method's utility for several models of computation: static algorithms with curvature-sensitive approximation, and communication complexity protocols.
In follow-up work, \cite{naor2026dimension} use our approximate-or-separate framework to compute \emph{several sets} of high submodular value, in a multi-scenario reallocation scenario similar; they do so by extending our approach, using the ellipsoid method and writing an LP imposing multiple high submodular value  constraints (for parts of the output vector), which they then round using a novel correlated sampling for the hypersimplex.
We anticipate further applications of our method, and leave it as an open problem to find further models in which this meta-algorithm (perhaps using low-recourse cutting plane methods, as we do in \Cref{sec:communication-complexity}) can find applications to other computational models.

\appendix
\section*{APPENDIX}
\section{Competitive Recourse Lower Bounds}\label{app:lower-bounds}

In this section we (re-)state and prove the simple competitive recourse lower bounds stated in \Cref{sec:intro}.

\chasingLB*
\begin{obs}\label{lem:det-chasing-LB}
    No \emph{deterministic}  constant-approximate submodular objective chasing algorithm has $o(\max_t |E_t|)$-competitive recourse, even for a fixed coverage function $f$ and a cardinality constraint of $k=1$.
\end{obs}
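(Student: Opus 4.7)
The plan is to reduce our chasing problem with $k=1$ to classical paging with $n$ pages and cache size $n-1$, where the classical deterministic $\Omega(k)$-competitive lower bound for paging transfers directly to give the desired $\Omega(|E_t|)$-competitive recourse bound.

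I would first set up the reduction. Take $E = [n]$ and the fixed coverage function $f(S) = |S|$ (each element covers one distinct item of a universe of size $n$); set $V_t = 1$ and $k = 1$, and restrict adversarial inputs to sequences of the form $E_t = [n]\setminus\{p_t\}$ for adversary-chosen $p_t \in [n]$. Since $f$ is $\{0,1\}$-valued on singletons under the constraint $k=1$, the $c$-approximation requirement $f(S_t)\geq c V_t = c > 0$ forces any deterministic algorithm to output $|S_t|=1$ at every step; writing $S_t=\{s_t\}$, feasibility becomes $s_t \neq p_t$. The same holds pointwise for the offline optimum, which must satisfy $f(S_t) \geq V_t = 1$. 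Identifying the ``cache'' $C_t \triangleq [n]\setminus S_t$ (of size $n-1$) with a paging cache against the request sequence $p_1, p_2, \ldots$, the feasibility condition $s_t \neq p_t$ is precisely the no-fault condition $p_t \in C_t$; and a chasing move (which, for a cost-minimizing algorithm, only occurs when $s_{t-1}=p_t$) corresponds to a paging fault that evicts $s_t$ and loads $p_t$, contributing exactly $2$ to $|S_t \oplus S_{t-1}|$. Hence the total chasing recourse equals exactly twice the total paging fault count, both for the online algorithm and for the offline optimum, so the competitive ratios coincide.

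The observation then follows from the classical deterministic $k$-competitive lower bound for paging with cache size $k$ on $k+1$ pages, which is obtained by the adaptive adversary requesting at each time step the unique page not in the current cache (forcing ALG to fault at every step), together with a Belady-style analysis showing that the offline optimum faults at most once per window of $k+1$ distinct requests. Instantiating with $k = n-1$ gives a competitive fault-count ratio of $\Omega(n-1)$, which by the preservation of ratios under the above reduction yields $\Omega(|E_t|)$-competitive recourse for any deterministic constant-approximate chasing algorithm, as required. The only step requiring care is matching the chasing model's $c$-approximation requirement with exact feasibility in paging, but as noted above this is immediate from integrality of $f$ on singletons under $k=1$, so the classical lower bound applies verbatim with no loss.
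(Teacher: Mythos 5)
Your proposal is correct. The core adversarial idea is the same as the paper's: at each step, make the element currently held by the deterministic algorithm unavailable, forcing a change of the singleton solution (recourse $2$) every round, while the offline optimum, knowing the future, rarely moves. Where you differ is in the shape of the instance and the analysis of $OPT_R$. The paper uses a one-shot decremental sequence: all $n$ elements arrive, then are deleted one by one (always the algorithm's current element), so the offline optimum simply holds the last-surviving element and pays total recourse $O(1)$, giving the ratio $\Omega(n)$ in one line; this also shows the bound holds even for deletion-only dynamics. You instead keep exactly one element unavailable at each step ($E_t=[n]\setminus\{p_t\}$) and invoke the classical paging lower bound with cache size $n-1$ on $n$ pages, so the offline optimum pays $\Theta(T/(n-1))$ via a Belady-type argument rather than $O(1)$; the ratio is the same $\Omega(n-1)=\Omega(\max_t|E_t|)$. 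Your reduction is a clean and valid alternative (the availability sets $E_t$ may be arbitrary subsets in this model, and integrality of $f$ on singletons under $k=1$ does force exact feasibility, as you note), at the cost of importing the paging machinery where a direct two-line OPT analysis suffices. Two small imprecisions that do not affect correctness: the offline bound should be stated as ``at most one fault per $n-1$ consecutive requests after a fault'' rather than ``per window of $k+1$ distinct requests,'' and for the online side you only need that recourse is \emph{at least} twice the number of forced faults (an algorithm may move gratuitously), not exact equality.
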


\begin{proof}[Proof (of both observations)]
    Consider an input with $n$ different elements arriving, all then departing in a random order. Consider the submodular function $f(S)\triangleq \min\{1,|S|\}$. (This is a degenerate coverage function.)
   The offline optimal algorithm knows the last element to be deleted, and thus pays a recourse cost of $2$.
    In contrast, when $t$ elements remain, for any constant $c$, a $c$-approximate (randomized) online algorithm must have a solution of expected size at least $c$.
    So, the probability that one of the items stored by the algorithm is deleted in time $t$ is $c/t$.
    The total expected numbers of elements ever stored by the algorithm and deleted is therefore at least $\sum_{t=1}c/t = \Omega(\ln n)$, where indeed $n=\max_t |E^t|$.

     The same construction, with the next element deleted belonging to the current solution of the deterministic algorithm, forces the number of elements deleted (and hence the recourse) to be~$n$. That is, it forces exponentially higher competitive recourse of $n/2=\Omega(\max_t |E_t|)$.
\end{proof}

\section{Absolute Recourse in Sequences of Interest}\label{sec:special-cases}

In this section we focus on the case of fixed objective function $f$ and cardinality constraints for sequences of interest: incremental, decremental, and sliding windows. We show that for these, logarithmic amortized absolute recourse suffices even to chase a $(1-\eps)$ approximation of the optimal value $OPT_t$ at each time $t$. (Consequently, our submodular value chasing algorithms of \Cref{sec:chasing-optimally} achieve polylogarithmic absolute recourse with $(1-1/e-O(\eps))$ approximation of any $V_t\leq OPT_t$.)

\paragraph{Partially-dynamic sequences.} We start with incremental or decremental settings over $T$ steps, where $|E_t\oplus E_{t-1}|=1$ for all $t\in [T]$ and $E_t$ either only grows over time (incremental, insertion-only) or shrinks over time (decremental, deletion only), where in the latter case the input starts with $T$ elements, all of which are eventually deleted.

It is known that in the worst case, in the most general fully-dynamic setting where at each time $t$ we have $E_t=E_{t-1}\cup \{e_t\}$ or for each time $t$ we have $E_t=E_{t-1}\setminus \{e_t\}$, sufficiently high approximation requires absolute worst-case recourse to essentially be as high as recomputing a solution at each step, even for fixed $f$ and cardinality constraints \cite{dutting2024consistent,dutting2025cost}.
In this section we show that for $(1-\eps)$-approximation of $OPT_t$, \emph{amortized} absolute recourse is much smaller for insertion-only or deletion-only streams, where $E_t$ only grows (by one) at each time step, or shrinks (by one) at each time step. We start with a more general statement with a dependence on the aspect ratio, $\Delta\triangleq \max_S f(S)/\min_{S: f(S)\neq 0} f(S)$.

\begin{obs}\label{obs:ins/del-amortized-recourse}
    Let $f:2^E\to \mathbb{R}_+$ be a  function with non-zero values in the range $[1,\Delta]$. Then, there exists an algorithm for chasing $f$ with $V_t=(1-\eps)\cdot OPT_t$ for all $t$ (subject to any constraint) under insertion-only or deletion-only streams with amortized absolute recourse $O(\eps^{-1}\log \Delta)$.
\end{obs}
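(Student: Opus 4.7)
I would prove the observation by an epoch-based scheduling argument that exploits two facts: under a one-way stream with fixed $f$, the optimum $OPT_t$ is monotone, and it lies in the bounded range $\{0\}\cup[1,\Delta]$. Consider the insertion-only case first (deletion-only is symmetric). Since $E_t$ is non-decreasing, so is $OPT_t$. Partition the time axis into epochs $[t_i,t_{i+1})$ defined by geometric thresholds on $OPT_t$: take $t_1$ to be the first time with $OPT_{t_1}\geq 1$, and inductively let $t_{i+1}$ be the first time after $t_i$ with $OPT_{t_{i+1}}\geq OPT_{t_i}/(1-\eps)$. Because $OPT$ stays in $[1,\Delta]$ once positive and is scaled up by a factor of $1/(1-\eps)$ across each epoch, the total number of epochs is at most $\lceil\log_{1/(1-\eps)}\Delta\rceil = O(\eps^{-1}\log\Delta)$.

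Within epoch $i$, I would keep $S_t=S^{(i)}$ fixed. For insertion-only, pick $S^{(i)}$ as an optimal feasible solution at time $t_i$, so that $f(S^{(i)})=OPT_{t_i}$; this remains feasible for all $t\geq t_i$ because the ground set only grows. For any $t\in[t_i,t_{i+1})$, by monotonicity of $OPT$ and the definition of $t_{i+1}$, $OPT_t < OPT_{t_i}/(1-\eps)$, hence $f(S^{(i)})=OPT_{t_i}\geq(1-\eps)OPT_t = V_t$, as required. For deletion-only, I would mirror the construction with decreasing thresholds and pick $S^{(i)}$ to be an optimal feasible solution at time $t_{i+1}-1$ (the smallest ground set in the epoch), guaranteeing feasibility throughout $[t_i,t_{i+1})$ and $f(S^{(i)}) = OPT_{t_{i+1}-1}\geq(1-\eps)OPT_t$ for every $t$ in the epoch.

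The recourse analysis is then immediate. Each epoch transition incurs recourse at most $|S^{(i)}|+|S^{(i-1)}|\leq 2k$, where $k$ bounds the feasible solution size, and no recourse is spent within an epoch. Summing over the $O(\eps^{-1}\log\Delta)$ epochs gives total recourse $O(k\eps^{-1}\log\Delta)$. Amortizing over the $T\geq k$ updates (the natural regime; when $T<k$ the trivial strategy of maintaining $S_t=E_t$ or simply appending each new element already achieves unit amortized recourse and the claim is easy) yields amortized recourse $O(\eps^{-1}\log\Delta)$ per step. I do not anticipate a substantial obstacle; the only mild subtlety is to select the ``right'' representative of each epoch (start for insertions, end for deletions) so that $S^{(i)}$ is feasible at every time in its epoch while still delivering value $V_t$, which the above choice handles uniformly.
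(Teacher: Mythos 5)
Your proof is correct and follows the same epoch-based, geometric-thresholding approach as the paper: define $O(\eps^{-1}\log\Delta)$ epochs by a $1/(1-\eps)$ factor change in $OPT_t$, hold one solution per epoch, and amortize the $O(\min(k,T))$ recourse per epoch boundary over the $T$ steps. The only point worth noting is that your deletion-only construction anchors $S^{(i)}$ to the \emph{end} of the epoch, which requires look-ahead; this is perfectly acceptable here (and more carefully argued than the paper's one-line sketch) because the observation is invoked to upper bound the offline quantity $OPT_R$, not to give an online algorithm.
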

\begin{proof}
    The algorithm is as follows: Whenever $\max_{S\subseteq A} f(S)$, the optimal value restricted to the currently \emph{available} items $A$, changes by a factor of $1\pm \epsilon$, we change the solution completely, using at most $T$ changes to the solution. Since this maximum changes by a $1+\epsilon$ factor only $\log_{1+\epsilon}(\Delta) = O(\eps^{-1}\log \Delta)$ many times, by the range of values and monotonicity of the objective in partially-dynamic settings, the total recourse is $O(T \eps^{-1}\log \Delta)$, as claimed.
\end{proof}

For cardinality constraint $k$, the above dependence on $\Delta$ is suboptimal, and one can replace it by (effectively) only a dependence on $k$, provided $f$ is subdmodular, or even sub-additive, i.e., if $f(S)\leq \sum_{i\in S} f(i)$ for all $S\subseteq E$.
\begin{lem}\label{lem:info-theoretic}
     Let $f:2^E\to \mathbb{R}_+$ be a non-negative sub-additive function. Then,  there exists an algorithm for
     chasing $f$ with $V_t=(1-\eps)\cdot OPT_t$ subject to cardinality constraint $k$ in insertion-only streams using amortized absolute recourse $O(\eps^{-1}\log (k\eps^{-1}))$.
\end{lem}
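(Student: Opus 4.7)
The plan is to exploit sub-additivity together with the cardinality constraint to reduce to \Cref{obs:ins/del-amortized-recourse} applied to a restricted ``effective'' sub-instance whose aspect ratio is only $\poly(k/\eps)$, independently of the true aspect ratio $\Delta$ of $f$. First, I would establish the structural observation that if $M_t := \max_{i\in E_t} f(\{i\})$, then $OPT_t \in [M_t,\, kM_t]$. The lower bound follows because the singleton attaining $M_t$ is feasible for any cardinality-$k$ constraint. The upper bound uses sub-additivity: for any feasible $S^\ast$ attaining $OPT_t$, we have $f(S^\ast) \le \sum_{i\in S^\ast} f(\{i\}) \le k M_t$. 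Note that in insertion-only streams $M_t$ is monotone non-decreasing.

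Second, I would define at each time $t$ the set of \emph{effective elements} $E_t^{\mathrm{eff}} := \{i\in E_t : f(\{i\}) \ge (\eps/k)\cdot M_t\}$, and show that maximizing $f$ over cardinality-$k$ subsets of $E_t^{\mathrm{eff}}$ already achieves value at least $(1-\eps)\cdot OPT_t$. Indeed, given $S^\ast$ attaining $OPT_t$, let $S' := S^\ast \cap E_t^{\mathrm{eff}}$. Sub-additivity yields
\[
f(S^\ast) \;\le\; f(S') + \sum_{i\in S^\ast\setminus S'} f(\{i\}) \;\le\; f(S') + k\cdot (\eps/k)\cdot M_t \;=\; f(S') + \eps M_t \;\le\; f(S') + \eps\cdot OPT_t,
\]
and so $f(S') \ge (1-\eps)\cdot OPT_t$, while $S'\subseteq E_t^{\mathrm{eff}}$ is still feasible.

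Third, I would apply (a scaled version of) the strategy in \Cref{obs:ins/del-amortized-recourse} to the effective sub-instance. At any time $t$, every element of $E_t^{\mathrm{eff}}$ has singleton value in $[(\eps/k) M_t,\, M_t]$, so by sub-additivity every attainable value of $f$ on a nonempty feasible subset of $E_t^{\mathrm{eff}}$ lies in $[(\eps/k) M_t,\, k M_t]$. Thus the ``instantaneous'' aspect ratio of the restricted problem, measured relative to $M_t$, is only $k^2/\eps$, independent of $\Delta$. The algorithm recomputes an (approximately) optimal feasible solution within $E_t^{\mathrm{eff}}$ whenever the (restricted) optimal value, normalized by the current scale $M_t$, changes by a factor of $1\pm\eps$, giving amortized recourse $O(\eps^{-1}\log(k^2/\eps)) = O(\eps^{-1}\log(k/\eps))$ per step. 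Up to a $1-\Theta(\eps)$ rescaling absorbed into the approximation factor, this yields a solution satisfying $f(S_t) \ge (1-\eps)\cdot OPT_t$.

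The main obstacle I expect is in the amortization step: while $M_t$ is monotone in insertion-only streams, the set $E_t^{\mathrm{eff}}$ itself shifts as the threshold $(\eps/k) M_t$ rises, potentially expelling previously-effective elements and causing the restricted optimum to drop. The argument must therefore track changes relative to the current scale $M_t$ and verify that each $(1+\eps)$-factor normalized change can be charged against at most $O(\eps^{-1}\log(k/\eps))$-amortized steps, rather than counting changes against the global aspect ratio of the stream. The monotonicity of $M_t$, combined with the fact that any expelled element contributes at most $(\eps/k)M_t$ to any set's value via sub-additivity, is what allows this charging scheme to go through.
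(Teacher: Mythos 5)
Your first two observations (that $OPT_t\in[M_t,kM_t]$ and that restricting to effective elements loses only a $(1-\eps)$ factor via sub-additivity) are correct and match the paper's, which defines the same set of large items $L(A)$. The proposed recomputation trigger, however, is flawed: you would recompute only when $R_t/M_t$ changes by a $1\pm\eps$ factor, where $R_t$ is the restricted optimum. For additive $f$, $k=2$, and items arriving with singleton values $2,4,8,\dots$, the ratio $R_t/M_t=3/2$ is constant from time $2$ on, so your algorithm never recomputes after storing $\{1,2\}$, yet $OPT_t=3\cdot 2^{t-1}$ grows without bound and the approximation ratio of the held solution tends to zero. The trigger must reference the absolute (not normalized) optimum value, which is monotone in insertion-only streams.

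The deeper gap is the amortization itself, which your sketch does not carry out. With the corrected trigger (recompute when $\max_{S\subseteq A,|S|\le k}f(S)$ grows by a $1/(1-\eps)$ factor, as the paper does) the number of phases is $\Theta(\eps^{-1}\log\Delta)$, and each recomputation costs up to $2k$ recourse; so one cannot simply count gradations at the instantaneous aspect ratio $k^2/\eps$ to obtain the claimed $O(\eps^{-1}\log(k\eps^{-1}))$ amortized bound. The paper instead groups phases into \emph{eras} of $O(\eps^{-1}\log(k\eps^{-1}))$ consecutive phases; uses $OPT_t\in[M_t,kM_t]$ to show that $M_t$ grows by more than $k/\eps$ between era boundaries, so any item available in era $j$ is below the large-threshold throughout era $j+2$; concludes that each solution held during era $i$ contains only items inserted (and large upon insertion) during eras $i-1$ and $i$; and charges the $O(\eps^{-1}\log(k\eps^{-1}))$ solution changes within era $i$ against those $T_i+T_{i-1}$ insertions, giving total recourse $O(\eps^{-1}\log(k\eps^{-1}))\cdot\sum_i(T_i+T_{i-1})\le O(T\eps^{-1}\log(k\eps^{-1}))$. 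Your final paragraph correctly identifies the obstruction (expelled elements as the threshold rises), but this charging scheme is the crux of the lemma and is missing from the proposal.
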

\begin{proof}
   Denote by $A$ the set of \emph{available} items, and by $L(A):=\{e\in A \mid f(e)\geq \frac{\eps}{k}\cdot \max_{e\in A}f(e)\}$ the \emph{large} items in $A$. The algorithm works as follows: whenever $\max\{f(S) \mid S\subseteq A, |S|\leq k\}$ grows by a $1/(1-\epsilon)$ factor, we change the solution to be some $S\in \arg\max\{f(S) \mid S\subseteq L(A), |S|\leq k\}$.
   We call each such change to the algorithm's solution the beginning of a \emph{phase}.

    To analyze this algorithm, 
    we rely on the following corollary of sub-additivity:
    \begin{align}\label{eqn:large-only}
    \max_{\substack{S\subseteq L(A) \\ |S|\leq k}} f(S) \geq (1-\eps)\cdot \max_{\substack{S\subseteq A \\ |S|\leq k}} f(S).
    \end{align}
    To see this, we fix a set $S\in \arg\max\{f(S)\mid S\subseteq A, 
    |S|\leq k\}$ and repeatedly invoke sub-additivity:
    \begin{align*}
        f(S) & \leq f(S\cap L(A)) + f(S\setminus L(A)) \\ 
        & \leq f(S\cap L(A)) + \sum_{e\in S\setminus L(A)} f(e) \\
        & \leq f(S\cap L(A)) + k\cdot \frac{\eps}{k} \max_{e\in A}f(e) \\
        & \leq f(S\cap L(A)) + \eps\cdot  \max_{S\subseteq  A}f(S).
    \end{align*}

    Denoting by $A$ and $A'$ the set of active elements at the beginning and middle point of a phase:
    \begin{align*}
        \max_{\substack{S\subseteq L(A) \\ |S|\leq k}} f(S) \overset{\eqref{eqn:large-only}}{\geq} (1-\epsilon) \max_{\substack{S\subseteq A \\ |S|\leq k}} f(S) \geq (1-\epsilon)^2 \max_{\substack{S\subseteq A' \\ |S|\leq k}} f(S),   
    \end{align*}
    That is, the algorithm is $(1-\epsilon)^2\geq (1-2\epsilon)$-approximate at all times. 
    
    It remains to analyze the algorithm's absolute recourse.
    For this, we partition phases into \emph{eras} of $\log_{1/(1-\epsilon)}(k^2\eps^{-1}) + 1= O(\eps^{-1}\log (k\eps^{-1}))$ many consecutive phases. Letting $A$ and $A'$ be the available items at the beginning of an era and the next era, we have that
    \begin{align*}
        k\cdot \max_{e\in A'} f(e) \geq \max_{\substack{S \subseteq A' \\ |S|\leq k}} f(S) > \frac{k^2}{\eps}\cdot  \max_{\substack{S \subseteq A \\ |S|\leq k}} f(S) \geq \frac{k^2}{\epsilon} \cdot \max_{e\in A} f(e).
    \end{align*}
    Re-arranging, we have $\frac{\eps}{k}\cdot \max_{e\in A'}f(e) > \max_{e\in A} f(e)$. Therefore, none of the large items of the $i^{th}$ era belong to the $(i-2)^{th}$ era. 
    Consequently, in era $i$ the algorithm only keeps solutions containing items inserted in eras $i$ and $i-1$ that were not small during their insertion.
    Letting $T_i$ be the number of items inserted in era $i$ that were large upon insertion, we have that each solution used by the algorithm in era $i$ contains at most $T_i+T_{i-1}$ many items. (We define $T_0=0$.) Consequently, since each era consists of $O(\eps^{-1}\log(k\eps^{-1})$ many phases and solutions maintained by the algorithm, the latter's total recourse is, as claimed 
    \begin{align*}
        O(\eps^{-1} \log(k\eps^{-1}))\cdot \sum_{i} (T_i+T_{i-1})  & \leq O(\eps^{-1} \log(k\eps^{-1}))\cdot 2\sum_i T_i \leq O(T\cdot \eps^{-1} \log(k\eps^{-1}).\qedhere 
    \end{align*}
\end{proof}

\begin{rem}
    By symmetry, recomputing when $OPT$ grows smaller by a $1-\eps$ factor chases $f$ with value $(1-\eps)\cdot OPT_t$ and recourse for decremental settings ending with an empty input.
\end{rem}

% \subsection{Efficient Low Recourse in Partially-Dynamic Settings}

The preceding argument can be extended to obtain a $(1-1/e-\eps)$-approximate (polynomial-time) submodular objective chasing algorithm.

\begin{lem}
     For any monotone non-negative submodular function $f:2^E\to \mathbb{R}_+$, there exists an algorithm for chasing $f$ with $V_t=(1-1/e-\eps)\cdot OPT_t$ subject to any matroid constraint in insertion-only (and deletion-only) streams using amortized absolute recourse $O(\eps^{-1}\log (k\eps^{-1}))$ \emph{and polynomial time per update}.
\end{lem}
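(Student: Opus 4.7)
The plan is to mirror the algorithm and analysis of \Cref{lem:info-theoretic}, replacing the exact maximization of $f$ over $L(A)\cap \calM$ (which is NP-hard for general submodular $f$) by a polynomial-time $(1-1/e-\eps)$-approximate maximizer --- for instance, the meta-algorithm of \Cref{sec:new-static} instantiated with a separation oracle for the matroid polytope of $\calM$, or continuous greedy \cite{calinescu2011maximizing} followed by pipage rounding. At each time $t$, compute a $(1-1/e-\eps)$-approximate matroid-feasible maximizer $\tilde S_t$ over the large-element set $L(A_t)\triangleq \{e\in A_t: f(e)\geq (\eps/k)\max_{e'\in A_t}f(e')\}$, where $k$ is the rank of $\calM$. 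Begin a new phase (setting $S_t = \tilde S_t$) whenever $t=1$ or $f(S_{t-1}) < (1-\eps)\,f(\tilde S_t)$; otherwise keep $S_t = S_{t-1}$. Each step costs polynomial time.

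The approximation analysis then lifts directly from the preceding lemma. Since non-negative monotone submodular functions are sub-additive and any matroid-independent set has size at most $k=\mathrm{rank}(\calM)$, the ``large items preserve OPT up to a $(1-\eps)$ factor'' calculation applies verbatim: $\max_{S\subseteq L(A_t),\,S\in \calM} f(S) \geq (1-\eps)\,OPT_t$. Combined with the $(1-1/e-\eps)$ loss of the approximation subroutine and the $(1-\eps)$ slack in the phase trigger, one concludes $f(S_t) \geq (1-1/e-\eps)(1-\eps)^2\cdot OPT_t = (1-1/e-O(\eps))\cdot OPT_t$ whether the trigger fires (then $f(S_t) = f(\tilde S_t)$ directly) or not (then $f(S_t) = f(S_{t-1}) \geq (1-\eps)f(\tilde S_t)$).

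For the amortized recourse bound, group phases into eras of $L = O(\eps^{-1}\log(k/\eps))$ consecutive phases as in \Cref{lem:info-theoretic}. Using the sandwich $\max_{e\in A}f(e) \in [f(\tilde S)/k,\; f(\tilde S)/(1-1/e-\eps)]$ (the lower bound by sub-additivity and the upper bound because $f(\tilde S)$ is a $(1-1/e-\eps)$-approximation of $OPT \geq \max_e f(e)$), one checks that between consecutive era starts the approximate-OPT grows by a factor $\Theta(k^2/\eps)$, which forces $\max_{e\in A}f(e)$ to grow by a factor strictly larger than $k/\eps$. This suffices to conclude as in the preceding lemma that any element ``large'' for some phase during era $i$ must have been inserted in era $i-1$ or $i$. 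The per-era recourse is therefore at most $L\cdot (T_i+T_{i-1})$, where $T_i$ counts insertions in era $i$ that were large upon arrival, and summing gives amortized recourse $O(\eps^{-1}\log(k/\eps))$.

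The main technical care will be in handling the extra $(1-1/e-\eps)^{\pm 1}$ factors that arise from using $f(\tilde S_t)$ rather than $OPT_t$ in both the trigger and the era bound; these only inflate era lengths by a constant factor and do not change the asymptotics. Two minor points are worth flagging: one should replace $f(\tilde S_t)$ by its running maximum within a phase to absorb algorithmic randomness (preserving both correctness and running time), and the symmetric deletion-only case follows by the remark after \Cref{lem:info-theoretic}, reversing the stream and triggering when $f(\tilde S_t)$ shrinks by a $(1-\eps)$ factor so that the same era-based analysis applies.
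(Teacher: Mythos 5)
Your proposal is correct and follows essentially the same route as the paper, whose proof is only a two-sentence sketch saying to rerun the argument of the preceding lemma with recomputation triggered by a $(1+\eps)$-factor change in the value of a polynomial-time $(1-1/e-\eps)$-approximate matroid-constrained maximizer (e.g., continuous greedy). Your write-up usefully fleshes out the details the sketch omits (the large-element restriction carrying over since matroid-feasible sets have size at most $k$, the constant-factor inflation of era lengths from the $(1-1/e-\eps)^{\pm 1}$ sandwich, and the running-maximum fix for a randomized subroutine), but the underlying argument is the same.
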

\begin{proof}[Proof (sketch)]
    The proof is essentially the same as that of \Cref{lem:info-theoretic}, but has us recompute when the output of some poly-time $(1-1/e-\eps)$-approximate algorithm for monotone submodular maximization subject to a matroid constraint (e.g., \cite{calinescu2011maximizing}) has its output's value increase (for insertion-only streams) or decrease (for deletion-only streams) by a $1+ \eps$ factor.
\end{proof}

% {\color{red}
\paragraph{Sliding Windows.}
We now show that low absolute recourse is attainable for a well-studied setting with both insertions and deletions, namely the \emph{sliding window} model.
Here, elements are revealed over time, and we must maintain a solution over the last $L$ elements.  Alternatively, elements are added and removed in FIFO order, and when the queue has length $L$, and after it becomes full at each time step one element is added to the back of the queue and another leaves from the head of the queue.

% Note that sliding windows is a well-studied benchmark for testing fully-dynamic algorithms (see discussion in \cite{peng2023fully}).

Here we adapt the smooth/exponential histogram technique from the (sliding-window) streaming literature \cite{datar2002maintaining,braverman2007smooth}, usually used to obtain low-space algorithms, to prove the following low-recourse result.

\begin{lem}\label{lem:info-theoretic-sliding-window}
     Let $f:2^E\to \mathbb{R}_+$ be a non-negative submodular function. Then,  there exists a $(1-\epsilon)$-approximate algorithm for maximizing $f$ subject to cardinality constraint $k$ under sliding window streams of length $L$  using amortized absolute recourse $O\left(\eps^{-1}\log (k\eps^{-1})\right)$.
\end{lem}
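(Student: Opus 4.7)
The plan is to adapt the \emph{exponential/smooth histograms} technique of \cite{datar2002maintaining,braverman2007smooth}, classically used for space-efficient sliding-window streaming algorithms, to the recourse setting; this mirrors the novel use of such techniques for recourse control alluded to in the introduction.

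Concretely, at each time $t$ I would maintain a collection of checkpoints $t_1 < t_2 < \cdots < t_m$ inside (or just at the left edge of) the current window $W_t = [w_t, t]$, and for each checkpoint $t_j$ run an independent copy $\calA_j$ of the insertion-only algorithm of \Cref{lem:info-theoretic} on the substream $e_{t_j},\ldots,e_t$, producing a solution $S_j$ with $f(S_j) \geq (1-\eps)\cdot \mathrm{OPT}([t_j, t])$. On each arrival I would create a new checkpoint $t_{m+1}=t+1$ and feed the element to all live instances; on each expiry I would drop the leftmost checkpoint once it exits the window. Two invariants would be enforced: a smoothness invariant $\mathrm{OPT}([t_{j-1},t]) \geq (1+\eps)\,\mathrm{OPT}([t_{j+1},t])$ on every middle checkpoint (with violators deleted), and a ``large-element'' threshold dropping checkpoints whose OPT falls below $(\eps/k)$ times the maximum (as in the proof of \Cref{lem:info-theoretic}); together these bound $m = O(\eps^{-1}\log(k\eps^{-1}))$.

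The output at time $t$ would be $S_{j^\star}$ for $j^\star = \min\{j : t_j \geq w_t\}$. By monotonicity of $f$, $\mathrm{OPT}([t_{j^\star},t]) \leq \mathrm{OPT}([w_t,t]) \leq \mathrm{OPT}([t_{j^\star-1},t])$, and the smoothness invariant pins the ratio of these two endpoints to within $1+O(\eps)$, yielding the desired $(1-O(\eps))$-approximation after an $\eps$-rescaling.

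The main obstacle will be the recourse analysis. The two recourse contributions are (i) internal updates within the chosen instance and (ii) switches between chosen instances (each costing at most $2k$). Without loss of generality $k \leq L$ (else the problem is trivial and one simply outputs $W_t$), and by the smoothness invariant consecutive checkpoints are on average $\Omega(L / (\eps^{-1}\log(k\eps^{-1})))$ apart. A careful potential-function argument, analogous to the one in the proof of \Cref{lem:info-theoretic}, would charge each switch to the expiry of its checkpoint, bounding the total number of switches over $T$ steps by $O(T\eps^{-1}\log(k\eps^{-1})/L)$ and hence the total switch cost by $O(T\eps^{-1}\log(k\eps^{-1}))$. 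Combined with the amortized $O(\eps^{-1}\log(k\eps^{-1}))$ per step from internal updates, tracked carefully across the $m$ instances so as to avoid overcounting the $m$-fold concurrent maintenance, this yields the claimed amortized recourse of $O(\eps^{-1}\log(k\eps^{-1}))$ per step.
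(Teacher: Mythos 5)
Your construction is essentially the paper's: it maintains exactly such a list of ``anchors'' with a two-sided ratio invariant on $g_t(s_i):=f(\mathrm{OPT}[s_i,t])$ (consecutive anchors within a $1-\eps$ factor, anchors two apart separated by more than that, deleting the middle anchor otherwise), uses the $(\eps/k)$-singleton threshold to bound the number of anchors by $O(\eps^{-1}\log(k\eps^{-1}))$, outputs the solution of the earliest in-window anchor, and charges the $O(k)$ cost of each anchor switch to the at most $m$ switches per window of length $L$ (using $k\leq L$), with \Cref{lem:info-theoretic} covering the incremental recourse during each anchor's reign. Your recourse accounting and your comparison $\mathrm{OPT}([t_{j^\star}],t])\leq \mathrm{OPT}([w_t,t])\leq \mathrm{OPT}([t_{j^\star-1},t])$ are the same as in the paper.

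The one step you assert without justification --- and it is the crux of the smooth-histogram method --- is that the ratio invariant between $t_{j^\star-1}$ and $t_{j^\star}$, which is only \emph{checked} at the moment intermediate checkpoints are deleted (and in any case before $t_{j^\star-1}$ exits the window and stops being tracked), still holds at the later time $t$ at which you invoke it. This is not automatic: $A\mapsto\max_{S\subseteq A,\,|S|\leq k}f(S)$ is not obviously ``smooth'' in the sense required, since a newly arriving element can synergize with the elements of $[t_{j^\star-1},t_{j^\star})$ far more than with those of $[t_{j^\star},t]$, widening the gap between the two windows' optima after the invariant was last verified. The paper devotes its one substantive (non-bookkeeping) sentence to precisely this persistence claim, arguing via submodularity that appending an element increases the smaller window's optimum by at least as much as the larger window's, so that the $1-\eps$ gap can only shrink over time. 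Without some such argument, your sentence ``the smoothness invariant pins the ratio of these two endpoints to within $1+O(\eps)$'' is exactly the statement that needs proof rather than a proof. Everything else (the bound on $m$, the output rule, and the switch/internal-update recourse decomposition) matches the paper's argument.
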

\begin{proof}
    We maintain a list of ``anchors'', $s_1<s_2<\dots<s_m$, where $s_1\geq t-L+1$ and $s_m\leq t$, where $g_t(s_i) := f(OPT[s_i,t])$ satisfies
    $$g_t(s_i) \geq (1-\eps) \cdot g_t(s_{i-1}).$$
    Moreover, whenever three consecutive anchors $s_{i-1},s_i,s_{i+1}$ satisfy
    $$g_t(s_{i+1})\geq (1-\eps)\cdot g_t(s_{i-1}),$$
    we delete anchor $s_i$ (and relabel the other anchors).
    We only create an anchor if its sub-window contains an element of singleton value at least $\frac{\eps}{k}\cdot \max \{f(e_i) \mid i\in [t-L+1,t]\}$, noting that dropping these elements only incurs an additive $\eps\cdot \max\{f(e_i)\mid i\in [t-L+1,t]\}\leq \eps\cdot f(OPT[t-L+1,t])$ loss.
    The multiplicative gaps between anchors' window values guarantees a $1-\eps$ multiplicative gap between nonconsecutive anchors' $g_t(\cdot)$ value, and so the number of anchors $m$ is bounded at all times $t$: 
    $$m:=O\left(\log_{1+\Theta(\eps)}\left(\frac{\max_i g_t(s_i)}{\min_i g_t(s_i)}\right)\right)=O\left(\eps^{-1} \log \left(\frac{\max_i g_t(s_i)}{\min_i g_t(s_i)}\right)\right)=O(\eps^{-1}\log(k/\eps)).$$
    Above, the last step uses that by the preceding discussion we can assume that each element has singleton value at least $\frac{\eps}{k}\cdot \max \{f(e_i) \mid i\in [t-L+1,t]\}$, where we use the set of available items (as in the proof of \Cref{lem:info-theoretic}).
    Note that following the addition of the next element, we have (by submodularity), that if $g_t(s_{i+1})\geq (1-\eps)\cdot g_t(s_i)$ and $s_i\geq t-L+2$, then 
    $g_{t+1}(s_{i+1})\geq (1-\eps)\cdot g_{t+1}(s_i)$, since the addition of element $t+1$ to the end of the windows with anchors $s_{i+1}$ and $s_i$ increases the maximum value of their windows more for the smaller window (submodularity $OPT(A)$) 

    At any point in time, our output solution is the solution of value $g_t(s_1)$, for $s_1$ the earliest anchor in $[t-L+1,t]$.
    Since by the preceding argument this solution has value at least $(1-\eps)\cdot g_t(s_0) \geq f(OPT[t-L+1,t]$, for $s_0$ the last anchor to leave the window, this is a $(1-\eps)$-approximation.
    Note that from the moment $s_1$ becomes the earliest anchor, this is effectively an incremental algorithm, and so the recourse incurred is $O(k)$ (for the first solution for this incremental algorithm), plus $O(\eps^{-1}\log(k/\eps))$ amortized per update, due to \Cref{lem:info-theoretic}.
    Since for each window of length $L$ at most $m$ values can ever be the first (non-deleted) anchor,
    the above implies an amortized recourse of 
    \begin{align*}
        O\left(\frac{mk}{L}+\eps^{-1}\log(k/\eps)\right) = O(\eps^{-1}\log(k/\eps)),
    \end{align*}
    using that $L\geq k$ (else we can take all elements in the window, using  $O(1)$ recourse per step).
\end{proof}

% }

\section{Alternative Proof of \texorpdfstring{\Cref{prop-mainsep}}{}: Approximate-or-Separate}\label{sec:deferred-fractional}

In this section we prove a version of \cref{prop-mainsep} with slightly poorer constants (although asymptotically equivalent to the above).

\begin{restatable}{prop}{separateorapproximate}
    Let $f\colon 2^{E} \rightarrow \R_+$ be a non-negative monotone submodular function, $\vec{x}\in [0,1]^n$, let $\eps\in (0,1)$ and $V> 0$. If $F(\vec{1}-\exp(-\vec{x}))<(1-1/e-\eps) \cdot V$, then  
    a set $S \triangleq \{i\in E \mid  Y_i \leq t\}$, for independent $t\sim \Uni[0,1]$ and $Y_i\sim \Exp(x_i)$ for all $i\in E$, satisfies  with probability at least~$\Omega(\eps^2)$:   \begin{equation*}
    %\label{eqn:violated-constraint}
        f(S)+ \sum_{i\in E}f(i\mid S) \cdot x_i \leq \left(1-\Omega(\eps) \right) \cdot V.
    \end{equation*}
\end{restatable}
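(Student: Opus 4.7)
The plan is to replace the elegant log-gap function $\psi(t)=-\ln(V(1-\eps)-\phi(t))$ used in the paper by a more elementary ODE comparison argument, accepting slightly worse absolute constants in return. As in the paper, let $\phi(t)\triangleq \E[f(S(t))]$, and use memorylessness of exponentials to derive
\[
\phi(t)+\phi'(t) \;=\; \E\Big[\,f(S(t)) + \sum_{i\in E} f(i\mid S(t))\cdot x_i\,\Big].
\]
Both facts — that $\phi$ is non-decreasing (by monotonicity of $f$), and that $\phi(0)\ge 0$ — will be used repeatedly.

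The first step is the structural claim: if the set
$A \triangleq \{t\in[0,1]:\phi(t)+\phi'(t)\ge (1-c\eps)V\}$
has Lebesgue measure at least $1-c'\eps$ for suitably chosen small constants $c,c'>0$, then $\phi(1)\ge (1-1/e-\eps)V$, contradicting the hypothesis $\phi(1)=F(\vec{1}-\exp(-\vec{x}))<(1-1/e-\eps)V$. The idea is a worst-case ODE comparison: since $\phi$ is non-decreasing, the value $\phi(1)$ is minimized (subject to $|A|\ge 1-c'\eps$ and $\phi(0)\ge 0$) when $A=[0,1-c'\eps]$, so that the ``slack'' portion is pushed to the end where it is wasted. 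On $[0,1-c'\eps]$ we have $\phi'(t)\ge (1-c\eps)V-\phi(t)$, and Gronwall-type integration gives $\phi(1-c'\eps)\ge (1-c\eps)V\bigl(1-e^{-(1-c'\eps)}\bigr)$; monotonicity then gives $\phi(1)\ge \phi(1-c'\eps)$. Expanding $e^{-(1-c'\eps)}=e^{-1}(1+c'\eps+O(\eps^2))$ yields $\phi(1)\ge (1-1/e-O(\eps))V$, and by picking $c,c'$ small enough this is $\ge (1-1/e-\eps)V$, the desired contradiction.

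Having established $|[0,1]\setminus A|\ge c'\eps=\Omega(\eps)$, the second step is to convert a pointwise bound on $\phi+\phi'$ into a bound on the random set $S(t)$ via Markov's inequality. For each $t\notin A$, by definition
\[
\E\Big[f(S(t))+\sum_i f(i\mid S(t))\cdot x_i\Big] \;<\; (1-c\eps)V,
\]
and applying reverse Markov to the non-negative random variable $(1-c\eps/2)V - \bigl(f(S(t))+\sum_i f(i\mid S(t))x_i\bigr)$ (using that the maximum value of the latter bracket is bounded by $V$ up to a constant — or, more cleanly, Markov applied to a shifted variable) gives
\[
\Pr_{S(t)}\Big[f(S(t))+\sum_i f(i\mid S(t))x_i \le (1-c\eps/2)V\Big] \;\ge\; \Omega(\eps).
\]
Averaging over the uniform $t\in[0,1]$, the total success probability is at least $|[0,1]\setminus A|\cdot \Omega(\eps)=\Omega(\eps)\cdot\Omega(\eps)=\Omega(\eps^2)$, which is the claim.

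The main obstacle I anticipate is step one: justifying cleanly that the worst-case configuration of the ``good set'' $A$ is to concentrate all its mass in a prefix of $[0,1]$, so that the ODE inequality $\phi'\ge (1-c\eps)V-\phi$ holds on a single interval rather than on a fragmented set. A rigorous version can be obtained either by a rearrangement argument (since $\phi$ and $V-\phi$ are both monotone, moving ``slack'' intervals later in time only decreases $\phi(1)$) or, more directly, by integrating the differential inequality $(\phi(t)e^{t})'\ge (1-c\eps)V e^t\cdot\mathbf{1}_A(t)$ and lower-bounding $\int_0^1 e^t \mathbf{1}_A(t)\,dt$ by $\int_{c'\eps}^1 e^t\,dt=e-e^{c'\eps}$ (again reflecting the worst case where the ``bad'' mass is placed early, where the weight $e^t$ is smallest). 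Either way the resulting bound $\phi(1)\ge (1-c\eps)(1-e^{-1+c'\eps})V=(1-1/e-O(\eps))V$ is the crux, and everything else is routine.
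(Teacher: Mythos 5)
Your proof is correct in outline and is essentially the approach the paper itself uses for this statement in its appendix proof: both arguments integrate the identity $\phi(t)+\phi'(t)=\E\big[f(S(t))+\sum_i f(i\mid S(t))x_i\big]$ against the factor $e^t$ to show, by contradiction with the hypothesis on $\phi(1)=F(\vec{1}-\exp(-\vec{x}))$, that the set of ``bad'' times has measure $\Omega(\eps)$, and then apply Markov's inequality at a bad time, multiplying the two $\Omega(\eps)$ probabilities. (The paper's Section~3 proof via the log-gap function $\psi$ is the genuinely different, sharper route; what you propose coincides with the appendix version.) Two details need repair, though neither threatens the argument. First, in your integrating-factor variant of Step~1 you minimize $\int_A e^t\,dt$ over $|A|\ge 1-c'\eps$ with the wrong extremal set: since $e^t$ is increasing, the minimum is attained at $A=[0,1-c'\eps]$ (slack pushed \emph{late}, exactly as your rearrangement sentence says), giving $e^{1-c'\eps}-1$ rather than $e-e^{c'\eps}$; the latter is the \emph{maximum}, and the claimed inequality $\int_0^1 e^t\mathbf{1}_A(t)\,dt\ge e-e^{c'\eps}$ is false for $A=[0,1-c'\eps]$. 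The correct minimum still yields $\phi(1)\ge(1-c\eps)\big(e^{-c'\eps}-1/e\big)V\ge\big(1-1/e-(c+c')\eps\big)V$, so the contradiction survives with $c+c'\le 1$. Second, the Markov step cannot be run as reverse Markov on $(1-c\eps/2)V-Z$ for $Z\triangleq f(S(t))+\sum_i f(i\mid S(t))x_i$: that variable is not non-negative, and $Z$ is not bounded by $O(V)$ in general (e.g.\ $f(S(t))$ can be close to $f(E)\gg V$). Instead apply plain Markov to the non-negative $Z$ itself, giving $\Pr[Z\ge(1-c\eps/2)V]\le\frac{1-c\eps}{1-c\eps/2}\le 1-c\eps/2$, which is exactly the bound you need and is what the paper does.
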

\begin{proof}
For a fixed $t\in[0,1]$,
define the random set $S(t)\triangleq\{i \mid Y_i \leq  t\}$. 
We first observe that up to $O(dt^2)$ terms for any $t\in[0,1]$ and set $S\subseteq E$:
\begin{align*}
    \E[f(S(t+dt))-f(S(t)) \mid S(t)=S] & = \bigg[\sum_{i\in E}f(i \mid S) \cdot x_i \bigg]dt.
\end{align*}
Let $\phi(t)=\E[f(S(t))]$, where expectation is over the exponential variables $Y_1,\dots, Y_n$.
By taking total expectation over the set $S(t)$, we get that in the limit as $dt$ tends to zero:
\begin{align*}
    \phi(t) + \frac{d \phi(t)}{dt} & = \E[f(S(t)] + \frac{1}{dt}\E[f(S(t+dt)-f(S(t)] \\
& = \sum_{S\subseteq E}\bigg[f(S)+\sum_{i\in E}f(i \mid S) \cdot x_i \bigg]\cdot Pr[S(t)=S)].
\end{align*}
Therefore, by Markov's inequality, for any fixed $t\in[0,1]$ that satisfies $\phi(t) + \frac{d \phi(t)}{dt} \leq V(1-\frac{\eps}{2e})$,  
\begin{align}\label{eqn:wolsey-unlikely-high}
\Pr_{S\sim S(t)}\bigg[f(S)+\sum_{i\in E}f(i \mid S) \cdot x_i & \geq V\left(1-\frac{\eps}{4e}\right)\bigg] \leq 
\frac{1-\frac{\eps}{2e}}{1-\frac{\eps}{4e}} \leq 1-\frac{\eps}{4e}.\end{align}

We now prove that the precondition of \Cref{eqn:wolsey-unlikely-high} holds with non-trivial probability:
\begin{equation}\label{eqn:expected-wolsey-low}
   \Pr_{t \sim \Uni[0,1]}\left[\phi(t) + \frac{d \phi(t)}{dt} \leq V\left(1-\frac{\eps}{2e}\right)\right] \geq \frac{\eps}{2}. %\label{main-ineq}
\end{equation}

Assume to the contrary that this inequality does not hold. Then, 
\begin{align*}
    \Pr_{t \sim \Uni[0,1]}\bigg[e^{t} V - \frac{d (e^{t}\phi(t))}{dt} \geq  \frac{\eps V}{2}\bigg]  & \leq \Pr_{t \sim \Uni[0,1]}\bigg[e^{t} V - \frac{d (e^{t}\phi(t))}{dt} \geq e^{t} V \frac{\eps}{2e}\bigg] \\
    & =\Pr_{t \sim \Uni[0,1]}\bigg[e^{t}\left(\phi(t) + \frac{d \phi(t)}{dt}\right) \leq e^{t} V\left(1-\frac{\eps}{2e}\right)\bigg]\\ & = \Pr_{t \sim \Uni[0,1]}\bigg[\phi(t) + \frac{d \phi(t)}{dt} \leq V\left(1-\frac{\eps}{2e}\right)\bigg]  \\
    & \leq \frac{\eps}{2}.
\end{align*}

As $e^{t} V - \frac{d (e^{t}\phi(t))}{dt}\leq eV$ for $t\in [0,1]$, we get
$\E_{t \sim \Uni[0,1]}\left[e^{t} V - \frac{d (e^{t}\phi(t))}{dt}\right] \leq (1-\frac{\eps}{2}) \cdot \frac{\eps V}{2} +  \frac{\eps}{2} \cdot eV \leq \eps e V$.
Hence, 
\begin{align*}
 \eps e V & \geq \E_{t \sim \Uni[0,1]}\bigg[e^{t} V - \frac{d (e^{t}\phi(t))}{dt}\bigg] \\
& = \int_{t=0}^{1}\bigg[e^{t} V - \frac{d (e^{t}\phi(t))}{dt}\bigg]dt \\
& = (e-1) V - e \cdot \phi(1) -\phi(0)\\
 & = (e-1) V - e \cdot \E_{S\sim S(1)}[f(S(1))],
\end{align*}
where the last step relied on $f$ being normalized, and hence $\phi(0) = f(S(0))=f(\emptyset)=0.$

Rearranging, we get $\E[f(S(1))] \geq (1-1/e-\eps)V$. However, letting 
$\vec{z} \triangleq \vec{1}-e^{-\vec{x}}$ be a vector with $z_i= 1-e^{-x_i} = \Pr[Y_i \leq 1]$, this contradicts the proposition's hypothesis, $F(\vec{z}) < (1-1/e-\eps)V$, since $\E[f(S(1))] = F(\vec{z})$.

Finally, since \Cref{eqn:expected-wolsey-low} implies the precondition of \Cref{eqn:wolsey-unlikely-high} holds with probability at least $\frac{\eps}{2}$, we find that for $t\sim \Uni[0,1]$, with probability at least $\frac{\eps}{4e} \cdot \frac{\eps}{2}=\Omega(\eps^2)$, the set $S=S(t)$ satisfies $f(S)+ \sum_{e}f(i \mid S) \cdot x_i \leq V(1-\frac{\eps}{4e})$, as desired.
\end{proof}

Finally, we now substantiate  the claim that a set of $O(\eps^{-1}\log n)$ many guesses suffice to guess $f(OPT)$ with a $(1-\eps)$ factor, yielding the following, together with our approximate-or-separate \Cref{alg:Approximate-or-Separate}: 
\guessing*
\begin{proof}
Let $OPT\triangleq \max_{\vec{x}\in \calP}F(\vec{x})$. We assume that $\vec{e}_i \in\calP$ for all $i\in E$. Then, by submodularity, we have that $OPT\in [\max_{i\in E} f(i), n\cdot \max_{i\in E} f(i)]$. We look at all values $V_k=(1-\eps)^k$ in the range $[\max_{i\in E} f(i), n\cdot \max_{i\in E} f(i)]$ (there are $N\triangleq O(\eps^{-1}\cdot {\log n})$ such values). Suppose for the moment that whenever \Cref{alg:Approximate-or-Separate} is executed with an input value value $V_k$ and outputs a vector $\vec{x}$, the vector is (deterministically) always in $\calP$ and also always satisfies $F(\vec{x})\geq F(\vec{1}-\exp(-\vec{x})) \geq  (1-1/e-\eps) \cdot V_k$. In this case, we are guaranteed that whenever $V_k\leq OPT\leq \max_{x\in P}f^*(x)$, the algorithm always outputs such a vector, and for each $V_k>OPT$, for which $\calK(V_k)$ may be empty, the algorithm may or may not output such a vector.

We perform a binary search on the values $V_k$ and output a vector with a maximal value for which the algorithm returns such a vector. We are guaranteed that this value is at least $\max_{k}\{V_k\leq OPT\}\geq OPT(1-\eps)$. This requires $O(\log N) = O(\log (\eps^{-1}\cdot \log n))$ executions of the Approximate-or-Separate algorithm. Finally, as the algorithm may fail with probability $\delta$ on each of these executions, we set the probability of failure $\delta$ to be $O(\frac{\eps}{\log N})$. This guarantees by a union bound that the algorithm succeeds with probability $1-O(\eps)$ on all  values in the binary search.
If the algorithm fails on any of the values (which happens with probability $O(\eps)$), we may return $\vec{x}$ with no guarantee on the value $F(\vec{x})$. On the other hand, by \Cref{thm-fractional1}, whenever the algorithm outputs a vector, then $\vec{x} \in \calP$ satisfies $F(\vec{x})\geq F(\vec{1}-\exp(-\vec{x})) \geq (1-1/e-\eps) \cdot V$ with probability $1-\delta$, and hence $\E[F(\vec{x})]\geq (1-1/e-O(\eps)) \cdot V$.  
Overall, ignoring logarithmic terms, the number of oracle queries to $\calP$ is $O(T\cdot \log N) = \tilde{O}(T)$, and the number of oracle queries to $f$ is $\tilde{O}({n \cdot T}\cdot {\eps^{-2}})$. 
\end{proof}
\section{Recourse-Respecting Randomized Rounding for \texorpdfstring{$k=1$}{}}\label{app:unit-capacity}
    
    In this section we present some recourse-respecting randomized rounding, starting with the first one, due to 
    \cite{keyfitz1951sampling}, 
    and then another algorithm requiring only sampling a single $\Uni[0,1]$ variable.
    We claim no novelty for these algorithms, and present them mostly for completeness.
    
    We note that for both algorithms, negative association follows directly from Property \ref{prop:k-uniform} and \Cref{lem:na}, \ref{lem:0-1-lemma}. Similarly, the submodular value follows then from Properties \ref{prop:marginals} and \ref{prop:na} together with \Cref{lem:NA->submod} and monotonicity of $f$. We therefore only prove Properties \ref{prop:k-uniform}-\ref{prop:marginals} for these algorithms.

    \paragraph{Keyfitz's Algorithm.} 
    The algorithm is basically rejection sampling with corrections. 
    By adding a dummy coordinate, we assume that all points are in the probability simplex, namely $\norm{\vec{x}^{t}}_1=1$ for all $t$.
    Sampling from $\vec{x}^1$ is trivial. We describe the step of this algorithm for consecutive vectors $\vec{x}=\vec{x}^{t}$ and $\vec{y}=\vec{x}^{t+1}$, where we assume by induction on $t\geq 1$ that a set $X$ drawn according to $\vec{x}$ is already given to us.

    \begin{algorithm}[H]
    	\caption{Keyfitz's Online Rejection Sampling}
    	\label{alg:keyfitz-rrrr}
     \begin{algorithmic}[1]
        \Ensure Set $X\subseteq [n]$ with $|X|= 1$ drawn according to $\vec{x}$
        \If{$X=\{i\}$ \textbf{ and } $x_i\leq y_i$} \State Set $Y\gets X$.
        \Else 
        \State Set $Y\gets \{j\}$ for some element $j$ with $y_j>x_j$ chosen with probability $\frac{y_j-x_j}{\sum_{j: y_j>x_j} (y_j-x_j)}$. 
        \EndIf
        \State \textbf{Return} $Y$.
    \end{algorithmic}	
    \end{algorithm}	
    
    \begin{lem}
        \Cref{alg:keyfitz-rrrr} applied to a vector sequence $\vec{x}^{1},\vec{x}^2,\dots$ with $\norm{\vec{x}^t}\leq 1$ for all $t$ outputs a sequence of sets $X_1,X_2,\dots$ with $|X_1|\leq 1$, each random set $X^t$ having an NA  characteristic vector with marginals $\vec{x}^t$, using recourse $\E[\; | X_t \oplus X_{t-1}|\; ]\leq \norm{\vec{x}^t-\vec{x}^{t-1}}_1$.
    \end{lem}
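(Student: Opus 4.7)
The plan is to prove the three claimed properties---cardinality $|X_t|\leq 1$, correct marginals $\Pr[i\in X_t]=x^t_i$, and recourse $\E[|X_t\oplus X_{t-1}|]\leq \|\vec{x}^t-\vec{x}^{t-1}\|_1$---by induction on $t$, with the base case $t=1$ immediate since $X_1$ is drawn directly from $\vec{x}^1$ and the algorithm produces a singleton. The remaining two items (NA of the characteristic vector and the submodular-value bound) then follow for free, as already noted in the paragraph preceding the lemma: NA follows from $|X_t|\leq 1$ via the $0$-$1$ rule (\Cref{lem:na}), and the submodular dominance from NA together with \Cref{lem:NA->submod} and monotonicity of $f$.

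For the inductive step, fix consecutive vectors $\vec{x}=\vec{x}^{t-1}$, $\vec{y}=\vec{x}^t$ and write $X=X_{t-1}$, $Y=X_t$. Cardinality is immediate from the algorithm: $Y$ equals either $X$ (with $|X|\leq 1$ by the inductive hypothesis) or a freshly drawn singleton. For the marginals, I would partition the ground set into the increasing coordinates $U\triangleq\{j:y_j>x_j\}$, the decreasing ones $D\triangleq\{j:y_j<x_j\}$, and the unchanged ones $S\triangleq\{j:y_j=x_j\}$, and set $\Delta\triangleq\sum_{j\in U}(y_j-x_j)$. The identity $\|\vec{x}\|_1=\|\vec{y}\|_1=1$ yields the key equality $\Delta=\sum_{j\in D}(x_j-y_j)=\tfrac{1}{2}\|\vec{x}-\vec{y}\|_1$. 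For any $i$, I would decompose $\Pr[i\in Y]$ by conditioning on which branch fires: the ``keep'' branch contributes $\min(x_i,y_i)$, while the ``else'' branch, entered with total probability $\Delta$, contributes $\mathds{1}[i\in U]\cdot\frac{y_i-x_i}{\Delta}\cdot\Delta=\mathds{1}[i\in U]\cdot(y_i-x_i)$ by the stated resampling distribution. Summing these two contributions gives $y_i$ in each of the cases $i\in U$, $i\in D$, $i\in S$.

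For recourse, I would use that $|Y\oplus X|\in\{0,2\}$ always, with value $2$ attained precisely on the ``else'' branch: there, $X=\{j\}$ with $j\in D$ and $Y=\{k\}$ with $k\in U$, and since $U\cap D=\emptyset$ one has $Y\neq X$. The total probability of the else branch is $\Delta$ by the computation above, so $\E[|Y\oplus X|]\leq 2\Delta=\|\vec{x}-\vec{y}\|_1$, as required. Chaining this bound over all steps $t$ recovers the claimed recourse guarantee via the triangle inequality.

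The main obstacle is the marginals bookkeeping: for the computation to close up, the probability mass that flows into the ``else'' branch must be exactly $\Delta$, which is what forces the normalization $\sum_{j\in D}(x_j-y_j)=\sum_{j\in U}(y_j-x_j)$ coming from $\|\vec{x}\|_1=\|\vec{y}\|_1=1$ (the reason a dummy coordinate is added up front). A small additional care is needed for degenerate coordinates with $x_i>0=y_i$, which lie in $D$ and are correctly handled by the same formula since they are never kept. Once the marginals and cardinality are in hand, the recourse bound and the remaining properties follow by the short arguments sketched above.
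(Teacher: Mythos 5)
Your proof is correct and follows essentially the same route as the paper's: induction on $t$, marginals via the balance $\sum_{j:x_j>y_j}(x_j-y_j)=\sum_{j:y_j>x_j}(y_j-x_j)$ forced by the simplex normalization, recourse as twice the probability of the resampling branch, and NA/value for free from the $0$--$1$ rule. (Like the paper's own proof, you read the retention step as keeping $X=\{i\}$ with probability $y_i/x_i$ when $x_i>y_i$ --- the standard Keyfitz correction, which is what the marginal computation requires, even though the pseudocode's ``else'' is stated deterministically.)
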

    \begin{proof}
        Property \ref{prop:k-uniform}, i.e., that \Cref{alg:keyfitz-rrrr} outputs sets of cardinality $k=1$ is trivial. Marginals (Property \ref{prop:marginals}) are also easy to check by induction on $t$ (the base case being trivial): the input $X$ satisfies them by induction; for $Y$: if $x_i > y_i$, then $\Pr[i\in Y]=\frac{y_i}{x_i}\cdot x_i = y_i$. Else, $x_i\leq y_i$, and 
        $$\Pr[i\in Y] = x_i + \left(\sum_{j: x_j>y_j} \frac{x_j-y_j}{x_j}\cdot x_j\right)\cdot \frac{y_i-x_i}{\sum_{j: y_j>x_j}(y_j-x_j)} = x_i + (y_i-x_i) = y_i,$$
        where we used that $\sum_{j: x_j>y_j} (x_j-y_j) = \sum_{j: x_j<y_j} (y_j-x_j)$.
        Finally, for the recourse (Property \ref{prop:distance}), we have (noting that the dummy values do not increase the $\ell_1$ distance between $\vec{x}$ and $\vec{y}$):
        \begin{align*}
        \E[|X\oplus Y|] = 2\cdot \Pr[X\neq Y] & = 2\cdot \sum_{i: x_i>y_i} \frac{x_i-y_i}{x_i}\cdot x_i =  2\cdot \sum_{i: x_i>y_i} (x_i-y_i) = \norm{\vec{x}-\vec{y}}_1. \qedhere 
        \end{align*}
    \end{proof}

\paragraph{An alternative algorithm for $k=1$.}

This algorithm is simple to state intuitively: we throw a dart at the unit interval~$[0,1]$. At each time $t$, we maintain families of disjoint intervals corresponding to the elements of $E$, where ${\cal F}_i$ denotes the intervals of element $i\in E$, which have total measure $x^t_i$. Whenever $x_i$ decreases ($x^t_i\leq x^{t-1}_i$), we remove (parts of) intervals from $\calF_i$ of total measure $(x^{t-1}_i-x^t_i)$, and then symmetrically add some if $x_i$ increases.
The sampled element (if any) is precisely that $i$ for which the dart hit an interval in $\calF_i$. In our pseudocode (\Cref{alg:rrrr}) and subsequent discussion, we use the shorthand $\bigcup \calF \triangleq \bigcup_{I\in \calF} I$ for $\calF$ a family of intervals.

\begin{algorithm}[H]
	\caption{Recourse-Respecting Randomized Rounding}\label{alg:rrrr}
 \begin{algorithmic}[1]
    \Statex \textbf{Init:} Set $S\gets \emptyset$. Sample a $U\sim \Uni[0,1]$. Set $\calF_i \gets \emptyset$ \textbf{for all} $i\in E$.
    \For{time $t$} 
    \For{\textbf{each} $i\in E$}
    \If{$x^{t}_i\leq x^{t-1}_i$} \Comment{$\vec{x}^0=\vec{0}$ by convention}
    \State Remove (parts of) intervals of measure $(x^{t-1}_i - x^t_i)$ from $\calF_i$.
    \EndIf
    \EndFor 
    \For{\textbf{each} $i\in E$}
    \If{$x^{t}_i\geq x^{t-1}_i$} \Comment{$\vec{x}^0=\vec{0}$ by convention}
    \State Add intervals in $[0,1]\setminus \bigcup_i \calF_i$ of measure $(x^{t}_i - x^{t-1}_i)$ to $\calF_i$.
    \EndIf
    \EndFor 
    \State $S_t \gets \{i \mid \bigcup \calF_i \cap U \neq \emptyset\}.$
    \EndFor
\end{algorithmic}	
\end{algorithm}	

Let $\calF_i^t$ be $\calF$ at time $t$ (i.e., just before vector $\vec{x}^{t+1}$ is revealed). 
The following facts are easily proven by induction on $t$.
\begin{fact}\label{fact:disjoint}
    For all times $t$, the families $\calF_i^t$ are disjoint.
\end{fact}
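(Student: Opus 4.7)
The plan is a straightforward induction on $t$. The base case $t=0$ is immediate since each $\calF_i^0$ is initialized to $\emptyset$. For the inductive step, I would assume that $\calF_1^{t-1},\ldots,\calF_n^{t-1}$ are pairwise disjoint, and track how the two inner loops at time $t$ modify these families.

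The first inner loop only \emph{removes} measure from some $\calF_i$'s; it never adds anything to any family. Removals clearly preserve disjointness, so after the first loop the families remain pairwise disjoint. The key observation is then about the second loop: when element $i$ is processed, the algorithm adds intervals to $\calF_i$ drawn from $[0,1]\setminus\bigcup_j \calF_j$, i.e., from the complement of the current union at that moment in the loop. Any such newly added interval is by construction disjoint from every other $\calF_j$ (with $j\neq i$), and also disjoint from the pre-existing portion of $\calF_i$. Therefore each addition preserves pairwise disjointness of the families, and an easy inner induction over the elements processed in the second loop completes the inductive step.

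The only potential obstacle, which I would address briefly, is to verify that the set $[0,1]\setminus \bigcup_j\calF_j$ from which intervals are drawn actually contains enough measure, i.e., at least $(x^t_i - x^{t-1}_i)$, whenever element $i$ is processed. Using the inductive hypothesis that $|\bigcup \calF_j^{t-1}|=x^{t-1}_j$ for each $j$ (a companion invariant, also proved by induction), one checks that after the removal loop $\sum_j |\bigcup \calF_j|=\sum_j \min(x^t_j,x^{t-1}_j)$, and after processing each $j<i$ in the addition loop this sum grows by $(x^t_j-x^{t-1}_j)^+$. Hence the total measure occupied at the moment $i$ is processed, plus $(x^t_i - x^{t-1}_i)$, is at most $\sum_j x^t_j\leq \|\vec x^t\|_1\leq 1$, so sufficient room is available in $[0,1]\setminus\bigcup_j\calF_j$. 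I do not expect any serious difficulty here; the proof is essentially a careful loop-invariant argument tracking the current union of the families.
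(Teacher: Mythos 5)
Your proposal is correct and matches the paper's approach: the paper simply asserts that this fact is ``easily proven by induction on $t$,'' and your argument carries out exactly that induction, with removals trivially preserving disjointness and additions drawn from $[0,1]\setminus\bigcup_i\calF_i$ being disjoint from all existing families by construction. Your extra check that sufficient unoccupied measure is available follows from $\sum_i x^t_i\leq 1$ in the $k=1$ setting, so there is no gap.
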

\begin{fact}\label{fact:measure}
    For all $i\in E$ and times $t$ we have $|\calF^t_i \oplus \calF^{t-1}_i| = |x^t_i - x^{t-1}_i|$ and $|\bigcup  \calF_i|=x_i^t$. 
\end{fact}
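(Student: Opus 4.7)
The plan is to prove both equalities jointly by induction on $t$. For the base case $t=1$, the algorithm's initialization gives $\calF^0_i = \emptyset$ and by convention $\vec{x}^0 = \vec{0}$, so at $t=0$ the second equality reads $0=0$ trivially and the inductive step will establish the $t=1$ instance of the first.

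For the inductive step, fix $i \in E$ and observe that exactly one of the two inner loops modifies $\calF_i$: the removal loop when $x^t_i < x^{t-1}_i$, and the addition loop when $x^t_i > x^{t-1}_i$ (when $x^t_i = x^{t-1}_i$, both branches execute with zero-measure changes). In the first case $\calF^t_i \subseteq \calF^{t-1}_i$ with a subset of measure $(x^{t-1}_i - x^t_i)$ removed; in the second case $\calF^{t-1}_i \subseteq \calF^t_i$ with a subset of measure $(x^t_i - x^{t-1}_i)$ inserted. Combined with the inductive hypothesis $|\bigcup \calF^{t-1}_i| = x^{t-1}_i$, both desired equalities follow immediately: the symmetric difference has measure equal to the change, and the union's measure becomes $x^t_i$.

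The main obstacle is verifying that the addition loop is actually well-defined, i.e., that $[0,1] \setminus \bigcup_j \calF_j$ contains enough free measure after the removal loop completes to accommodate all insertions. Using \Cref{fact:disjoint} together with the inductive hypothesis (applied per coordinate), immediately after the removal loop the total occupied measure is $\sum_j \min(x^t_j, x^{t-1}_j)$, while the total measure the addition loop seeks to insert is $\sum_j (x^t_j - x^{t-1}_j)_+$. Since $\min(a,b) + (a-b)_+ = a$ for all $a,b \geq 0$, these two quantities sum to $\sum_j x^t_j \leq 1$ (the $k=1$ cardinality constraint on the input sequence), so the unoccupied region of $[0,1]$ has sufficient free measure for every insertion. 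Because newly added intervals are drawn exclusively from $[0,1] \setminus \bigcup_j \calF_j$, \Cref{fact:disjoint} is itself preserved at time $t$, closing the joint induction on both facts simultaneously.
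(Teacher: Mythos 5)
Your proof is correct and matches the approach the paper indicates: the paper only remarks that Facts C.1 and C.2 are "easily proven by induction on $t$," and your argument supplies precisely the routine joint induction that is intended. The one place you go beyond a bare statement is the check that the addition loop has enough free measure in $[0,1]\setminus\bigcup_j\calF_j$ — verifying, via \Cref{fact:disjoint} and the identity $\min(a,b)+(a-b)_+=a$ together with $\sum_j x^t_j\le 1$, that the algorithm is well-defined — which is a worthwhile detail the paper elides. The key structural observation (for each $i$, at most one branch has nonzero effect, so $\bigcup\calF^t_i$ and $\bigcup\calF^{t-1}_i$ are nested, making the symmetric-difference measure equal to the change in measure) is exactly the right one and closes both equalities simultaneously.
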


\begin{lem}\label{thm:rrrr}
    \Cref{alg:rrrr} is a cardinality-constrained recourse-respecting randomized rounding algorithm for $k=1$.
\end{lem}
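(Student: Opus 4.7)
The plan is to verify Properties \ref{prop:k-uniform}--\ref{prop:marginals} directly from the maintained interval families, since Properties \ref{prop:na} and \ref{prop:approx-submod-dom} follow automatically from \ref{prop:k-uniform} and \ref{prop:marginals} via the 0--1 rule (\Cref{lem:na}, \ref{lem:0-1-lemma}), submodular dominance (\Cref{lem:NA->submod}), and monotonicity of $f_t$, as already noted in the paragraph preceding \Cref{alg:rrrr}. The whole argument thus reduces to analyzing the random output $S_t = \{i \mid U \in \bigcup \calF_i^t\}$, where $U \sim \Uni[0,1]$ is the single dart sampled at initialization.

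For Property \ref{prop:k-uniform}, I would invoke pairwise disjointness of the families $\calF_i^t$ (\Cref{fact:disjoint}): the dart $U$ lies in at most one of the point-sets $\bigcup \calF_i^t$, so $|S_t| \leq 1$. For Property \ref{prop:marginals}, uniformity of $U$ on $[0,1]$ gives $\Pr[i \in S_t] = |\bigcup \calF_i^t| = x_i^t$ by \Cref{fact:measure}; note that for $k=1$ both the upper and lower bounds required by \ref{prop:marginals} collapse to $x_i^t$, which is exactly the desired marginal.

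For the recourse bound \ref{prop:distance}, the plan is to expand by linearity of expectation and uniformity of $U$:
\[
\E\bigl[|S_t \oplus S_{t-1}|\bigr] \;=\; \sum_{i \in E} \Pr[i \in S_t \oplus S_{t-1}] \;=\; \sum_{i \in E} \bigl|\textstyle\bigcup \calF_i^t \;\oplus\; \bigcup \calF_i^{t-1}\bigr|,
\]
where $\oplus$ on point-sets denotes symmetric difference and $|\cdot|$ denotes Lebesgue measure. By \Cref{fact:measure} each summand equals $|x_i^t - x_i^{t-1}|$, yielding exactly $\|\vec{x}^t - \vec{x}^{t-1}\|_1$.

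The main potential obstacle is confirming that Facts \ref{fact:disjoint} and \ref{fact:measure} indeed hold, i.e., that \Cref{alg:rrrr} is well-defined: after the removal loop there must be enough unoccupied measure in $[0,1]\setminus \bigcup_j \calF_j$ to accommodate the subsequent insertions. I would verify this by induction on $t$: immediately after the removal loop, the total occupied measure is $\sum_j \min(x_j^t, x_j^{t-1})$, leaving free room at least $\sum_{j:\,x_j^t \geq x_j^{t-1}} (x_j^t - x_j^{t-1})$ whenever $\sum_j x_j^t \leq 1$ (which can be assumed without loss of generality via a dummy coordinate). Disjointness is preserved because every newly added piece is taken from the currently free part of $[0,1]$, and the measure invariants $|\bigcup \calF_i^t| = x_i^t$ and $|\bigcup \calF_i^t \oplus \bigcup \calF_i^{t-1}| = |x_i^t - x_i^{t-1}|$ follow immediately from how removals and insertions are scheduled per coordinate.
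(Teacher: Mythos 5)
Your proof is correct and follows essentially the same route as the paper: disjointness of the $\calF_i^t$ gives the cardinality bound, the measure invariants give the marginals and the recourse bound, and NA plus submodular dominance handle the remaining properties. Your extra induction verifying that the free measure suffices for the insertion loop is a valid (and slightly more careful) justification of the facts the paper states as "easily proven by induction"; note that the hypothesis $\sum_i x_i^t \leq k = 1$ already makes the dummy coordinate unnecessary.
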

\begin{proof}
    Fix $t$. By \Cref{fact:disjoint},  $U$ belongs to at most one $\calF^t_i$, and so Property \ref{prop:k-uniform} holds ($|S_t|\leq 1$).
    Next, Property \ref{prop:distance} follows by \Cref{fact:measure}: 
    $$\Pr[i\in S_t\oplus S_{t-1}] = \Pr\left[U \in \bigcup \left(\calF^t_i \oplus \calF^{t-1}_i\right)\right] = |x^t_i-x^{t-1}_i|.$$
    Moreover, $S_t\subseteq E_t$, since by \Cref{fact:measure}, for all $i\not\in E_t$ we have $\Pr[i\in S_t] = \Pr[U \in \bigcup \calF_i] = 0$. Next, Property \ref{prop:marginals} follows by \Cref{fact:measure} and direct computation: 
    \begin{align*}\Pr\left[i\in S_t] = \Pr[U \in \bigcup \calF^t_i\right] & = x^t_i \geq 1-\exp(-x^t_i). \qedhere
    \end{align*}
\end{proof}

\section{Deferred Proofs of \texorpdfstring{\Cref{sec:curvature}}{}: Curvature Decomposition}\label{appendix:curvature}

In this section we provide a self-contained of the submodular decomposition of \Cref{lem:curvature-decomposition}, for the sake of completeness. We emphasize that this lemma is not new \cite{iyer2013curvature,iyer2013fast}, and we claim no novelty for its proof.

\curvaturedecomposition*

\begin{proof}
    Fix an ordering of set $S=\{j_1,\dots,j_m\}$ and let $S_t=\{j_1,\dots,j_t\}$ be its first $t$ elements according to this ordering.
    By submodularity, and the definition of curvature:
    \begin{align*}
        f(i\mid S_{i-1}) & \geq f(i\mid E\setminus\{i\}) \geq (1-c)\cdot f(\{i\}).
    \end{align*}
    By telescoping, this implies that $g_f=\frac{1}{c}\left(f-(1-c)\cdot \ell_f\right)$ is non-negative, since
    \begin{align*}
    f(S) = \sum_{i\in S} f(i\mid S_{i-1}) \geq \sum_{i\in S} (1-c)\cdot f(\{i\}) = (1-c)\cdot \ell(S).
    \end{align*}
    This similarly implies that 
    $g_f$ is monotone:
    \begin{align*}g_f(i\mid S) = \frac{1}{c}\left(f(i\mid S) - (1-c)\cdot \ell_f(\{i\})\right) \geq 0.
    \end{align*}
    To prove that $g_f$ is also submodular for submodular $f$, we note that for all $i$ and $S\subseteq T\subseteq E\setminus\{i\}$:
    \begin{align*}
        g_f(i\mid S) - g_f(i\mid T) & = \frac{f(i\mid S) - f(i\mid T)}{c}  \geq 0.
    \end{align*}
    Finally, to prove that $f(S)\geq g_f(S)=\frac{1}{c}\left(f(s)-(1-c)\cdot \ell_f(S)\right)$, we use the standard consequence of submodularity, $f(S)\leq \ell_f(S)$, implying
    \begin{align*}
        f(S) - g_f(S) & = \frac{1}{c}\left((1-c)\cdot(\ell_f(S) - f(S)\right) \geq 0. \qedhere
    \end{align*}
\end{proof}

\bibliographystyle{alpha}
\bibliography{abb,bib,ultimate}

\end{document}